\documentclass[a4paper,10pt,reqno]{article}
\usepackage[colorlinks=true]{hyperref}
\usepackage{amssymb}
\usepackage{mathtools}
\usepackage{comment}
\usepackage[makeroom]{cancel}
\usepackage{tikz-cd}
\usepackage[linguistics]{forest}
\usepackage{tikz-qtree}
\usepackage{caption}
\usepackage[all]{hypcap}
\forestset{sn edges/.style={for tree={parent anchor= north, child anchor=south}}}
\forestset{
fairly nice empty nodes/.style={
delay={where content={}
{shape=coordinate, }{}},
for tree={s sep=4mm}
}
}
\forestset{
nice empty nodes/.style={
    for tree={calign=fixed edge angles},
    delay={where content={}{shape=coordinate, for children={anchor=south} }{}}
},
}

\newcommand{\p}{\mathrm p^{ \geq 2}}

\def \ptr {{\mathrm{p}}^1_{|}}

\def \pv {{\mathrm{p}}^1_{\vee}}

\usepackage{adjustbox}

\usepackage[utf8]{inputenc}

\usepackage[textsize=tiny]{todonotes}
\usepackage{latexsym,amsfonts,amsthm,amsmath,amscd,amssymb,color}
\listfiles
\usepackage[all]{xy}

\setlength{\textheight}{23.1cm} 
\setlength{\textwidth}{18.4cm}

\theoremstyle{plain}
\newtheorem{lemma}{Lemma}[section]
\newtheorem{theorem}[lemma]{Theorem}
\newtheorem{proposition}[lemma]{Proposition}

\newtheorem{corollary}[lemma]{Corollary}
\newtheorem{convention}[lemma]{Convention}
\newtheorem{question}[lemma]{Question}

\theoremstyle{definition}
\newtheorem{definition}[lemma]{Definition}
\newtheorem{deff}[lemma]{{Definition}}
\newtheorem{example}[lemma]{Example}
\newtheorem{remark}[lemma]{Remark}
\newtheorem*{definition*}{Definition}
\theoremstyle{remark}

\newcommand{\V}{\mathcal{V}}
\newcommand{\lb}{\left[ \cdot\,,\cdot\right] }
\usepackage[left=2cm, right=3cm, top=3cm, bottom=1.7cm]{geometry}
\newcommand{\dd}{\mathrm{d}}

\newcommand{\X}{\mathfrak {X} }

\def\cO{\mathcal O}
\def \cI {\mathcal I}

\def \cA {\mathcal A}
\def \cV {\mathcal V}

\def \FM {\mathfrak M}

\def \r  {\mathfrak r}
\def \cTr{\mathcal T ree}

\title{
On construction of differential $\mathbb Z$-graded varieties}
\usepackage{authblk}

\author{Aliaksandr Hancharuk\thanks{Department of Mathematics, Jilin University, Changchun, China.\\ Email: hancharuk@jlu.edu.cn}\;\, and\;Ruben Louis\thanks{Department of Mathematics, University of Illinois Urbana-Champaign, Urbana, IL, USA.\\Email: rlouis@illinois.edu}}

\begin{document}
\tikzset{mystyle/.style={xshift = - 2.4ex, yshift = - 1.5ex, align=left}}
\tikzset{mystyle2/.style={xshift = + 2.4ex, yshift = - 1.5ex, align=right}}
\maketitle

\begin{abstract}


 Given a commutative unital algebra $\cO$, a proper ideal $\cI$ in $\cO$, and a positively graded differential variety over $\cO/\cI$, we provide a $\mathbb Z$-graded extension, whose negative part is an arborescent Koszul-Tate resolution of $\cO/\cI$. This extension is obtained through an algorithm exploiting the explicit homotopy retract data of the arborescent Koszul-Tate resolution, so that the number of homological computations in the construction is significantly reduced. For a positively graded differential variety over $\cO$ that preserves the ideal $\cI$, the extension admits a manifest description in terms of decorated trees and computed data.

 As a by-product, to every Lie--Rinehart algebra over the coordinate ring of an affine variety \( W \subseteq M = \mathbb{C}^d \), one associates an explicit differential \(\mathbb{Z}\)-graded variety over \(M\) whose negative component is the arborescent Koszul--Tate resolution of the coordinate ring $\mathbb C[x_1, \ldots, x_d]/\cI_W$ of \(W\), and whose positive component is the universal dg-variety of the given Lie--Rinehart algebra. Concrete examples are given.

\end{abstract}
\tableofcontents
\section*{Introduction}
The study of singular spaces in differential and algebraic geometry has been approached using various distinct methods. One notable approach involves examining singularities via Lie algebras of vector fields. The investigation of the singularities of singular spaces through their associated Lie algebras of vector fields is a topic with a deep and rich history. This line of inquiry began with the  work of Pursell and Shanks \cite{Pursell-Shanks}, who formulated the problem of characterizing a manifold by its Lie algebra of vector fields. Subsequently, similar problems were studied by many authors in different contexts, including H. Omori \cite{Omori1980} and J. Grabowski \cite{Grabowski1978}. T. Siebert \cite{Siebert1996} further developed the work of J. Grabowski \cite{Grabowski1978} (see also \cite{Grabowski-Kotov-Poncin2010,Grabowski-Kotov-Poncin2011,Grabowski-Kotov-Poncin2013} for the superalgebra case), employing a purely algebraic approach in the context of affine varieties. Siebert proved, as a byproduct, the result that an affine variety $W$ with a ring of coordinates $\mathcal O_W$ is smooth if and only if the Lie algebra $\mathrm{Der}(\mathcal{O}_W)$ is simple. The direction stating that the smoothness of $W$ implies the simplicity of $\mathrm{Der}(\mathcal{O}_W)$ was previously established by D.A. Jordan \cite{D.A.Jordan}. These foundational ideas have since inspired further research, emphasizing the utility of Lie-theoretic approaches in understanding both geometric and algebraic structures on a singular space.

The modern approach to geometric problems, as well as to mathematical physics ones, is unified by the language of $Q$-structures. The general motto is to replace a singular “hard” object with a collection of “gentle” ones, which are easier to work with. Typically, such a collection comes together with an algebraic structure equipped with a squared-zero operator $Q$. Hence, the original questions about the singular object might be encoded in the homology of the underlying differential $Q$. A typical example in differential geometry arises when one considers the intersection of a section of a vector bundle with the base manifold. More precisely, let $E\to M$ be a vector bundle over a smooth manifold $M$ and $s \colon M\to E$, and let $\Sigma=M\cap \mathrm{Graph}(s)=\{s=0\}$. The latter intersection is singular in general (if $0$ is not a regular value for $s$).

\begin{center}

\begin{tikzpicture}[scale=0.8] 

\draw[thick] (-3,0) -- (3,0);
\node[below] at (3,0) {$M$}; 

\foreach \x in {-2,-1,0,1,2} {
    \draw[thick,dashed] (\x,-1.5) -- (\x,1.5);
}

\draw[thick,red] plot [smooth] coordinates {
    (-2,1) (-1,-0.5) (0,0.8) (1,-0.3) (2,1)
};
\node[right] at (2,1) {$\mathrm{Graph}(s)$};

\foreach \x in {-1,1} {
    \filldraw[black] (\x,0) circle (1.5pt);
}

\node[above] at (0,1.6) {$E$};

\end{tikzpicture}
\end{center}
There exists a rather unusual Lie $\infty$-algebroid, or equivalently, a negatively graded differential-graded manifold over $M$, which can be defined by $(\Gamma(\wedge^\bullet E[-1]^*), Q=\iota_s)$ where $\iota_s$ is the contraction with $s$. Here, the dual vector bundle $E[-1]^*\to M$ is concentrated in degree $-1$.   It is clear that the $0$-th homology is  $H^0((\Gamma(\wedge^\bullet E[-1]^*), Q))\simeq~C^{\infty}(M)/\mathcal{I}_s$ with $\mathcal{I}_s$ being the ideal generated by functions  of the form  $\langle \xi, s\rangle$ for $\xi\in \Gamma(E^*)$. {If $s$ is transverse to the zero section, then $\Sigma$ is smooth and the cohomology of $(\Gamma(\wedge^\bullet E[-1]^*), Q)$ is concentrated in degree $0$ and $H^0$ essentially recovers $C^\infty(\Sigma)$}. Therefore, $(\Gamma(\wedge^\bullet E[-1]^*), Q)$ serves a replacement for the singular space $\Sigma$. In a similar fashion, it is possible to adjust the construction for the common zero locus of several sections $s_1, \ldots, s_k\in \Gamma(E)$. For example, for $M=\mathbb C^d$ and $E=\mathbb C^d\times \mathbb C^n$, the set $\Sigma\subset \mathbb C^d$ is the zero locus of some functions $\varphi_1,\ldots, \varphi_n\in \mathbb C[x_1, \ldots, x_d]$. Here, we have \[Q=\sum_{i=1}^n\varphi_i\frac{\partial}{\partial \eta_i}\;\;\text{and} \;\; H^0((\Gamma(\wedge^\bullet E[-1]^*), Q))\simeq C^{\infty}(M)/\langle \varphi_1,\ldots, \varphi_n \rangle.\] where  $\eta_1,\ldots, \eta_n$ corresponding to the dual basis of $E[-1]$. One can always choose a negatively graded dg-variety so that there is no homology in degree less than or equal to –1, namely a Koszul-Tate resolution of ${C^\infty(M)}/\mathcal{I}_s$. More generally, to any  unital commutative algebra $\cO$ and a proper ideal $\cI \subset \cO$ one can associate a Koszul-Tate resolution of $\cO/\cI$, which is a graded symmetric dg-algebra $(S_{\cO}(\oplus_{i\leq -1} \cV_i), \delta)$ acyclic in negative degrees \cite{Tate}. Such a negative $Q$-variety is a cornerstone of BV and BFV formalisms in mathematical physics \cite{HT,stasheff_poisson, Felder-Kazhdan}.

From the infinitesimal point of view, a (regular) foliation \(F \subset TM\) may be encoded as a positively graded \(Q\)-manifold -- the leaf-wise de Rham complex \((\Gamma(\wedge^\bullet F^{*}), d^{{DR}})\). The degree-\(0\) cohomology of the latter identifies with the algebra of functions on the (typically singular) leaf space \(M/F\).  Similar constructions exist for foliations with singularities, i.e.,  \emph{singular foliations} \cite{Androulidakis-Iakovos-Georges, LLL1} and for vector fields on singular spaces: one replaces \(F\) by an appropriate Lie \(\infty\)-algebroid. Recently,  Laurent-Gengoux, Lavau, and Strobl \cite{LLS} introduced an important framework by associating a class of positively graded $Q$-varieties (equivalently \underline{negatively} graded Lie $\infty$-algebroids) to any Lie subalgebra $\mathfrak A$ of the sheaf of vector fields $\mathfrak X$ on a manifold $M$ that admits a geometric resolution, i.e., those for which there exists an anchored complex of vector bundles 

$$ (E,\dd,
\rho) \colon  \xymatrix{  \ar[r] & E_{-i-1} \ar[r]^{{\dd^{(i+1)}}} \ar[d] & 
     E_{ -i} \ar[r]^{{\dd^{(i)}}}
     \ar[d] & E_{-i+1} \ar[r] \ar[d] & \ar@{..}[r] & \ar[r]^{{\dd^{(2)}}}& E_{-1} \ar[r]^{\rho} \ar[d]& TM \ar[d] \\ 
      \ar@{=}[r] & \ar@{=}[r] M  &  \ar@{=}[r] M 
      &  \ar@{=}[r] M  &\ar@{..}[r] & \ar@{=}[r]   &  \ar@{=}[r] M  & M}
    $$
    such that the following complex of sheaves      \begin{equation}
           \label{}{\longrightarrow} \Gamma({E_{ -i-1}})
     \stackrel{\dd^{(i+1)}}{\longrightarrow} \Gamma({E_{-i}})
     \stackrel{\dd^{(i)}}{\longrightarrow}{\Gamma(E_{-i+1})}{\longrightarrow}\cdots  {\longrightarrow} \Gamma(E_{-1})  
     \stackrel{\rho}{\longrightarrow} \mathfrak A\longrightarrow 0
      \end{equation}  is exact.  It is quite natural to work with this class of Lie sublagebras, as it contains the class of Lie subalgebras that are (locally) real analytic \cite[Proposition 2.3]{LLS}. It is also a natural object in the holomorphic setting, since $\mathfrak A$ is then a coherent sheaf and such geometric resolutions always exist locally.  This construction admits a  generalization \cite{CLRL,louis2023universalhigherliealgebras} in a purely algebraic setting, where $\mathfrak A$ is an arbitrary Lie-Reinhart algebra. With this generalization, we associate  a class of positively graded $Q$-varieties to any singular space where the concept of vector fields is well-defined, e.g., an affine variety. 
 
 It is natural to ask 

\begin{question}\label{question:main}
    Is there a $\mathbb Z$-graded $Q$-variety $(\cA, Q)$ encoding both the Koszul–Tate resolution of $\cO/\cI$ (a singular space), and a positively graded $Q$-variety $(\mathcal{A}^+, Q^+)$ associated with vector fields on the singular space?
\end{question}
A positive answer to this question was given in \cite{KOTOV2023104908} in the smooth geometric setting, i.e., for an ideal $\cI$ of a ring of smooth functions $\cO = C^{\infty}(M)$. However, the result \cite{KOTOV2023104908} is only an existence result. It is typically expected that a generic Koszul-Tate resolution $(S_{\cO}(\oplus_{i\leq -1}\cV_i), \delta)$ has infinitely many non-zero $\cV_i$. Thus, finding examples is problematic. 
Therefore, it is of interest to study the following question: 
\begin{question}
    \label{question:main2}
    To what extent the $\mathbb Z$-graded $Q$-variety $(\cA, Q)$ of Question \ref{question:main} is computable? Is there an algorithm that terminates in a finite number of steps?
\end{question}
Questions similar to Question \ref{question:main2} were extensively studied in the realm of projective resolutions of modules. More precisely, in the works of \cite{peeva2010graded, Scarf} and references therein, an efficient machinery was developed to study monomial resolutions of polynomial rings. A manifest description reveals underlying structures, such as a semi-simplicial complex that governs such resolutions. Moreover, the construction of such resolutions and related computations can now be performed by software \cite{Macaulay}.
In what concerns Koszul-Tate resolutions, a recent work \cite{hancharuk:tel-04692988, hancharuk2024} suggested a construction built on top of projective resolutions of $\cO/\cI$. This construction, called an \emph{arborescent} Koszul-Tate resolution, significantly reduces the number of computations, which becomes finite if the projective resolution is of finite length and has finite ranks at each degree. A manifest description provides insight into the generic structure of any Koszul-Tate resolution \cite[Proposition 5.5]{hancharuk2024}. Namely, any Koszul-Tate resolution $(S_{\cO}(\oplus_{i\leq -1}\cV_i), \delta)$ of $\cO/\cI$, where $\cO$ is a polynomial ring and $\cI$ is a monomial ideal, has $\cV_i \neq 0$ for all $i$ provided that $\cI$ is not generated by a regular sequence.

We address Question~\ref{question:main2} in the following setting: as a negative dg-variety we use an arborescent Koszul-Tate resolution of $\cO/\cI$, while the positive graded $Q$-variety is either a generic one over $\cO/\cI$, or it is over $\cO$ and preserves $\cI$. In \S \ref{sec:1} we recall the notions of $Q$-varieties, and recall (arborescent) Koszul-Tate resolutions. In \S \ref{sec:2} we state the main results of the paper. In \S \ref{sec:3}, we give manifest examples of our construction. In \S \ref{sec:appendix} we substitute the arborescent Koszul-Tate resolution by the Koszul complex and draw a comparison from our construction to it. In \S \ref{sec:h.mult} we give examples of new structures (higher-order multiplications) that our construction induces. In \S \ref{Lifting.derivations} we address the issue of lifting derivations from quotient algebras.\\

\noindent
\textbf{Acknowledgements}. We thank C.~Laurent-Gengoux, A.~Kotov, L.~Ryvkin and V.~Salnikov for helpful discussions and valuable comments on the manuscript. We are also grateful to E.~Lerman for insightful discussions. Finally, R.~Louis acknowledges full financial support through the J.~L.~Doob Research Assistant Professor position at the University of Illinois Urbana--Champaign (UIUC).\\

\noindent
\textbf{Statements and Declarations}. The authors have no conflict of interest that is relevant to the content of this article.
\\

\noindent
\textbf{Data Availability}. Data sharing is not applicable to this article as no datasets were generated or analyzed during the current study.

\section{Definitions}\label{sec:1}

Throughout this article, $ \mathcal O$ is a commutative unital algebra over a field $ \mathbb K$ of characteristic zero, and ${\mathrm{Der}}(\mathcal O) $ stands for its Lie algebra of $\mathbb{K}$-linear derivations. We denote by $ X[f]$ the derivation $X \in  {\mathrm{Der}}(\mathcal O) $ applied to $f\in \mathcal O$. Also, an $\mathcal{O}$-module $\cV$ is seen as a $\mathbb K$-vector space in the
natural way, for $\lambda\in \mathbb K$, {$\lambda \cdot v := (\lambda \cdot 1_{\mathcal{O}})\, v$, where $1_{\mathcal{O}} = 1$ is the multiplicative identity of $\mathcal{O}$.} 
In the sequel, we will drop the notation ``$\cdot$''.

Geometrically, $\mathcal{O}$ can be understood as the algebra of smooth functions on a manifold $M$, 
or on an open subset $U \subset M$ of a complex manifold, or the coordinate ring of an affine variety $W$.\\

\noindent
\textbf{Some notations}. Recall that a \emph{$\mathbb Z$-graded $\cO$-module} is an $\mathcal{O}$-module of the form  $\cV=\oplus_{i \in \mathbb Z}\cV_i$ for   $\mathcal{O}$-modules $\cV_i, i\in \mathbb Z$.  Elements of $\cV_i$ are called \emph{homogeneous} of \emph{degree} $i$. A \emph{graded algebra} $\cA=\oplus_{i\in \mathbb Z}\cA_i$ over $\cO$ is a graded $\cO$-module equipped with a  multiplication that respects the grading, i.e., $\cA_i \cdot \cA_j \subset \cA_{i+j}$. In this paper, we deal with the graded symmetric algebra of a graded module. To fix the sign conventions and notations, let us recall the construction. 

For a graded $\cO$-module $\cV$, we denote by $\lvert x\rvert \in \mathbb{Z}$ the degree of a homogeneous element $x\in \cV$. 

\begin{itemize}
    \item We denote by $S_{}(\cV)  $ and call \emph{graded symmetric algebra of $\cV $ over $\mathcal O$} the quotient of the tensor algebra over $\mathcal O $ 
$$ T_{}(\cV):= \oplus_{k =1}^\infty
\underbrace{ \cV \otimes_\mathcal O \cdots \otimes_\mathcal O \cV}_{\hbox{\small{$k$ times}}}
$$
 by the ideal generated by $x\otimes_{\mathcal O} y-(-1)^{\lvert x\rvert\lvert y\rvert}y\otimes_{\mathcal O} x$, with $x,y$ arbitrary homogeneous elements of $\cV $. We denote by $\odot_{\cO}$ the product in $S_{}(\cV)$. In order to simplify notations we omit the subscript $\cO$ of tensor products $\otimes_{\cO}, \odot_{\cO}$, so that these products $\otimes, \odot$ are understood to be over $\cO$.
\item We denote by $ S_R( \cV ) $ the \emph{graded symmetric algebra} of $\cV $ tensored with the quotient algebra $R = \cO/\cI$ for some proper ideal $\cI\subset \cO$, namely, $S_R( \cV ) = S( \cV )\otimes_\cO R$. 

\end{itemize} 

The algebra $S(\cV)$  comes equipped with different notions of \underline{degrees} that must not be confused. 
\begin{enumerate}
    \item  We define the \emph{degree} of $x= x_1 \odot \cdots  \odot x_n\in S^n(\cV)\coloneqq 
\underbrace{ \cV \odot \cdots \odot \cV}_{\hbox{\small{$n$ times}}}$  by 
$$  | x_1 \odot \cdots  \odot x_n | = |x_1|+ \cdots  + |x_n|$$
for any homogeneous elements $ x_1, \dots, x_n\in\cV$. With respect to this degree, $S(\cV) $ is a  $\mathbb Z$-graded commutative algebra over $\cO$. For any homogeneous elements $x_1 , \ldots , x_k \in \cV$ and $\sigma\in\mathfrak{S}_k$ a permutation of $\{1, \ldots, k\}$, the \emph{Koszul sign} $\epsilon(\sigma; x_1 , \ldots , x_k )$ is defined by:
$$  x_{\sigma(1)} \odot \cdots \odot x_{\sigma(k)}= \epsilon(\sigma; x_1 , \ldots , x_k ) \, x_1 \odot \cdots \odot x_k.$$   We write $\epsilon(\sigma )$ for $ \epsilon(\sigma; x_1 , \ldots , x_k )$.
\item The \emph{polynomial degree} of $ x_1 \odot \cdots  \odot x_n\in S^n(\cV)$ is defined to be $n$. Elements of polynomial degree $k$ and degree $d$ in $ S(\cV)$ are denoted by $S^k(\cV)_d$.
\item There is also a notion of “positive” and “negative” gradings in $S(\cV)$. For an element $a \in \cV_i$ we prescribe them as follows:
 $$|a|_- =
      \begin{cases}
          0, \quad \hbox{if} \; i\geq 0, \\
          -i, \quad \hbox{if} \; i<0.
      \end{cases}
      $$
     $$|a|_+ =
      \begin{cases}
          i, \quad \hbox{if } i \geq0, \\
          0, \quad \hbox{if } i < 0.
      \end{cases}
      $$
      So that $|a| = |a|_+ - |a|_-$. The degrees $|\,\cdot \,|_+\,\,,\,\, |\,\cdot \,|_-$ are extended to elements of $S(\cV)$ by the formula $|ab|_k = |a|_k + |b|_k$, where $a,b$ are monomials of $S(\cV)$, with $k \in \lbrace +, - \rbrace$.
      A homogenous element of negative degree $i$ is denoted as $w_{(i)}$. For instance, elements of  $\cV_{1}\odot \cV_{-4}$ are of \begin{enumerate}
      \item  degree $-3$;
      \item polynomial degree $2$;
          \item negative degree $4$;
          \item positive degree $1$.
          
      \end{enumerate}

\end{enumerate}
\begin{itemize}
    \item Every linear map $\Phi\colon S(\cV) \to S(\cV')$ between graded symmetric algebras admits a  decomposition w.r.t the degree, as well as the negative degree. We  denote by  $\Phi_k$ (respectively $\Phi_{(k)}$) the component of $\Phi$ of degree $k$ (negative degree $k$), which means that it sends an element $a_i$ to an element of degree $i+k$ (or $a_{(i)}$ to $(\Phi(a_{(i)}))_{(i+k)}$).

\item A \emph{derivation of degree} $k$ is a $\mathbb K$-linear map $Q\colon S(\cV) \to S(\cV)$  satisfying the graded Leibniz rule, i.e., 
      $$
      Q(a\odot b) = Q(a)\odot b + (-1)^{k|a|}a\odot Q(b)
      $$
      for homogeneous elements $a$ and $b$ of $S(\cV)$. Such a derivation admits a decomposition with respect to the negative degree $$Q = \sum_{i\in \mathbb  Z}Q_{(i)}$$ where $Q_{(i)}\colon S(\cV)\to S(\cV)$ is a derivation of $S(\cV)$ of negative degree $i$ and {of degree $k$}.

\end{itemize}

\subsection{$\mathbb Z$-graded $Q$-varieties} 

When working with $\mathbb Z$-graded algebras, one is quickly led to consider infinite sums of homogeneous elements. It therefore becomes essential to specify which such infinite sums are meaningful, that is, to determine a notion of convergence. For instance, in the smooth geometrical setting, for $\mathbb Z$-graded manifolds, one would like to perform change of coordinates, e.g., $x_i \mapsto x_i + \eta \mathcal P$, where $x=(x_1, \ldots, x_n)$ is a local coordinate on a manifold $M$, and $\eta, \mathcal P$ are graded coordinates  of degrees $2$ and $-2$ respectively. Then the Taylor series for $\sin (x_1 + \eta \mathcal P)$ does not make sense\footnote{We are grateful to Vladimir Salnikov for pointing out this example.} in $S(\cV)$. In our paper,  this issue arises when doing homological perturbation -- we obtain homogeneous components of an $\cO$-linear map $\Phi$. In particular, if there are infinitely many components $\Phi_{(k)}$ of negative degree $k$ and of fixed degree $i$, then the sum $\sum_{k\in \mathbb Z} \Phi_{(k)}(a)$, $a\in \cV$, does not make sense in $S(\cV)$. Therefore it is necessary to find an appropriate completion of $S(\cV)$ for which  our construction converges. This necessity naturally leads to the introduction of filtrations and their associated topologies. For the sake of clarity in our notation and conventions, we recall the following
\begin{deff}
\label{def:filtration}\cite{Eisenbud1995,Bourbaki2006}
Let $\mathcal{G} = \oplus_{j\in \mathbb Z} \mathcal{G}_j$ be a $\mathbb Z$-graded commutative unital algebra. The \emph{negative filtration} is the filtration $$\cdots \subset F^{i+1}\mathcal{G} \subset F^{i} \mathcal{G} \subset \dots \subset  F^1 \mathcal{G} \subset  F^0\mathcal{G}=\mathcal{G}$$
where for every $i\in \mathbb Z$, $F^i\mathcal{G}$ is an ideal of $\mathcal{G}$ generated by elements of degree less or equal to $-i$, that is 
$$
F^i\mathcal{G} = \mathcal{G}\cdot(\oplus_{j\leq -i}\mathcal{G}_{j})
$$

\begin{enumerate}
\item   The cosets $(g+F^i\mathcal{G})_{g\in\mathcal{G},\,i\in \mathbb Z}$ generate a topology $\tau_F$ on $\mathcal{G}$, called the \emph{negative filtered topology}.
    \item The \emph{completion} of $\mathcal{G}$ is the $\mathbb  Z$-graded unital commutative algebra $\hat{\mathcal{G}}$ obtained as  the completion\footnote{Cauchy sequences are defined w.r.t the neighborhoods of $0$ in $\mathcal{G}$ of the negative filtered topology: $(g_n)_{n\in \mathbb{N}}$ is a \emph{Cauchy} sequence of $\mathcal{G}$ if for all $i\geq 0$, there exists $n_0\in \mathbb N$ such that $ n, p\geq n_0\Longrightarrow g_n-g_p\in F^i\mathcal{G}$.} w.r.t the negative filtered topology $\tau_F$. Equivalently, in terms of the graded projective limit of $\mathcal{G}$ 
\begin{equation}\label{eq:proj_lim_R1}
 \hat{\mathcal{G}}=\varprojlim_{j} \Big(  \mathcal{G}/F^j \mathcal{G} \Big).
\end{equation}

\end{enumerate}
\end{deff} 

\begin{remark}\label{rmk:completion} In Definition \ref{def:filtration},
    \begin{enumerate}
    \item we have $\bigcap_{i\geq 0} F^i\mathcal{G}=\{0\}$. This means that the negative filtered topology is Hausdorff. 
    
    \item some examples of elements of $\hat{\mathcal{G}}$ are: for a family $(g_i\in \mathcal{G}_i)_{i\in \mathbb Z}$, the infinite sums \(\sum_{i\geq 1}g_ig_{-i}\); \(\sum_{i\geq 1}g_{-i}\) are well-defined in $\hat{\mathcal{G}}$ since $(\sum_{i=1}^ng_ig_{-i})_{n\in \mathbb N}$; $(\sum_{i=1}^ng_{-i})_{n\in \mathbb N}$ are  Cauchy sequences in $\mathcal{G}$ with respect to the filtered topology. However, an infinite sum of the form  $\sum_{i\geq 1}g_0$ is not well-defined in  $\hat{\mathcal{G}}$ for $g_0\neq 0$. 
        \item if $\mathcal{G}$ is positively graded (that is, $\mathcal{G}_i=\{0\}$ for $i\leq -1$) or if $\mathcal{G}$ has only finitely many nonzero negative or positive degrees, then $\hat{\mathcal{G}}=\mathcal{G}$ since the Cauchy sequences in $\mathcal{G}$ (w.r.t in the negative filtered topology) eventually are constant. However, if  $\mathcal{G}$ is negatively graded (that is, $\mathcal{G}_i=\{0\}$ for $i\geq 1 $) and not bounded below, then $\mathcal{G}\subsetneq \hat{\mathcal{G}}$. For example, take a  Laurent-type algebra  $\mathcal{G}=\mathbb K[x^{-1}]$ with $\mathrm{deg}(x^{-1})=-1$. 

        \item In a graded commutative algebra $\mathcal{G}$ endowed with the negative filtration, the exponential $\mathrm{exp}(g)=\sum_{n=0}^{\infty}\frac{g^n}{n!}$ is well-defined for every $g\in F^i\mathcal{G}$ for some $i\geq 1$. 
    \end{enumerate}
\end{remark}

If our graded commutative unital algebra is concentrated in negative degrees, we do not want the completion to alter the algebra, as happens in the Laurent-type example in Remark \ref{rmk:completion} (2). To remedy this, we shall consider completion by degree, i.e., we complete at each degree separately with respect to the negative filtration. More precisely, let $\mathcal{G} = \oplus_{j\in \mathbb Z} \mathcal{G}_j$ be a $\mathbb Z$-graded commutative unital algebra and $\hat{\mathcal{G}}$ its completion w.r.t the negative filtered topology. We shall denote by $\Bar{\mathcal{G}}=\oplus_{j\in \mathbb Z} \Bar{\mathcal{G}}_j$ the $\mathbb Z$-graded commutative unital subalgebra of $\hat{\mathcal{G}}$ obtained by a “degree by degree” completion of $\mathcal{G}$ w.r.t the negative filtered topology, that is, for every $j\in \mathbb Z$

\[\bar{\mathcal{G}}_j=\{\text{the equivalence classes of all the Cauchy sequences of $\mathcal{G}_j$}\}.\]

Now, \begin{equation}\label{eq:proj_lim_R}
 \bar{\mathcal{G}}= \oplus_{i\in \mathbb Z} \varprojlim_{j}\Big(\mathcal{G}_i/F^j \mathcal{G}_i\Big)
\end{equation}
here, $F^j \mathcal{G}_i$ stands for the vector space of elements of degree $i$ in $F^j \mathcal{G}$.

\begin{remark}Notice that 
\begin{enumerate}
    \item     if $\mathcal{G}$ is negatively/positively graded (that is, $\mathcal{G}_i=\{0\}$ for $i\geq 1\;/\;i\leq -1$) or if $\mathcal{G}$ has only finitely many nonzero negative or positive degrees, then $\Bar{\mathcal{G}}=\mathcal{G}$ since the Cauchy sequences of homogeneous degree in $\mathcal{G}$ (w.r.t in the negative filtered topology)  are eventually constant.

    \item the infinite sum, $\sum_{i\geq 1}g_{-i}$ is well-defined in $\hat{\mathcal{G}}$ but not in $\Bar{\mathcal{G}}$. See \cite{KotovSalnikov} for more details. 
\end{enumerate}
\end{remark}

\begin{convention}
    In the sequel, all completions of graded commutative algebras will be “degreewise completion” taken with respect to the negative filtered topology introduced in Definition~\ref{def:filtration}.
\end{convention}


The following proposition is important and will be used to ensure convergence  in our main results.
\begin{proposition}
    Let $\mathcal{G} = \oplus_{j\in \mathbb Z} \mathcal{G}_j$ be a $\mathbb Z$-graded commutative unital algebra, and $(F^i\mathcal{G})_{i\in \mathbb N}$  be the negative filtration. Every derivation $Z \colon \mathcal{G}\to \mathcal{G}$ of degree $k$  extends to a well-defined derivation $\bar Z\colon \bar{\mathcal{G}}\to\bar{\mathcal{G}}$ on the completion.
\end{proposition}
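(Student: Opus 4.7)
The plan is to show that $Z$ is continuous with respect to the negative filtered topology $\tau_F$ when we view $Z$ as a map between each fixed degree component; then $\bar Z$ is defined as the unique continuous extension along the degreewise completion, and the derivation property is transported to the limit by continuity of the multiplication.

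The heart of the matter is the filtration estimate
\[
Z\bigl(F^i \mathcal{G}\bigr) \;\subset\; F^{\,i-k}\mathcal{G}
\]
for every $i \geq 0$ (with the convention $F^{j}\mathcal{G} = \mathcal{G}$ when $j\leq 0$). To verify this, I would take a typical generator of $F^{i}\mathcal{G}$, namely a product $g\cdot h$ with $g\in \mathcal{G}$ and $h \in \mathcal{G}_{\ell}$ homogeneous of degree $\ell \leq -i$, and apply the graded Leibniz rule:
\[
Z(g\cdot h) \;=\; Z(g)\cdot h \;+\; (-1)^{k|g|}\, g\cdot Z(h).
\]
The first summand still carries $h$ as a factor, hence lies in $F^{i}\mathcal{G}$. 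In the second summand, $Z(h)$ is homogeneous of degree $\ell + k \leq -(i-k)$, hence $g\cdot Z(h)\in F^{i-k}\mathcal{G}$. Since $F^{i}\mathcal{G}$ is spanned (as an $\cO$-module) by such generators and $Z$ is $\mathbb{K}$-linear, the inclusion follows. Because $i-k \to \infty$ as $i\to\infty$, this precisely says that $Z$ is continuous for the negative filtered topology.

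Next I would pass to the extension. Fix a degree $j$. Given $\bar g\in \bar{\mathcal{G}}_j$, represent it by a Cauchy sequence $(g_n)\subset \mathcal{G}_j$. By the estimate above, $\bigl(Z(g_n)\bigr)\subset \mathcal{G}_{j+k}$ is again a Cauchy sequence, and if $(g_n),(g_n')$ represent the same class then $g_n - g_n'\to 0$, hence $Z(g_n) - Z(g_n')\to 0$ as well. We may therefore set $\bar Z(\bar g) := [\,Z(g_n)\,]\in \bar{\mathcal{G}}_{j+k}$, obtaining a well-defined $\mathbb{K}$-linear map of degree $k$ whose restriction to $\mathcal{G}\subset \bar{\mathcal{G}}$ agrees with $Z$ (use constant sequences).

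Finally, to check the graded Leibniz rule for $\bar Z$, I would rely on the fact that the multiplication on $\mathcal{G}$ extends continuously to $\bar{\mathcal{G}}$ — this is standard from $F^{i}\mathcal{G}\cdot F^{j}\mathcal{G}\subset F^{i+j}\mathcal{G}$. Given $\bar a,\bar b\in \bar{\mathcal{G}}$ represented by Cauchy sequences $(a_n),(b_n)$, the identities
\[
Z(a_n\odot b_n) \;=\; Z(a_n)\odot b_n \;+\; (-1)^{k|a_n|}\, a_n\odot Z(b_n)
\]
hold for every $n$, and passing to the limit both sides converge in $\bar{\mathcal{G}}$ by joint continuity of the product, yielding $\bar Z(\bar a\odot \bar b) = \bar Z(\bar a)\odot \bar b + (-1)^{k|\bar a|}\bar a\odot \bar Z(\bar b)$. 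The main delicate point — and the step I expect to require the most care — is precisely this interchange of limit and product; once the multiplicativity of the filtration is invoked it becomes a routine $\varepsilon/2$-type argument, but it is where the degreewise, rather than global, nature of our completion plays a role, since we must keep $|\bar a|$ fixed to make sense of the sign in the Leibniz rule.
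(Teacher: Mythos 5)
Your proof follows the same route as the paper's: establish that $Z$ preserves the negative filtration (in the sense of a uniform shift of filtration index) and deduce continuity, hence a unique extension to the completion. The paper states the filtration estimate and immediately concludes; you fill in the Cauchy-sequence manipulations and the limit argument for the Leibniz rule, which the paper leaves implicit. So the approaches coincide.

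One small slip: your claimed inclusion $Z(F^i\mathcal{G})\subset F^{i-k}\mathcal{G}$ is too strong when $k<0$. In your Leibniz computation the first summand $Z(g)\cdot h$ carries $h\in\mathcal{G}_\ell$ with $\ell\leq -i$ as a factor, so it lies in $F^i\mathcal{G}$; but when $k<0$ one has $F^{i-k}\mathcal{G}\subsetneq F^i\mathcal{G}$, and there is no reason for $Z(g)\cdot h$ to land in the smaller ideal $F^{i-k}\mathcal{G}$. The correct statement is $Z(F^i\mathcal{G})\subset F^{\min(i,\,i-k)}\mathcal{G}$ (this is exactly the case split the paper writes, $F^{i-k}\mathcal{G}$ for $k\geq 1$ and $F^i\mathcal{G}$ for $k\leq 0$). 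This does not affect your conclusion, since $\min(i,i-k)\to\infty$ as $i\to\infty$ regardless of the sign of $k$, so $Z$ still sends Cauchy sequences of fixed degree to Cauchy sequences of fixed degree; but the displayed inclusion as written is incorrect for derivations of negative degree.
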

\begin{proof}
Since $Z$ has degree $k$, it shifts the degree of homogeneous elements by $k$, and therefore preserves the negative filtration: for every $i\in\mathbb N$,
\[
Z(F^{i}\mathcal{G}) \subseteq
\begin{cases}
F^{\,i-k}\mathcal{G}, & k \ge 1,\quad \text{where } F^{\,i-k}\mathcal{G} := \mathcal{G} \text{ if } i-k \le 0,\\[0.3em]
F^{\,i}\mathcal{G}, & k \le 0.
\end{cases}
\]
In particular, $Z$ sends Cauchy sequences (for the filtered topology) to Cauchy sequences. Hence $Z$ is continuous and extends uniquely to a derivation $\bar Z$ on the completion $\bar{\mathcal{G}}$. 
\end{proof}

We consider the following definition of a $\mathbb Z$-graded manifold/variety. Their properties 
are carefully studied in \cite{KotovSalnikov}.
\begin{definition}[\cite{KOTOV2023104908}]\label{Z_graded_manifold}
    A \emph{$\mathbb Z$-graded manifold} on a manifold $M$ is a pair $ (M, \cA)$, where 
    $\cA= \oplus_{i\in \mathbb Z}\cA_i$ is a sheaf of $\mathbb Z$-graded
commutative algebras over $M$ (referred to as its sheaf of \emph{functions}), such that every point
of $M$ has a neighborhood $U \subseteq M$ over which $\mathcal \cA (U)$ is isomorphic to  $ \Gamma( \Bar{S}(\oplus_{i\in \mathbb Z} V_{i}))$, where $\Gamma(\Bar{S}(\oplus_{i\in \mathbb Z} V_{i}))$ is the degreewise completion of the graded symmetric algebra $\Gamma( S(\oplus_{i\in \mathbb Z} V_{i}))$, and each  $V_i$ is a vector bundle over $U$ concentrated in degree $i$.

\begin{enumerate}
    \item If the grading of $(M,\mathcal A)$ goes from $-\infty$ to $0$, then $(M,\mathcal A)$ shall be referred to a \emph{negatively graded manifold}.

    \item If the grading of $(M,\mathcal A)$ goes from $0$ to $+\infty$, then $(M,\mathcal A)$ shall be referred to a \emph{positively graded manifold}.
\end{enumerate}

\end{definition}
\begin{convention}
    Throughout this paper, we shall assume\footnote{In other words, there are no  generators of $\cA$ of degree $0$  besides $\mathcal{O}_M$ \cite[Remark 1.5]{KOTOV2023104908}.} that $V_0=\{0\}$ of Definition \ref{Z_graded_manifold},  and  therefore omit it from the list $(V_i)_{i\in\mathbb{Z}}$.
\end{convention}

In Definition \ref{Z_graded_manifold}, the vector bundles $(V_i)_{i\in \mathbb Z^\times}$ are only defined on an open neighborhood of a point. However,  Batchelor’s theorem for $\mathbb Z$-graded manifold \cite{Pavol-Severa, Voronov2,KOTOV2023104908, KotovSalnikov} says in the smooth case 
that they can be glued into global vector bundles $(E^*_i)_{i\in \mathbb Z^\times}$ on $M$ such that $\cA\simeq \Gamma( S(\oplus_{i\in \mathbb Z^\times} E^*_{i}))$. The vector bundles $(E_{i})_{i\in \mathbb Z^\times}$ are canonically defined by Serre-Swan theorem so that the modules of sections of their duals $(E_i^*)_{i\in \mathbb Z^\times}$ are $\mathcal{I}/\mathcal{I}^2=\left(\mathcal{I}_{-i}/(\mathcal{I}^2)_{-i}\right)_{i\in \mathbb Z^\times}$, where $\mathcal{I}=\mathcal{I}_{-}+\mathcal{I}_{+}$ with $\mathcal{I}_{-}$ resp. $\mathcal{I}_{+}$ is the  ideal  of $\cA$ generated by negative/positive homogeneous degree functions \cite{KOTOV2023104908, KotovSalnikov}. In the sequel, our graded manifold will be of this form, under a choice of such an isomorphism. This choice will be referred to as a \emph{splitting}.

This justifies an algebraic version of the Definition \ref{Z_graded_manifold}.
\begin{definition}
    A \emph{$\mathbb Z$-graded variety} over a commutative unital algebra\footnote{In algebraic geometry, $\mathcal{O}$ stands for the coordinate ring of an affine variety.} $\cO$ is a $\mathbb Z$-graded commutative algebra $\cA \coloneqq \bar{S}(\oplus_{i\in \mathbb Z^\times}\cV_i)$, where each $\cV_i$ is a projective $\cO$-module and $\bar{S}(\oplus_{i\in \mathbb Z^\times}\cV_i)$ is the degreewise completion of ${S}(\oplus_{i\in \mathbb Z^\times}\cV_i)$ with respect to the negative filtered topology. 
\end{definition}

{{Given a $\mathbb{Z}$-graded variety 
\(
\mathcal{A} \coloneqq \bar{S}\!\left(\oplus_{i \in \mathbb{Z}^\times} \mathcal{V}_i\right),
\)
the derivations of $\mathcal{A}$, equipped with the graded commutator, form a graded Lie algebra $\mathrm{Der}(\mathcal{A})$.}} 
\begin{convention}
    The product $\odot$ of  the symmetric algebra ${S}\!\left(\oplus_{i \in \mathbb{Z}^\times} \mathcal{V}_i\right)$ is extended to the completion $\bar{S}\!\left(\oplus_{i \in \mathbb{Z}^\times} \mathcal{V}_i\right)$ and is denoted by $\bar \odot$.
\end{convention}

\begin{definition}\label{Z_graded-Q}
A \emph{$\mathbb Z$-graded $Q$-variety}\footnote{When $\mathcal{O}$ is the algebra of functions on a manifold, each $\mathcal V_i$ is the  $\mathcal{O}$-module of sections of a vector bundle over $M$, it shall be called a $Q$-manifold.} $(\cA, Q)$ is a $\mathbb Z$-graded variety $\cA = \Bar{S}(\oplus_{i\in \mathbb Z^\times}\cV_i)$ equipped with a homological vector field $Q\in \mathrm{Der}(\mathcal A)$ of degree $+1$, i.e., $Q^2=\frac{1}{2}[Q,Q]=0$.

    We shall speak of \emph{negatively} or \emph{positively graded $Q$-variety} when $\cA = S(\oplus_{i>0}\cV_i)$ or $\cA = S(\oplus_{i<0}\cV_i)$. In this case, the completion $\Bar{S}(\oplus_{i>0}\cV_i)$ (or $\Bar{S}(\oplus_{i<0}\cV_i)$) coincides with the graded symmetric algebra
$S(\oplus_{i>0}\cV_i)$ (or $S(\oplus_{i<0}\cV_i)$).
\end{definition}

\begin{remark}Notice that\begin{enumerate}
    \item the homological vector field $Q$ of a $\mathbb Z$-graded $Q$-variety $(\cA, Q)$ ,  admits a decomposition with respect to the negative degree $$Q = \sum_{i\in \mathbb  \{-1\}\cup\mathbb N}Q_{(i)}$$where $Q_{(i)}\colon \mathcal{A}\to \mathcal{A}$ is a {vector field} (derivation of $\cA$) of negative degree $i$ and {of degree $+1$}.  The sum is allowed to be pointwise infinite. 
    \item if $(\mathcal{A}, Q)$ is negatively graded then $Q=Q_{(-1)}$. 
\end{enumerate}
\end{remark}

Given a split $\mathbb Z$-graded manifold $\mathcal
A\simeq \Gamma (\bar S(\oplus_{i\in \mathbb Z^\times} E^*_i))$ we shall denote by $\mathfrak X({E})$  the $\mathbb Z$-graded derivations of $\mathcal{A}$, they shall be called \emph{(graded) vector fields} on $E$. 

\begin{example}\label{ex:Lie-alg}
\begin{enumerate}
    \item Every Lie algebroid  $(A,[\cdot\,,\cdot\,]_A,\rho)$ ($A$ is concentrated in degree $-1$) corresponds to a positively graded $Q$-variety $(S(\Gamma(A^*)), Q)$ where the homological vector field $Q\in\mathfrak X(A)$ is given by
\begin{align*}
    \langle Q[f], a\rangle&=\rho(a)[f]\\
    \langle Q[\xi], a\odot b \rangle&= \rho(a)[ \langle \xi,b\rangle]-\rho(b)[ \langle \xi,a\rangle] -\langle \xi,[a,b]_A\rangle
\end{align*}
 $f\in \cO$ and $\xi\in \Gamma(A^*),\;a,b\in \Gamma(A)$, where $\cO$ is the ring of smooth functions on the base manifold $M$. Here, $Q$ acts on $S(\Gamma(A^*))$ by means of the graded Leibniz rule. The Jacobi identity for  $[\cdot\,,\cdot\,]_A$ is equivalent to $Q^2 = 0$. In particular, the tangent bundle $TM$ of a manifold $M$ is a Lie algebroid, the corresponding positively graded $Q$-variety is the De Rham complex $(\Omega^\bullet(M), d^{dR})$.

\item More generally,  recall that a \emph{negatively graded Lie $\infty$-algebroid} $\left(A_\bullet,(\ell_k)_{k\geq 1}, \rho\right)$ is a collection of vector bundles $A_\bullet =(A_{-i})_{i\geq 1}$ over $M$ endowed with a sheaf of Lie $\infty$-algebra
structures $(\ell_k)_{k\geq 1}$ over the sheaf of sections of $A_\bullet$ together with a vector bundle morphism $\rho\colon A_{-1}\to TM$, called the \emph{anchor map}, such that the $k$-ary-brackets $$\ell_k: \underbrace{\Gamma(A_\bullet)\times \cdots \times \Gamma(A_\bullet)}_{k {\text{-times}}}\longrightarrow \Gamma(A_\bullet)$$ are all $\cO$-linear in each of their arguments except when $k=2$ and at least one of the arguments of $\ell_2$ is of degree $-1$. The $2$-ary bracket  satisfies the Leibniz identity

\begin{equation}
    \ell_2(x, f y) = \rho(x)[f]y + f\ell_2(x, y),\; f\in C^{\infty}(M),\; x \in \Gamma(A_{-1}), y\in\Gamma(A_\bullet).
\end{equation}

If $A_{-i}=\{0\}$ for $i\geq n+1$  then we speak of \emph{Lie $n$-algebroid} over $M$. Equivalently, as in the case of Lie algebroids, a Lie $n$-algebroid corresponds to a derivation
\[
Q \colon \mathcal A=\Gamma\!\left(S(A_\bullet^{*})\right)
   \longrightarrow 
   \Gamma\!\left(S(A_\bullet^{*})\right)
\]
of total degree $+1$ on $A_\bullet \to M$, satisfying $Q^{2} = 0$. See \cite{Voronov,Voronov2, Poncin} or \cite{LLS} for more details.
\end{enumerate}   
\end{example}

\subsubsection{The negative and positive components of a  $\mathbb Z$-graded $Q$-variety}\label{+and-components}
Let $(\cA, Q)$ be a $\mathbb Z$-graded $Q$-variety. The pair $(\cA, Q)$ breaks into two main components: a negative component, which is a negatively graded $Q$-variety that is denoted by $(\cA^-,Q^-)$, and a positive component, which is a $\mathbb N$-graded $Q$-variety as well, that we denote by $(\cA^+,Q^+)$. Let us now recall how these components are defined.\\

\begin{enumerate}
    \item \textbf{The negative {component} \((\cA^-,Q^-)\) of $(\cA, Q)$}. This component is essentially constructed by modding out the ideal of positively graded functions. More precisely, consider an ideal $\mathcal{I}_+$ of $\cA$ generated by positive homogeneous degree functions  $\cA_{\geq +1}$.
     We define $\cA^-$ to be the quotient $\mathcal A/\mathcal I_+$. Since the degree of the homological vector field  $Q$ is   $+1$, it preserves the ideal $\mathcal{I}_+$ , i.e.,  $Q[\mathcal I_+]\subset \mathcal{I}_+$. Hence,  $Q$ passes to quotient and yields a well-defined homological vector field $Q^-$ of degree $+1$. Therefore,  the pair $(\cA^-, Q^-)$ is a  negatively graded $Q$-variety that we call the \emph{negative component of $(\cA, Q)$}.  Notice that the vector field $Q^-$ is vertical, i.e., it is  $\mathcal{O}$-linear.

\item \textbf{The positive component \((\cA^+,Q^+)\) of $(\cA, Q)$}. Consider the ideal $\mathcal{I}_-$ of $\mathcal A$ generated by negative homogeneous degree functions   $\cA_{\leq -1}$. Unlike the negative component of $(\cA, Q)$, the vector field $Q$ does not immediately descend to the positively graded manifold $\cA/\mathcal I_-$, since $Q(\cA_{-1})\subset \cA_0$. 
However, the ideal $\mathcal{I}_-+Q[\mathcal{I}_{-}]$ is preserved by $Q$. $\cA^+$ is defined as the quotient $\cA/(\mathcal{I}_-+Q[\mathcal{I}_{-}])$. The explicit description of $\cA^+$ is as follows: let $\cI$ be an ideal of $\cO$ given by the image
\begin{tikzcd}
   Q[\cI_{-}] \arrow[r, "\mathrm{pr}_{(0)}"] &  \cO,
\end{tikzcd} 
where $\mathrm{pr}_{(0)}$ is the projection on negative degree $0$. Then a straightforward observation shows that $\mathcal{I}_-+Q[\mathcal{I}_{-1}]=\mathcal{I}_-+\mathcal{I}\cA$, \cite[Lemma 2.12]{KOTOV2023104908}. Therefore, if $\mathcal{A}= \bar S({\oplus_{i\in \mathbb Z^\times}\mathcal{V}_i})$, then the \emph{positive component} $(\cA^+, Q^+)$ of $(\cA, Q)$ is identified as $\cA^+ = S_{\cO/\cI}(\oplus_{i\geq 0}\cV_i)$ and $Q^+$ is $Q$ descended to the quotient.
\end{enumerate}

\subsection{Lie-Rinehart algebras and positively graded $Q$-varieties}\label{sec:LR-NQ-manifold}
Let $\mathcal O$ be a commutative unital algebra over $\mathbb K=\mathbb R$ or $\mathbb C$. Which geometrically may correspond to an algebra of admissible functions of a  manifold or an affine variety.

\begin{definition}\label{def:Lie-Rrt}
A \emph{Lie-Rinehart algebra} over $ \mathcal O$ is a triple $(\mathfrak A , [\cdot, \cdot]_\mathfrak A , \rho_{\mathfrak A})$ with $ \mathfrak A$ an $ \mathcal O$-module, $[\cdot, \cdot]_\mathfrak A  $ a Lie $\mathbb{K}$-algebra bracket on  $\mathfrak A $, and $ \rho_\mathfrak A \colon \mathfrak A \longrightarrow  {\mathrm{Der}}(\mathcal O)$ a $ \mathcal O$-linear Lie algebra morphism called \emph{anchor map}, satisfying the the so-called \emph{Leibniz identity}:
 $$   [  a,  f b ]_\mathfrak A  = \rho_\mathfrak A (a ) [f] \, b + f [a,b]_\mathfrak A  \hbox{ for all $ a,b \in \mathfrak A, f \in \mathcal O$}. $$
\end{definition}

\textbf{Restriction.}
Consider a Lie-Rinehart algebra $(\mathfrak A, [\cdot, \cdot]_\mathfrak A , \rho_{\mathfrak A})$  over $\mathcal O $.
For every \emph{Lie-Rinehart ideal} $\mathcal I \subset \mathcal O$, i.e. any ideal
such that
$$ \rho_\mathfrak A (\mathfrak A) [\mathcal I] \subset \mathcal I$$
the quotient space $\mathfrak A / \mathcal I \mathfrak A $ inherits a natural Lie-Rinehart algebra structure over $\mathcal O / \mathcal I$. We call this Lie-Rinehart algebra the \emph{restriction w.r.t the Lie-Rinehart ideal $ \mathcal I$}. In the context of affine varieties or an arbitrary subset $\Sigma\subseteq M$ of a manifold, when $\mathcal I$ is the ideal of functions vanishing on $\Sigma$, we shall denote $\frac{\mathfrak A}{\mathcal{I}\mathfrak A}$ by $\mathfrak i_{\Sigma}^* \mathfrak A $.

\begin{example}
    $\mathfrak A=\mathrm{Der}(\mathcal{O})$ is a Lie-Rinehart algebra whose anchor map is the identity. Lie subalgebras $\mathfrak{F}\subseteq \mathrm{Der}(\mathcal{O})$ that are  finitely generated as $\mathcal{O}$-modules are Lie-Rinehart algebras whose anchor map is the inclusion map.  Geometrically, when $\mathcal{ O}$ is the algebra of functions on a manifold,  these Lie-Rinehart algebras are called \emph{singular foliations} on $M$, see e.g., \cite{Debord,AndroulidakisIakovos} or \cite{LLL1}.
\end{example}
\begin{example}[$\mathbb Z$-graded $Q$-varieties induce Lie-Rinehart algebras] Consider a $\mathbb Z$-graded $Q$-variety $(\cA=\Gamma(\bar S(\oplus_{i\in \mathbb Z^\times} E^*_{-i})),Q)$ on a manifold $M$. The homological vector field $Q$ admits a formal decomposition by polynomial degrees of the form $$Q=\sum_{i\geq -1}Q^{(i)}$$where $Q^{(i)}$ stands for the homogeneous polynomial degree $i$ component. The component of polynomial degree 
$-1$ is the contraction with a section of $E_{+1}\to M$, that is,  \[Q^{(-1)}=\iota_{c}\;\;\;\;\; \text{for some}\; c\in \Gamma(E_{+1}).\] This section determines an ideal $\cI\subset \cO$, defined as the image of
\[\iota_{c} \colon \Gamma(E^*_{+1})\longrightarrow  \mathcal{O},\qquad \cI=~\{\langle \alpha, c\rangle, \alpha\in \Gamma(E_{+1}^*)\}.\]

\begin{enumerate}
    \item There is an $\cO$-submodule $\mathfrak F\subseteq \mathfrak X(M)$ given by the image of an \emph{anchor map} $\mathfrak F=\rho_1(\Gamma(E_{-1}))\subseteq \mathfrak X(M)$,  where $\rho_1\colon E_{-1}\to TM$ is  determined by the identity $\langle Q(f), e\rangle=\rho_1(e)[f]$ for all $e\in \Gamma(E_{-1})$ and $f\in C^\infty(M)$. The submodule $\mathfrak F$ is included in the module $\mathfrak X_\mathcal{I}(M)=\{X\in \mathfrak{X}(M)\,|\, X[\mathcal{I}]\subseteq \mathcal{I} \}$ of vector fields that are “tangent to the zero locus of $\cI$": this inclusion follows  from $Q^{2}=0$ applied to a degree $-1$ function $\alpha\in \Gamma(E_{+1}^*)$ and a section $e\in\Gamma(E_{-1})$ of degree $-1$.\item The submodule $\mathfrak F\subseteq\mathfrak X_\cI(M)\subset\mathfrak X(M)$ is not, in general, closed under the Lie bracket of vector fields. However, its restriction $\mathfrak A:=\frac{\mathfrak F}{\mathcal{I}\mathfrak F}$ is closed under the Lie bracket. Consequently, $\mathfrak A=\frac{\mathfrak F}{\mathcal{I}\mathfrak F}$ is a Lie-Rinehart algebra over $\cO/\cI$. We call $\mathfrak A$  the \emph{basic Lie-Rinehart algebra} of the $\mathbb Z$-graded variety  $(\cA,Q)$ over $\cO/\cI$.
\end{enumerate}

\end{example}

It is natural to ask whether, given a Lie--Rinehart algebra $\mathfrak{A}$ over $\mathcal{O}/\mathcal{I}$, there exists a $\mathbb{Z}$-graded $Q$-variety $(\mathcal{A}, Q)$ over $\mathcal{O}$ whose basic Lie--Rinehart algebra is precisely $\mathfrak{A}$. The answer to this question is provided in Theorem~\ref{main:smooth}. In that theorem, $(\mathcal{A}, Q)$ is constructed so that its negative part is an arborescent Koszul--Tate resolution of $\mathcal{O}/\mathcal{I}$, while its positive part is a universal positively graded $Q$-variety associated to  $\mathfrak{A}$, which we recall below.

\begin{theorem}[Existence of an $NQ$-variety \cite{CLRL}]\label{thm;Universal}Let $(\mathfrak A, [\cdot, \cdot]_\mathfrak A , \rho_{\mathfrak A})$ be a  Lie-Rinehart algebra over $\mathcal{O}$. Any free/projective resolution
    \begin{equation}\label{eq:free-resol}
\cdots \stackrel{\dd} \longrightarrow\mathcal P_{-3} \stackrel{\dd}{\longrightarrow} \mathcal P_{-2} \stackrel{\dd}{\longrightarrow} \mathcal P_{-1} \stackrel{\pi}{\longrightarrow} \mathfrak A \end{equation} 
  of $\mathfrak A$  over $\mathcal O$   lifts to a unique negatively graded Lie $\infty$-algebroid structure  whose $1$-ary bracket is $\dd$ and whose anchor map is  $\rho=\rho_\mathfrak A\circ \pi$. Moreover, any two such constructions are homotopy equivalent in the sense of \cite{CLRL}. This class is called the \emph{universal Lie $\infty$-algebroid  of $\mathfrak{A}$} and is denoted $\mathbb U_\mathfrak A$.
\end{theorem}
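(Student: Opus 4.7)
The plan is to construct the Lie $\infty$-algebroid structure iteratively on polynomial arity. Via the standard equivalence between negatively graded Lie $\infty$-algebroid structures on $\mathcal{P}_\bullet$ and degree $+1$ homological vector fields $Q$ on the graded symmetric algebra $S(\mathcal{P}_\bullet^{*})$, the task becomes the construction of a degree $+1$ derivation $Q$ with $Q^{2}=0$ whose arity-zero component is dual to $\dd$ and whose arity-one component encodes the anchor $\rho=\rho_\mathfrak A\circ\pi$ together with a lift of $[\cdot,\cdot]_\mathfrak A$ along $\pi$.

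First, I would fix $Q^{(0)}$ as the derivation dual to $\dd$, and fix the action of $Q^{(1)}$ on $\mathcal{O}\subset\mathcal{A}$ to be the anchor $\rho$. The remaining $2$-ary bracket $\ell_2$ is determined on $\mathcal{P}_{-1}\otimes \mathcal{P}_{-1}$ by the following obstruction argument: for $a,b\in \mathcal{P}_{-1}$, the element $[\pi(a),\pi(b)]_\mathfrak A\in\mathfrak A$ lifts by projectivity of $\mathcal{P}_{-1}$ and surjectivity of $\pi$ to an element $\ell_2(a,b)\in \mathcal{P}_{-1}$; the compatibility with $\pi$ is settled at the price of a $\dd$-closed defect in higher degrees, which is exact by \eqref{eq:free-resol}. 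Analogous arguments in higher internal degrees determine all components of $\ell_2$ up to $\dd$-exact ambiguity.

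I would then proceed by induction on the arity $n\geq 2$. Decomposing $Q^{2}=0$ by arity yields
\[
[\dd,Q^{(n+1)}]=-\tfrac12\!\sum_{\substack{i+j=n\\ i,j\geq 1}}[Q^{(i)},Q^{(j)}]=:R_n,
\]
whose right-hand side is built from lower-arity data already constructed. The inductive hypothesis $\sum_{i+j=m}[Q^{(i)},Q^{(j)}]=0$ for $m<n$ shows, via the graded Jacobi identity for the commutator of derivations, that $[\dd,R_n]=0$, so $R_n$ is a $\dd$-cocycle in $\mathrm{Hom}_{\mathcal O}\!\bigl(S^{n+1}(\mathcal{P}_\bullet),\mathcal{P}_\bullet\bigr)$. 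Exactness of the projective resolution \eqref{eq:free-resol}, together with projectivity of the symmetric powers of $\mathcal{P}_\bullet$, then forces $R_n$ to be $\dd$-exact and supplies the required $Q^{(n+1)}$. The main technical obstacle I expect lies precisely in this cocycle verification: because the anchor makes $\ell_2$ fail to be $\mathcal O$-linear on $\mathcal{P}_{-1}$, one must carefully track the non-trivial commutators of $Q^{(1)}$ with multiplication by $\mathcal O$, while simultaneously handling Koszul signs arising in every term of $R_n$.

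For uniqueness up to homotopy, I would compare two such constructions $Q$ and $Q'$ arity by arity: inductively producing a sequence of degree $0$ derivations $h^{(n)}$ whose commutators with $\dd$ trivialize the differences $Q^{(n)}-Q'^{(n)}$, where again exactness of \eqref{eq:free-resol} furnishes the necessary preimages. Assembling these into a homotopy in the Lie $\infty$-algebroid sense of \cite{CLRL} concludes the argument and shows that the class $\mathbb U_\mathfrak A$ is well defined.
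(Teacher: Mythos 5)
The paper does not actually prove this theorem: it is imported verbatim from \cite{CLRL} (building on \cite{LLS}), so the only meaningful comparison is with the proof given there. Your sketch follows essentially that route: dualize to a degree $+1$ homological vector field, fix the arity-$0$ and arity-$1$ data from $\dd$, $\rho_{\mathfrak A}\circ\pi$ and a lift of the bracket, and kill the arity-$(n+1)$ obstruction $R_n$ inductively using exactness of the resolution and projectivity, with uniqueness again handled by an arity-by-arity obstruction argument.

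Two points need more care than your sketch gives them. First, exactness of \eqref{eq:free-resol} is exactness of the complex \emph{augmented by} $\mathfrak A$; a $[\dd,\cdot]$-cocycle with values in $\mathcal P_{-1}$ is a coboundary only if its composition with $\pi$ vanishes. Hence for the lowest-degree components of $R_n$ (e.g.\ the Jacobiator of $\ell_2$ evaluated on degree $-1$ arguments, and the component acting on $\mathcal O$ through the anchor) one must first verify this vanishing, and that is exactly where the Lie--Rinehart axioms (Jacobi identity, $\rho_{\mathfrak A}$ a Lie algebra morphism, the Leibniz rule) enter; ``exactness plus projectivity of $S^{n+1}(\mathcal P_\bullet)$'' alone does not force $R_n$ to be exact. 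Second, the uniqueness asserted is homotopy equivalence in the sense of \cite{CLRL}: one constructs a Lie $\infty$-morphism over $\mathrm{id}_{\mathfrak A}$ arity by arity (its Taylor coefficients produced by the same obstruction argument), and when the two structures live on \emph{different} resolutions the linear part comes from the classical comparison theorem for projective resolutions. Merely producing derivations $h^{(n)}$ with $[\dd,h^{(n)}]=Q^{(n)}-Q'^{(n)}$ neither covers the different-resolution case nor by itself assembles into a morphism intertwining the two structures. With these amendments your outline coincides with the argument of \cite{CLRL}.
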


As a consequence, we obtain the following.

\begin{proposition}\label{thm;Universal2}
Let $(\mathfrak A, \lb_\mathfrak A, \rho_\mathfrak A)$ be a Lie–Rinehart algebra over $\mathcal{O}$. Assume that the resolution \eqref{eq:free-resol} of Theorem~\ref{thm;Universal} is such that $\mathcal P_{-i}$ is finitely generated for all $i \geq 1$. Then:
\begin{enumerate}
\item The universal Lie $\infty$-algebroid $\mathbb U_\mathfrak A$ of $\mathfrak A$ dualizes to a positively graded $Q$-variety over $\mathcal{O}$.
\item If $\mathcal{O} = C^\infty(M)$ and $\mathfrak A \subseteq \mathfrak X(M)$ is a singular foliation on a manifold $M$, then $\mathbb U_\mathfrak A$ coincides with the universal Lie $\infty$-algebroid of Laurent-Gengoux, Lavau, and Strobl.
\end{enumerate}
\end{proposition}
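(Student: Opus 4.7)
The plan is to handle the two claims separately, treating (1) as a formal dualisation argument and (2) as an invocation of the uniqueness clause of Theorem~\ref{thm;Universal}.

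For (1), I would first convert the negatively graded Lie $\infty$-algebroid $\mathbb{U}_\mathfrak{A}$ into positive degrees by dualising over $\mathcal{O}$. For each $i\geq 1$, set $\mathcal{P}_{-i}^{*}:=\mathrm{Hom}_{\mathcal{O}}(\mathcal{P}_{-i},\mathcal{O})$ and place it in degree $+i$. The finite generation hypothesis ensures that each $\mathcal{P}_{-i}^{*}$ is again a finitely generated projective $\mathcal{O}$-module and that the evaluation pairing is perfect. I would then form the positively graded variety $\mathcal{A}^{+}:=S_{\mathcal{O}}\bigl(\bigoplus_{i\geq 1}\mathcal{P}_{-i}^{*}\bigr)$; no completion is required since the grading is positive. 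Next, I would dualise each multibracket $\ell_k$ and the anchor $\rho$ to a degree-$+1$ derivation $Q$ on $\mathcal{A}^{+}$, determined on generators by $\langle Q\xi,\,x_1\odot\cdots\odot x_k\rangle=(-1)^{\varepsilon}\langle\xi,\ell_k(x_1,\ldots,x_k)\rangle$ for $\xi\in\mathcal{P}_{-i}^{*}$ and $\langle Q(f),e\rangle=\rho(e)[f]$ for $f\in\mathcal{O}$, $e\in\mathcal{P}_{-1}$, and then extended by the graded Leibniz rule. Finite generation guarantees that $Q\xi$ lies in the degree-$(i+1)$ piece of $\mathcal{A}^{+}$, which is a finite $\mathcal{O}$-sum over the finitely many partitions of $i+1$, so $Q$ is a genuine derivation on $\mathcal{A}^{+}$ rather than on its completion. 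Finally, the higher Jacobi identities for $(\ell_k)_{k\geq 1}$ translate, via the standard Voronov-type sign dictionary, into $Q^{2}=0$, producing a positively graded $Q$-variety over $\mathcal{O}$.

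For (2), I would invoke the uniqueness (up to homotopy) clause of Theorem~\ref{thm;Universal}. A geometric resolution $(E_\bullet,\mathrm{d},\rho)$ of a singular foliation $\mathfrak A\subseteq\mathfrak X(M)$ in the sense of Laurent-Gengoux--Lavau--Strobl yields, via Serre--Swan, a projective resolution $\cdots\to\Gamma(E_{-2})\to\Gamma(E_{-1})\stackrel{\rho}{\to}\mathfrak A\to 0$ of $\mathfrak A$ as a $C^{\infty}(M)$-module with finitely generated terms. Theorem~\ref{thm;Universal} applied to this resolution produces a negatively graded Lie $\infty$-algebroid structure with $\ell_1=\mathrm{d}$ and anchor $\rho_\mathfrak A\circ\pi$. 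The LLS construction is a Lie $\infty$-algebroid on the same underlying resolution with the same prescribed initial data, so both fall within the homotopy class asserted unique by Theorem~\ref{thm;Universal} and therefore coincide up to homotopy equivalence.

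The main obstacle is the sign bookkeeping in the dualisation step of (1): the translation of the higher Jacobi identities into $Q^{2}=0$ is standard but notoriously sign-sensitive, and one must be scrupulous about the Koszul signs coming both from the symmetric algebra and from the dualisation pairing, and careful that the finite-generation hypothesis is used at exactly the right place to rule out completions. Claim (2), by contrast, is essentially a definition-chase together with the uniqueness statement, once the Serre--Swan identification between geometric and projective resolutions is invoked.
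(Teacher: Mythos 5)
Your proposal is correct and follows essentially the same route as the paper, whose proof simply cites the Voronov-type duality between Lie $\infty$-algebroids and $Q$-manifolds for item (1) and the existence of a geometric resolution in the sense of Laurent-Gengoux--Lavau--Strobl for item (2); you have merely spelled out the dualisation and the Serre--Swan/uniqueness-up-to-homotopy steps that the paper leaves implicit. The only caveat is that "coincides" in item (2) is to be read at the level of the homotopy class singled out by Theorem~\ref{thm;Universal}, exactly as you state.
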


\begin{proof}
Item~1 follows from the duality between Lie $\infty$-algebroids and $Q$-manifolds \cite{Voronov,Voronov2}. Item~2 follows because $\mathfrak A$ admits a geometric resolution in the sense of \cite{LLS}.
\end{proof}

\begin{remark}
    If $\mathcal{O}$ is the coordinates ring of some affine variety $W\subseteq \mathbb C^d$ and $\mathfrak{A}$ is a Lie-Rinehart subalgebra of $\mathfrak X(W)$, \begin{enumerate}
        \item then $\mathfrak{A}$  and the modules $\mathcal{ P}_i$ in \eqref{eq:free-resol} are finitely generated since $\mathcal{ O}_W$ is Noetherian.  By the Hilbert Syzygy Theorem, those can be chosen to be of finite length when $W=\mathbb C^d$.
        \item the universal Lie $\infty$-algebroid  of $\mathfrak{A}$ corresponds to an $\mathbb N$-graded $Q$-variety over $\mathcal{O}$. It shall be called the  \emph{universal $NQ$-variety of $\mathfrak{A}$}.
    \end{enumerate}
    Notice that Theorem \ref{thm;Universal} is valid for $\mathcal{O}=C^\infty(M)/\mathcal{I}$, where $\mathcal{I}\subset C^\infty(M)$ is an ideal. This allows to associate a Lie $\infty$-algebroid to any subset $\Sigma\subseteq M$ by taking $\mathcal{I}$ to be the ideal of functions vanishing on $\Sigma$ and $\mathfrak{A}=\mathrm{Der}(C^\infty(M)/\mathcal{I})$ the Lie-Rinehart algebra of derivations of $C^\infty(M)/\mathcal{I}$.
\end{remark}
\subsection{Arborescent Koszul-Tate resolutions}
    An important class of examples of negatively graded $Q$-varieties/manifolds is the so-called \emph{Koszul-Tate resolution} of $\cO/\cI$ for some proper ideal $\cI$ of an algebra $\cO$. Koszul-Tate resolutions are also known in a more algebraic setting, see \cite{Tate} for a historical introduction. 
    \begin{definition}
    \label{def:KT.geom}
      Let $\cI$ be an ideal  of an algebra $\cO$. The \emph{Koszul-Tate resolution}  of $\cO/\cI$   is a negatively graded $Q$-variety $(\cA^- ,\delta)$ over $\cO$ such that 
        \begin{itemize}
        \item  $ \cA^- \simeq S(\oplus_{i\leq -1}\cV_i)$, for some collection of projective $\cO$-modules $\cV_i$.
\item  the homology, $H^{-i}(\cA^-, \delta) = 0$ for $i\geq 1$ and $H^0(\cA^-,\delta) = \mathcal{O}/\mathcal{I}$.
        \end{itemize}
    \end{definition}

A standard way to construct a Koszul-Tate resolution $(S(\mathcal V), \delta)$ of $\cO/\cI$ is to employ the Tate algorithm \cite{Tate}. The main idea is to consistently extend a differential graded commutative algebra (dgca) $S(\mathcal V_{-1}\oplus \dots \oplus \mathcal V_{-k})$ equipped with a differential of degree $+1$ \[\delta^{-k}\colon S(\mathcal V_{-1}\oplus \dots \oplus \mathcal V_{-k})\to S(\mathcal V_{-1}\oplus \dots \oplus \mathcal V_{-k})\] to a dgca $\left(S(\mathcal V_{-1}\oplus \dots \oplus \mathcal V_{-k} \oplus \V_{-k-1}), \delta^{-k-1}\right)$ such that
\begin{itemize}
    \item[$\bullet$] $\delta^{-k-1}$ coincides with $\delta^{-k}$ on $\V_{-1}\oplus \dots \oplus \mathcal V_{-k}$.
    \item[$\bullet$]  $H^{-k}(S(\mathcal V_{-1}\oplus \dots \oplus \mathcal V_{-k}\oplus \cV_{-k-1}),\; \delta^{-k-1}) = 0$. In other words, all non-trivial cycles of degree $-k$ of $(S(\mathcal V_{-1}\oplus \dots \oplus \mathcal V_{-k},\; \delta^{-k})$ are in the image $\delta^{-k-1}(\mathcal V_{-k-1})$.
 \end{itemize}
 Despite its apparent simplicity, this algorithm, in general, does not terminate even for Noetherian $\cO$. For such $\cO$ all $\mathcal V_{-j}$ can be chosen to be finitely generated, but one still deals with an infinite collection of them, at least for $\cO$ being a polynomial ring and a monomial ideal $\cI \subsetneq \cO$ \cite{hancharuk2024}. An alternative approach is given by arborescent Koszul-Tate resolutions \cite{hancharuk2024} which are obtained from a projective resolution $(\FM, d)$ of $\cO/\cI$, which is simpler in a number of examples. By projective resolution $(\FM, d)$ we mean a sequence of projective $\cO$-modules $(\FM_i)_{i<0}$ such that
 $$
\begin{tikzcd}
  \cdots \arrow[r,"d"]  & \mathfrak M_{-i}\arrow[r,"d"] & \cdots \arrow[r, "d"] & \FM_{-1} \arrow[r, "d"] & \cO \arrow[r] & 0
\end{tikzcd}
 $$
 is acyclic in negative degrees and $H^{0}(\FM, d) = \cO/\cI$.
 We shall fix a projective resolution $(\FM, d)$ of $\mathcal
  O/\mathcal{I}$ in the sequel. Let us briefly recall the construction.
 \subsubsection{The construction}\label{KT:construction}
In this section we use the notations and conventions of \cite{hancharuk2024}.
 \begin{enumerate}
     \item The starting point is the set of \emph{planar rooted trees}, i.e., trees that can be embedded in a plane with a distinguished vertex, labeled as the root, at the bottom of the tree. There is a natural partial ordering on the vertices (nodes) of such trees: $A<B$ if the unique path from the root to the node $B$ passes through $A$. Such $A$ is called an \emph{ascendant} of $B$ (and $B$ is a \emph{descendant} of $A$). If $A<B$ and there are no vertices on the path from $A$ to $B$, i.e., $A$ and $B$ are connected by an edge, then $A$ is called the \emph{parent} of $B$, while $B$ is called a \emph{child} of $A$. The maximal elements w.r.t the partial ordering "$<$" are referred to as \emph{leaves}, the minimal one is the \emph{root}, while to others we refer to as \emph{inner vertices}.
      \begin{center}
\begin{tabular}{cc} 
    \scalebox{0.5}{  \begin{forest}
for tree = {grow' = 90}, nice empty nodes, for tree={ inner sep=0 pt, s sep= 0 pt, fit=band, 
},
[,
[,{label=[mystyle]{\scalebox{2}{$A\quad $}}}
    [,
        [, tier =1] 
        [,
        [, {label=[mystyle2]{\scalebox{2}{$ B$}}}, tier =1]
        [, tier = 1]
        ]
    ]
    [, tier =1]
]
]
\path[fill=black] (.parent anchor) circle[radius=4pt]
(!1.child anchor) circle[radius=4pt]
(!11.child anchor) circle[radius=4pt]
(!112.child anchor) circle[radius=4pt];
\end{forest}
}
&
   \scalebox{1}{   \begin{forest}
for tree = {grow' = 90}, nice empty nodes, for tree={ inner sep=0 pt, s sep= 15 pt, fit=band, 
},
[,{label=[mystyle]{\scalebox{1}{$E\; $}}}
    [,{label=[mystyle]{\scalebox{1}{$\quad C$}}}, tier =1]
    [,{label=[mystyle2]{\scalebox{1}{$\;D$}}}
        [,{label=[mystyle]{\scalebox{1}{$\quad G$}}}, tier =1] 
        [, tier = 1]
        [, tier =1]
    ]
]
\path[fill=black] (.parent anchor) circle[radius=2pt]
(!2.child anchor) circle[radius=2pt];
\end{forest}
}
\end{tabular}
\captionof{figure}{Examples of rooted planar trees}
\label{fig:trees1}
\end{center}
     The terminology above is illustrated in the Figure \ref{fig:trees1}. $A$ is an ascendant of $B$, $B$ is a leaf vertex. $E$ is the root vertex of the second tree, and it is a parent of vertices $C$ and $D$. $G$ is a leaf and a child of $D$. We label the root vertex as well as inner vertices by $\bullet$. Leaves are not marked by $\bullet$.
     
     \item The next step is to consider a $\mathbb K$-vector space $Tree$ of planar rooted trees, satisfying an additional condition that the valency (i.e., the number of edges connected to the vertex) of each inner vertex $\geq 3$ and the root valency is $\geq 2$. This space is enlarged by a trivial tree, which, by our conventions, consists of only one leaf.
 \end{enumerate} Examples of such trees are depicted below:
    \begin{center}
\begin{tabular}{ccc} 
    \scalebox{0.5}{  \begin{forest}
for tree = {grow' = 90}, nice empty nodes, for tree={ inner sep=0 pt, s sep= 0 pt, fit=band, 
},
[,{label=[mystyle]{\scalebox{2}{$B\quad $}}}
    [,{label=[mystyle]{\scalebox{2}{$A\quad $}}}
        [, tier =1] 
        [, tier =1]
    ]
    [, tier =1]
]
\path[fill=black] (.parent anchor) circle[radius=4pt]
(!1.child anchor) circle[radius=4pt];
\end{forest}
}
&
   \scalebox{0.5}{   \begin{forest}
for tree = {grow' = 90}, nice empty nodes, for tree={ inner sep=0 pt, s sep= 15 pt, fit=band, 
},
[
    [,{label=[mystyle]{\scalebox{2}{$C $}}}, tier =1]
      [, tier =1]
    [,{label=[mystyle2]{\scalebox{2}{$\quad D$}}}
        [, tier =1] 
        [
        [, tier = 1]
        [, tier = 1]
        ]
        [, tier =1]
    ]
]
\path[fill=black] (.parent anchor) circle[radius=4pt]
(!3.child anchor) circle[radius=4pt]
(!32.child anchor)circle[radius=4pt];
\end{forest}
}
&          \begin{forest}
for tree = {grow' = 90}, nice empty nodes, for tree={ inner sep=0 pt, s sep= 0 pt, fit=band, 
},
[[, tier =1]]
;
\end{forest} \\
A $3$-leaves tree & A $6$-leaves tree & The trivial tree

\end{tabular}
\captionof{figure}{Admissible planar trees}
\label{fig:trees2}
\end{center}
It is our convention to visualize the trivial tree as “leaf” + “edge”. The first tree in Figure \ref{fig:trees1} is not admissible, since the root valency is $1$.
\begin{convention}
\label{conv:uptree}
\normalfont
The vocabulary of vertices can be extended to subtrees in a direct manner: first, if $A$ is a vertex of a tree $t$, a subtree $t_{\uparrow A}$ is defined as a tree obtained by removing from $t$ all vertices (and corresponding edges) which are not $A$ or descendants of $A$. If $A$ is a child of $B$, we define a child subtree of $B$ to be $t_{\uparrow A}$. A tree $t_{\downarrow A}$ is defined by replacing $t_{\uparrow A}$ with a leaf, i.e., by declaring a vertex $A$ to be a maximal element. In particular,
$$
\hbox{ if $t =$} \adjustbox{valign=c}{  \scalebox{0.5}{  \begin{forest}
for tree = {grow' = 90}, nice empty nodes, for tree={ inner sep=0 pt, s sep= 0 pt, fit=band, 
},
[,{label=[mystyle]{\scalebox{2}{$B\quad $}}}
    [,{label=[mystyle]{\scalebox{2}{$A\quad $}}}
        [, tier =1] 
        [, tier =1]
         [, tier =1]
    ]
    [, tier =1]
]
\path[fill=black] (.parent anchor) circle[radius=4pt]
(!1.child anchor) circle[radius=4pt];
\end{forest}
}} \hbox{, then $t_{\uparrow A} =$ }
\adjustbox{valign=c}{    \begin{forest}
for tree = {grow' = 90}, nice empty nodes, for tree={ inner sep=0 pt, s sep= 0 pt, fit=band, 
},
[,{label=[mystyle]{$A\enspace $}}
        [, tier =1] 
        [, tier =1]
         [, tier =1]
]
\path[fill=black] (.parent anchor) circle[radius=2pt];
\end{forest}
}
\hbox{ and $t_{\downarrow A} =$ }\adjustbox{valign=c}{\scalebox{1}{  \begin{forest}
for tree = {grow' = 90}, nice empty nodes, for tree={ inner sep=0 pt, s sep= 0 pt, fit=band, 
},
[,{label=[mystyle]{\scalebox{1}{$B\quad $}}},
    [, tier =1]
    [,{label=[mystyle]{\scalebox{1}{$A $}}}, tier =1]
]
\path[fill=black] (.parent anchor) circle[radius=2pt];
\end{forest}
}}.
$$
Also, if $A$ is the root vertex of a tree $t$, then $t_{\downarrow A}$ is a trivial tree, and $t_{\uparrow A} = t$. If $A$ is a leaf of $t$, then $t_{\downarrow A} = t$ and $t_{\uparrow A}  = A$.  
\end{convention}

\noindent
\textbf{Module of planar
rooted decorated trees}: Denote $Tree^n$ the $\mathbb K$-subspace of $Tree$ of trees with $n$ leaves. Let us describe the $\cO$-module of planar rooted trees decorated with $(\FM, d)$.  We set
\begin{equation}
    Tree[\FM]\coloneqq  \oplus_{n=1}^{\infty} Tree^n \otimes_{\mathbb K} \FM^{\otimes n}.
\end{equation}
 Here, $\FM^{\otimes n}$ stands for the tensor product over $\mathcal O$ of $\FM$ taken with itself $n$-times. The $\cO$-module structure of $Tree[\FM]$ is obvious. Since the number of leaves is equal to the tensor power of $\FM$, it is useful to interpret elements of the $\mathcal{O}$-module $Tree[\FM]$ as planar trees with leaves decorated (labeled) by elements of $\FM$. We denote by $t[a_1, \dots, a_n]$ the element $t\otimes_{\mathbb K} a_1 \otimes \dots \otimes a_n \in Tree[\FM]$, where $t$ is a rooted tree $\in Tree^n$ decorated with $a_1 \otimes \dots \otimes a_n$. The \emph{homological degree} of a  tree $t\in Tree^n$ decorated by elements $a_1, \dots, a_n$ of homogeneous degree is given by 
 \begin{equation}
    |t[a_1, \dots, a_n]| = \hbox{  $-$ root $\#$ $-\#$ (of inner vertices of $t$) } + |a_1| + \dots +|a_n|.
 \end{equation}
 Here the root $\#$ is equal to $0$ or $1$, depending on whether the tree is trivial (root $\# = 0$) or not (root $\# = 1$). In particular, for trees  
 $$
 t_1[a_1, \dots, a_5] = \adjustbox{valign =c }{
  \scalebox{0.5}{   \begin{forest}
for tree = {grow' = 90}, nice empty nodes, for tree={ inner sep=5 pt, s sep= 5 pt, fit=band, 
},
[
    [\scalebox{2}{$a_1$}, tier =1]
      [\scalebox{2}{$a_2$}, tier =1]
    [
        [\scalebox{2}{$a_3$}, tier =1] 
        [\scalebox{2}{$a_4$}, tier = 1]
        [\scalebox{2}{$a_5$}, tier = 1]
    ]
]
\path[fill=black] (.parent anchor) circle[radius=4pt]
(!3.child anchor) circle[radius=4pt];
\end{forest}
}}, \quad \hbox{ and } \quad
t_2[a_1] = \adjustbox{valign =c }{
  \begin{forest}
for tree = {grow' = 90}, nice empty nodes, for tree={ inner sep=5 pt, s sep= 5 pt, fit=band, 
},
[
    [$a_1$, tier =1]
];
\end{forest}}
 $$
 $|t_1[a_1,\dots, a_5]| = -2 + |a_1| + \dots + |a_5|$ and $|t_2[a_1]| = |a_1|$. For $n\in \mathbb N$, we denote  by $Tree[\FM]_{-n}$ the subspace of decorated trees of degree $-n$. Clearly, $Tree[\FM] = \oplus_{n\in \mathbb N}Tree[\FM]_{-n}$. 

 \noindent
 \textbf{Some operations on rooted decorated  trees}: The following two natural operations on rooted trees are employed in the construction of the arborescent Koszul-Tate resolution. The first one is the isomorphism $$T^{\geq 2}(Tree[\FM]) \cong Tree^{\geq 2}[\FM]$$ provided by the root map $\r \colon T^{\geq 2}(Tree[\FM]) \cong Tree^{\geq 2}[\FM]$. Graphically, $\r$ tantamounts to  joining the forest of trees into a tree by means of a root vertex, e.g., 
 $$
\r\colon \adjustbox{valign =c }{
  \scalebox{1}{   \begin{forest}
for tree = {grow' = 90}, nice empty nodes, for tree={ inner sep=5 pt, s sep= 5 pt, fit=band, 
},
[
    [\scalebox{1}{$a_1$}, tier =1]
      [\scalebox{1}{$a_2$}, tier =1]
]
\path[fill=black] (.parent anchor) circle[radius=2pt];
\end{forest}
}} \otimes 
\adjustbox{valign =c }{
  \scalebox{1}{   \begin{forest}
for tree = {grow' = 90}, nice empty nodes, for tree={ inner sep=5 pt, s sep= 5 pt, fit=band, 
},
    [
        [\scalebox{1}{$a_3$}, tier =1] 
        [\scalebox{1}{$a_4$}, tier = 1]
        [\scalebox{1}{$a_5$}, tier = 1]
    ]
\path[fill=black] (.parent anchor) circle[radius=2pt];
\end{forest}
}} \mapsto 
\adjustbox{valign =c }{
  \scalebox{0.5}{   \begin{forest}
for tree = {grow' = 90}, nice empty nodes, for tree={ inner sep=5 pt, s sep= 5 pt, fit=band, 
},
[
[
    [\scalebox{2}{$a_1$}, tier =1]
      [\scalebox{2}{$a_2$}, tier =1]
      ]
    [
        [\scalebox{2}{$a_3$}, tier =1] 
        [\scalebox{2}{$a_4$}, tier = 1]
        [\scalebox{2}{$a_5$}, tier = 1]
    ]
]
\path[fill=black] (.parent anchor) circle[radius=4pt]
(!1.child anchor) circle[radius=4pt]
(!2.child anchor) circle[radius=4pt];
\end{forest}
}}
 $$
The other operation, $\partial_A\colon T^{\geq 2}(Tree[\FM]) \cong Tree^{\geq 2}[\FM]$, consists in removing an inner vertex $A$ of a given decorated tree. This is done by means of merging $A$ with its parent vertex $P_A$. For instance,
$$
\partial_A\colon \adjustbox{valign =c }{
  \scalebox{0.5}{   \begin{forest}
for tree = {grow' = 90}, nice empty nodes, for tree={ inner sep=5 pt, s sep= 5 pt, fit=band, 
},
[,{label=[mystyle]{\scalebox{2}{$P_A\quad $}}}
[,{label=[mystyle]{\scalebox{2}{$A\quad $}}}
    [\scalebox{2}{$a_1$}, tier =2]
      [\scalebox{2}{$a_2$}, tier =2]
      ]
    [
        [\scalebox{2}{$a_3$}, tier =1] 
        [\scalebox{2}{$a_4$}, tier = 1]
        [\scalebox{2}{$a_5$}, tier = 1]
    ]
]
\path[fill=black] (.parent anchor) circle[radius=4pt]
(!1.child anchor) circle[radius=4pt]
(!2.child anchor) circle[radius=4pt];
\end{forest}
}} \mapsto
\adjustbox{valign =c }{
  \scalebox{0.5}{   \begin{forest}
for tree = {grow' = 90}, nice empty nodes, for tree={ inner sep=5 pt, s sep= 5 pt, fit=band, 
},
[,{label=[mystyle]{\scalebox{2}{$P_A\quad $}}}
    [\scalebox{2}{$a_1$}, tier =2]
      [\scalebox{2}{$a_2$}, tier =2]
    [
        [\scalebox{2}{$a_3$}, tier =1] 
        [\scalebox{2}{$a_4$}, tier = 1]
        [\scalebox{2}{$a_5$}, tier = 1]
    ]
]
\path[fill=black] (.parent anchor) circle[radius=4pt]
(!3.child anchor) circle[radius=4pt];
\end{forest}
}}
$$

\noindent
\textbf{Module of symmetric decorated trees}: Let $A$ be any vertex of a tree $t[a_1,\dots, a_n]$ and let $\theta_1, \dots, \theta_m$ be the decorated subtrees of $t$ which are children of $A$. We set the following equivalence relations:
\begin{equation}
\label{eq:sym.trees}
t[a_1, \dots, a_n] \sim \epsilon(\sigma, \theta) t[a_{\sigma(1)}, \dots, a_{\sigma(n)}],
\end{equation}
where $\sigma$ is the permutation of the children of $A$ and $\epsilon(\sigma, \theta)$ is the Koszul sign of this permutation with respect to the degrees $|\theta_1|, \dots, |\theta_m|$. More precisely, $\epsilon(\sigma,\theta)$ is deduced from the following equation:
$$
\theta_1 \cdots \theta_m  = \epsilon(\sigma, \theta)\cdot\theta_{\sigma(1)}\cdots \theta_{\sigma(m)}.
$$

Let us illustrate the equivalence on the following example:
$$
\adjustbox{valign =c }{
  \scalebox{0.5}{   \begin{forest}
for tree = {grow' = 90}, nice empty nodes, for tree={ inner sep=5 pt, s sep= 5 pt, fit=band, 
},
[,{label=[mystyle]{\scalebox{2}{$P\quad $}}}
[,{label=[mystyle]{\scalebox{2}{$A\quad $}}}
    [\scalebox{2}{$a_1$}, tier =2]
      [\scalebox{2}{$a_2$}, tier =2]
      ]
    [,{label=[mystyle]{\scalebox{2}{$B\quad $}}}
        [\scalebox{2}{$a_3$}, tier =1] 
        [\scalebox{2}{$a_4$}, tier = 1]
    ]
]
\path[fill=black] (.parent anchor) circle[radius=4pt]
(!1.child anchor) circle[radius=4pt]
(!2.child anchor) circle[radius=4pt];
\end{forest}
}} = (-1)^{(|a_1|+|a_2|+1)(|a_3| +|a_4| +1
)}
\adjustbox{valign =c }{
  \scalebox{0.5}{   \begin{forest}
for tree = {grow' = 90}, nice empty nodes, for tree={ inner sep=5 pt, s sep= 5 pt, fit=band, 
},
[,{label=[mystyle]{\scalebox{2}{$P\quad $}}}
    [,{label=[mystyle]{\scalebox{2}{$B\quad $}}}
        [\scalebox{2}{$a_3$}, tier =1] 
        [\scalebox{2}{$a_4$}, tier = 1]
    ]
    [,{label=[mystyle]{\scalebox{2}{$A\quad $}}}
    [\scalebox{2}{$a_1$}, tier =2]
      [\scalebox{2}{$a_2$}, tier =2]
      ]
]
\path[fill=black] (.parent anchor) circle[radius=4pt]
(!1.child anchor) circle[radius=4pt]
(!2.child anchor) circle[radius=4pt];
\end{forest}
}}
=(-1)^{|a_1||a_2|}
\adjustbox{valign =c }{
  \scalebox{0.5}{   \begin{forest}
for tree = {grow' = 90}, nice empty nodes, for tree={ inner sep=5 pt, s sep= 5 pt, fit=band, 
},
[,{label=[mystyle]{\scalebox{2}{$P\quad $}}}
[,{label=[mystyle]{\scalebox{2}{$A\quad $}}}
    [\scalebox{2}{$a_2$}, tier =2]
      [\scalebox{2}{$a_1$}, tier =2]
      ]
    [,{label=[mystyle]{\scalebox{2}{$B\quad $}}}
        [\scalebox{2}{$a_3$}, tier =1] 
        [\scalebox{2}{$a_4$}, tier = 1]
    ]
]
\path[fill=black] (.parent anchor) circle[radius=4pt]
(!1.child anchor) circle[radius=4pt]
(!2.child anchor) circle[radius=4pt];
\end{forest}
}}.
$$

The first equality is due to the permutation of subtrees with roots $A$ and $B$ (which are children of $P$), while the second one comes from a permutation of leaves $|\otimes_{\mathbb K}a_1$ and $|\otimes_{\mathbb K}a_2$.

\begin{definition}\label{def:symmetric-decorated-trees}
The $\cO$-module of \emph{symmetric decorated trees}  is defined by   $\mathcal{T}ree[\FM] = Tree[\FM]/\!\sim$, where $\sim$ is given in \eqref{eq:sym.trees}.
\end{definition}

\noindent
\textbf{Arborescent Koszul-Tate resolutions}: In what follows, we abbreviate elements of $\cTr[\FM]$ by a representative in $Tree[\FM]$, while bearing in mind that the branches of such trees can be permuted with the Koszul sign. Let $S(\cTr[\FM])$ be the graded symmetric algebra of symmetric decorated trees $\mathcal{T}ree[\FM]$ in Definition \ref{def:symmetric-decorated-trees}. Let us list the first graded components $\cTr[\FM]_i =: \cV_i$,  $-4\leq i\leq -1$:
\begin{align*}
    \cV_{-1} \coloneqq& \adjustbox{valign =c }{
  \scalebox{0.5}{   \begin{forest}
for tree = {grow' = 90}, nice empty nodes, for tree={ inner sep=5 pt, s sep= 5 pt, fit=band, 
},
[,
[
]
];
\end{forest}
}} \otimes_{\mathbb K} \FM_{-1} \\
 \cV_{-2} \coloneqq& \adjustbox{valign =c }{
  \scalebox{0.5}{   \begin{forest}
for tree = {grow' = 90}, nice empty nodes, for tree={ inner sep=5 pt, s sep= 5 pt, fit=band, 
},
[,
[
]
];
\end{forest}
}} \otimes_{\mathbb K} \FM_{-2} \\
\cV_{-3}\coloneqq& \adjustbox{valign =c }{
  \scalebox{0.5}{   \begin{forest}
for tree = {grow' = 90}, nice empty nodes, for tree={ inner sep=5 pt, s sep= 5 pt, fit=band, 
},
[,
[
]
];
\end{forest}
}} \otimes_{\mathbb K} \FM_{-3} \oplus \adjustbox{valign =c }{
  \scalebox{0.5}{   \begin{forest}
for tree = {grow' = 90}, nice empty nodes, for tree={ inner sep=5 pt, s sep= 5 pt, fit=band, 
},
[
    [, tier =1]
      [, tier =1]
]
\path[fill=black] (.parent anchor) circle[radius=4pt];
\end{forest}
}}\otimes_{\mathbb K} \FM_{-1}\odot \FM_{-1} \\
\cV_{-4}\coloneqq& \adjustbox{valign =c }{
  \scalebox{0.5}{   \begin{forest}
for tree = {grow' = 90}, nice empty nodes, for tree={ inner sep=5 pt, s sep= 5 pt, fit=band, 
},
[,
[
]
];
\end{forest}
}} \otimes_{\mathbb K} \FM_{-4} \oplus \adjustbox{valign =c }{
  \scalebox{0.5}{   \begin{forest}
for tree = {grow' = 90}, nice empty nodes, for tree={ inner sep=5 pt, s sep= 5 pt, fit=band, 
},
[
    [, tier =1]
      [, tier =1]
]
\path[fill=black] (.parent anchor) circle[radius=4pt];
\end{forest}
}}\otimes_{\mathbb K} \FM_{-1}\odot \FM_{-2} \oplus \adjustbox{valign =c }{
  \scalebox{0.5}{   \begin{forest}
for tree = {grow' = 90}, nice empty nodes, for tree={ inner sep=5 pt, s sep= 5 pt, fit=band, 
},
[
    [, tier =1]
      [, tier =1]
      [, tier =1]
]
\path[fill=black] (.parent anchor) circle[radius=4pt];
\end{forest}
}}\otimes_{\mathbb K} \FM_{-1}\odot \FM_{-1}\odot \FM_{-1}.
\end{align*}

The root map $\r$ is consistently defined on this quotient, providing an isomorphism $\r\colon S^{\geq 2}(\cTr[\FM]) \simeq \cTr^{\geq 2}[\FM]$. 
 
\begin{convention}
\label{conv:triv}
\normalfont
    We shall identify the submodule of trivial decorated trees $|\otimes_{\mathbb K}\FM$ with $\FM$. In particular, $a\in \FM$ is understood as a trivial tree decorated with $a$. We use the notation $\p, \pv$ and  $\ptr$ for the natural projections $\p\colon\, S(\cTr[\FM]) \longrightarrow S^{\geq 2}(\cTr[\FM])$, $\pv\colon\, S(\cTr[\FM]) \longrightarrow \cTr^{\geq 2}[\FM]$ and $\ptr \colon\, S(\cTr[\FM]) \longrightarrow \FM$.
\end{convention}
\begin{definition}
\label{def:arb.KT}Let $\psi\colon \cTr[\FM] \rightarrow \FM$ be an $\cO$-linear map  of degree $+1$. The \emph{arborescent pre-differential associated to $\psi$} is a derivation $\delta_\psi: S(\cTr(\FM)) \longrightarrow S(\cTr(\FM))$  of degree $+1$ that is defined by the following recursive formula:
    \begin{itemize}
        \item \emph{Initial conditions:}  if $t[a] \in \FM$, $ \delta_\psi (t[a]) \coloneqq d(a)$.
        \item \emph{Recursive relations:} For $t[a_1, \dots, a_m] \in \cTr[\FM]_{-n}$, $n\geq 3, m \geq 2$:
        \begin{equation}
            \label{eq:recursive.arb.der}
         \delta_{\psi}(t[a_1,\dots, a_m]) \coloneqq \r^{-1} (t[a_1,\dots, a_m]) - \r \circ \mathrm{p}^{\geq 2} \circ \delta_{\psi} \circ \r^{-1} (t[a_1,\dots, a_m]) - \psi(t [a_1, \dots, a_m]).
        \end{equation}
        
    \end{itemize}
\end{definition}
For $n=3, m =2$ the derivation $\delta_{\psi}$ is decoded as follows:
$$
\delta_{\psi} \left( \adjustbox{valign =c }{
  \scalebox{0.5}{   \begin{forest}
for tree = {grow' = 90}, nice empty nodes, for tree={ inner sep=5 pt, s sep= 5 pt, fit=band, 
},
[
    [\scalebox{2}{$a_1$}, tier =1]
      [\scalebox{2}{$a_2$}, tier =1]
]
\path[fill=black] (.parent anchor) circle[radius=4pt];
\end{forest}
}} \right) = a_1 \odot a_2 - \psi \left ( \adjustbox{valign =c }{
  \scalebox{0.5}{   \begin{forest}
for tree = {grow' = 90}, nice empty nodes, for tree={ inner sep=5 pt, s sep= 5 pt, fit=band, 
},
[
    [\scalebox{2}{$a_1$}, tier =1]
      [\scalebox{2}{$a_2$}, tier =1]
]
\path[fill=black] (.parent anchor) circle[radius=4pt];
\end{forest}
}} \right) 
$$
For $n=3$ there are no terms from the second summand in \eqref{eq:recursive.arb.der}. They do appear, for example, when the decoration $a_1$ has a degree lower than $-1$. In particular,
$$
\delta_{\psi} \left( \adjustbox{valign =c }{
  \scalebox{0.5}{   \begin{forest}
for tree = {grow' = 90}, nice empty nodes, for tree={ inner sep=5 pt, s sep= 5 pt, fit=band, 
},
[
    [\scalebox{2}{$a_1$}, tier =1]
      [\scalebox{2}{$a_2$}, tier =1]
]
\path[fill=black] (.parent anchor) circle[radius=4pt];
\end{forest}
}} \right) = a_1 \odot a_2 -\adjustbox{valign =c }{
  \scalebox{0.5}{   \begin{forest}
for tree = {grow' = 90}, nice empty nodes, for tree={ inner sep=5 pt, s sep= 5 pt, fit=band, 
},
[
    [\scalebox{2}{$da_1$}, tier =1]
      [\scalebox{2}{$a_2$}, tier =1]
]
\path[fill=black] (.parent anchor) circle[radius=4pt];
\end{forest}
}}- \psi \left ( \adjustbox{valign =c }{
  \scalebox{0.5}{   \begin{forest}
for tree = {grow' = 90}, nice empty nodes, for tree={ inner sep=5 pt, s sep= 5 pt, fit=band, 
},
[
    [\scalebox{2}{$a_1$}, tier =1]
      [\scalebox{2}{$a_2$}, tier =1]
]
\path[fill=black] (.parent anchor) circle[radius=4pt];
\end{forest}
}} \right), 
$$
if $|a_1| < -1, |a_2|=-1.$
The purpose of this derivation $\delta_{\psi}$ is to turn $S(\cTr[\FM])$ into a Koszul-Tate resolution, as stated in the following theorem:
\begin{theorem}[\cite{hancharuk:tel-04692988,hancharuk2024}]
\label{thm:arborescentKT}
    Let $\cO$ be a commutative $\mathbb K$-algebra, and  $\cI\subsetneq \cO$ an ideal of $\mathcal O$. There exists a map $\psi\colon \cTr[\FM] \rightarrow \FM$ of degree $+1$ such that $(S(\mathcal{T}ree[\FM]), \delta_{\psi})$ is a Koszul-Tate resolution of $\cO/\cI$. In addition,$(S(\cTr[\FM],\delta_{\psi})$ is a homotopy retract on $(\FM ,d)$
    $$  
\begin{tikzcd}[column sep = 4em]
(S(\cTr[\FM]), \delta_{\psi})  \arrow[r, bend left=20, "{{p}}", shift ={(0 ,1mm)}]  \arrow[out=225, in=135, looseness=8, loop, distance =20mm, " \r\circ \p ", shift = {(-5mm, 0)}] & (\FM,d) \arrow[l,  bend left=20, swap, "{\iota}", shift = {(0, -1mm)}]
\end{tikzcd}.
$$ 
with $\iota\colon \FM \to \FM$ being the identity map, ${p} = (\ptr + \psi\circ \r\circ \p)$, so that $p\circ \iota= \mathrm{Id}$ and
    \begin{equation}
    \label{eq:hom.retract0}
        \delta_{\psi}\circ (\r\circ \p) +  (\r\circ \p) \circ \delta_\psi = \mathrm{Id} - \iota\circ p
    \end{equation}
    The pair $(S(\mathcal{T}ree[\FM]), \delta_\psi)$ is called an arborescent Koszul-Tate resolution of $\mathcal{O}/\mathcal I$. In that case, $\psi$ is called the \emph{hook map}  of $(S(\mathcal{T}ree[\FM]), \delta_\psi)$.
\end{theorem}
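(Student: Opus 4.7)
The plan is to construct the hook map $\psi$ inductively and simultaneously verify $\delta_\psi^2 = 0$ together with the homotopy retract identity~\eqref{eq:hom.retract0}; acyclicity will then drop out. Observe that the recursive formula~\eqref{eq:recursive.arb.der} already determines $\delta_\psi(t)$ uniquely once $\psi$ is fixed on $t$, since $\r^{-1}(t)$ decomposes $t$ into strictly smaller subtrees to which $\delta_\psi$ is applied via the Leibniz rule. The whole construction is therefore controlled by the choice of $\psi$ on each generator.

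I would first treat the base case on $\cTr[\FM]_{-1}=\FM_{-1}$: here trees are trivial, $\delta_\psi$ agrees with $d$, and \eqref{eq:hom.retract0} is immediate since $\r\circ\p$ annihilates polynomial-degree-$1$ elements. For the inductive step, assume $\psi$ has been defined on $\cTr[\FM]_{\geq -n}$ so that $\delta_\psi^2=0$ on the subalgebra generated by those trees and \eqref{eq:hom.retract0} holds there. Pick a generator $t\in\cTr[\FM]_{-n-1}$. Applying $\delta_\psi$ to the recursive formula and using Leibniz together with the inductive hypothesis, the computation of $\delta_\psi^2(t)$ collapses to $d(\psi(t))-O_t$, where $O_t\in\FM_{-n+1}$ is an explicit obstruction built from previously constructed data. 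The condition $\delta_\psi^2(t)=0$ then reduces to solving $d(\psi(t))=O_t$.

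The main obstacle is showing that $O_t$ is a $d$-cycle, so that acyclicity of the projective resolution $(\FM,d)$ in negative degrees yields a preimage which we declare to be $\psi(t)$. To prove $d(O_t)=0$ I would apply the projection $p$ to the identity $\delta_\psi^2(\r^{-1}(t))=0$, valid by the inductive hypothesis because $\r^{-1}(t)$ lives in the subalgebra already constructed, and then carefully track the signs dictated by the symmetrization relation~\eqref{eq:sym.trees} and the Koszul rule. Bookkeeping these signs across the grafting operation $\r$ and the projection $\p$ is the most delicate part of the argument; it amounts to an inductive combinatorial identity indexed by the number of inner vertices of $t$.

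Finally, I would verify \eqref{eq:hom.retract0} on a non-trivial generator $t$ by applying $\r\circ\p$ to both sides of \eqref{eq:recursive.arb.der}: the terms $\psi(t)$ and $\r\circ\p\circ\delta_\psi\circ\r^{-1}(t)$ are of polynomial degree $1$ and are therefore killed by $\p$, leaving $(\r\circ\p)(\delta_\psi(t)) = t - \iota p(t)$; extension to products uses the derivation property of $\delta_\psi$ together with a compatibility of $\r\circ\p$ with the symmetric product. Acyclicity of $(S(\cTr[\FM]),\delta_\psi)$ in negative degrees then drops out of the homotopy: any negative-degree $\delta_\psi$-cycle $z$ satisfies $z - \iota p(z) = \delta_\psi((\r\circ\p)(z))$, while $p(z)$ is a negative-degree $d$-cycle, hence $d$-exact, so lifting a $d$-primitive of $p(z)$ back shows $z$ itself to be $\delta_\psi$-exact.
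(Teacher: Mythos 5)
Your proposal is correct and follows essentially the route the paper attributes to \cite{hancharuk:tel-04692988,hancharuk2024}: the statement is imported from those works, and Remark~\ref{rem:psi.mult} records exactly your obstruction equation~\eqref{eq:psi.recursion} for $d\psi$, which is solved degree by degree using exactness of $(\FM,d)$, while the homotopy identity~\eqref{eq:hom.retract0} holds by construction of $\delta_\psi$ and acyclicity then follows from the retract precisely as you conclude. There is no essential difference in approach.
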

\begin{remark}
\label{rem:psi.mult}
   \normalfont
    The defining condition on the existence of the map $\psi\colon \cTr[\FM] \rightarrow \FM$ in Theorem \ref{thm:arborescentKT} comes from the identity $\delta_{\psi}^2  = 0$. More precisely, the condition reads as
    \begin{equation}
        \label{eq:psi.recursion}
         d\psi = \ptr \circ \delta_\psi \circ \r^{-1} + \psi\circ \r \circ \p \circ \delta_\psi \circ \r^{-1}.\end{equation}
    This equation can be solved at each homological degree. The interpretation of the map $\psi$ is the following: on trees with two leaves $\psi$ induces a multiplication $\star$ on $\FM$, compatible with $d$. More precisely, for $a,b\in \FM$, $$a\star b \coloneqq \psi \left( \adjustbox{valign = c} {\scalebox{0.5}{  \begin{forest}
for tree = {grow' = 90}, nice empty nodes,
            [,
      [\scalebox{2}{$a$}, tier =1]
      [\scalebox{2}{$b$}, tier =1] 
         ]
\path[fill=black] (.parent anchor) circle[radius=4pt];
\end{forest}}} \right)$$ and the Equation \eqref{eq:psi.recursion} is tantamount to $d(a\star b) = da\star b + (-1)^{|a|}a\star db$, where $\star $ coincides with the multiplication in $\cO$ when $da$ or $db$ are in $\cO$. In general, such a $\psi\colon \cTr[\FM] \rightarrow \FM$  of Theorem \ref{thm:arborescentKT} induces a $C_{\infty}$-structure on $\FM$, see \cite{hancharuk2024}.

\end{remark}

\noindent
\textbf{A more explicit description of the arborescent Koszul-Tate resolution}. For practical purposes, it is useful to write the expression of the arborescent Koszul-Tate differential $\delta_{\psi}\colon S(\cTr[\FM]) \longrightarrow S(\cTr[\FM])$ explicitly when choosing a representative of a symmetric tree. Let $t[a_1, \dots, a_n]$ be such a representative. When acting on $t[a_1, \dots, a_n]$, the derivation $ \delta_\psi$ modifies the subtrees of $t$. In order to write this action explicitly, one introduces a weight defined for any vertex of $t$ as well as some technical conventions:
\begin{definition}
    For any vertex $A$ of a tree $t[a_1, \dots, a_n]$ we associate a \emph{weight} $W_A\in \mathbb N$ defined as follows: 
\begin{itemize}
    \item $W_R = 0$ for $R$ being the root of $t$. 
    \item 
    Write $s$ for the unique path from the root $R$ of $t$ to the vertex $A$. Let us view it as an ordered set of vertices $(s_0, s_1, \dots, s_k)$, where $s_0 = A$ and $s_k$ is the root $R$. Since we work with a representative tree $t$, at any vertex $s_i$ we can distinguish between the edges on the left and on the right to the path. We address them as left edges and right edges. Let us denote by $\lbrace \theta^{\alpha}_{i},\, \alpha \in J_{i}\rbrace$ a collection of subtrees of $t$, whose root is connected to $s_i$ by a left edge. Then the weight $W_A = -k + \sum_{i=0}^k\sum_{\alpha\in J_{i}} |\theta^{\alpha}_{i}|$. In other words, it is a sum of all degrees of all trees $\theta^{\alpha}_i$ minus the length of the path. 
    
    \item For a trivial tree $a$ the weight of the leaf $a$ is set to be $0$.
\end{itemize}
\end{definition}
\begin{example}
For a tree
$$
 \scalebox{0.5}{   \begin{forest}
for tree = {grow' = 90}, nice empty nodes, for tree={ inner sep=5 pt, s sep= 5 pt, fit=band, 
},
[, {label=[mystyle]{\scalebox{2}{$R\quad $}}}
    [\scalebox{2}{$a_1$}, tier =1]
      [\scalebox{2}{$a_2$}, tier =1]
    [, , {label=[mystyle]{\scalebox{2}{$A\quad $}}}
        [\scalebox{2}{$a_3$}, tier =1] 
        [\scalebox{2}{$a_4$}, tier = 1]
        [\scalebox{2}{$a_5$}, tier = 1]
    ]
]
\path[fill=black] (.parent anchor) circle[radius=4pt]
(!3.child anchor) circle[radius=4pt];
\end{forest}
}
$$
$W_{a_1} = -1$, $W_R = 0$, $W_A = -1 + |a_1| + |a_2|$, $W_{a_4} = -2 + |a_1| + |a_2| + |a_3|$.
\end{example}

\begin{convention}
\normalfont
  \label{conv:tree}$ $
    We allow the trees to have decorations in $\cO$ by means of the following relations:
  \begin{align*}
  \hbox {if $n\geq 2$: }\quad &t[a_1, \dots, a_i+ F_i, \dots, a_n] = t[a_1, \dots, a_i, \dots, a_n]+F_i\cdot \mathrm{pr}_{T} (t[a_1, \dots, \hat F_i, \dots, a_n]) ,\\ &\quad \hbox{for any $i,j,\quad$ $a_j \in  \FM$, $F_i \in \cO$},\\
\hbox{if $n =1$: }\quad & \hbox{replace $\FM$ by $\FM\oplus\cO$} 
  \end{align*}
  Here $\hat F$ means omitting the leaf decorated with $F$. The resulting tree might not be admissible, which is remedied by projection $\mathrm{pr}_T$:  $\mathrm{pr}_{T} (t[b_1 ,\dots, b_m]) = t[b_1,  \dots, b_m]$ if $t[b_1,\dots, b_m] \in \cTr[\FM]$ and zero otherwise.
  
\end{convention}
This technical convention allows to simplify the explicit formulas for the differential $\delta_{\psi}$, as it is explained below:
\begin{proposition}
\label{prop:KT.explicit.form}
    In view of Conventions \ref{conv:uptree}, \ref{conv:triv}, \ref{conv:tree} the derivation  $ \delta_\psi\colon S(\cTr[\FM]) \longrightarrow S(\cTr[\FM])$ has the following explicit form while acting on $\cTr[\FM]$:
    
        \begin{align*}&\hbox{for $n=1$: }\quad  \delta_{\psi}( a) = d(a),\\ \\
    &\hbox{for $n\geq 2$: }\quad \delta_\psi(t[a_1, \dots, a_n]) = \r^{-1}t[a_1,\dots, a_n]+ \sum_{A\in \mathrm{InVert}(t)}(-1)^{W_A}\partial_A t[a_1,\dots, a_n] \\ &+\sum_{A\in \mathrm{Leaves}(t)}(-1)^{W_A}t[a_1,\dots,d(a_A) ,\dots, a_n] - \sum_{A\in \mathrm{InVert}(t)\cup\mathrm{Root}}(-1)^{W_A}t_{\downarrow A}[a_1,\dots,\psi({t_{\uparrow A}}(a_A)) ,\dots, a_n].
    \end{align*} 
\end{proposition}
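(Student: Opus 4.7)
The plan is to prove the identity by induction on the number of leaves $n$ of the tree $t[a_1,\ldots,a_n]$, using the recursive definition of $\delta_\psi$ in \eqref{eq:recursive.arb.der} together with the graded Leibniz rule. The base case $n = 1$ is tautological, since the recursion literally sets $\delta_\psi(a) = d(a)$ and, on a trivial tree, the sums over inner vertices and over non-leaf vertices are empty while the unique leaf has weight $0$. For the inductive step ($n \geq 2$), write $\r^{-1}(t[a_1,\ldots,a_n]) = \theta_1 \odot \cdots \odot \theta_k$, where the $\theta_j$ are the decorated child subtrees of the root of $t$, and expand
\[
\delta_\psi(\theta_1 \odot \cdots \odot \theta_k) \;=\; \sum_{j=1}^{k} (-1)^{|\theta_1| + \cdots + |\theta_{j-1}|}\,\theta_1 \odot \cdots \odot \delta_\psi(\theta_j) \odot \cdots \odot \theta_k
\]
by the Leibniz rule. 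The inductive hypothesis applied to each non-trivial $\theta_j$ (and the base case to each trivial one) then resolves $\delta_\psi(\theta_j)$ into an explicit sum of tree-shaped terms, which are regrafted onto a common root by $\r \circ \p$.

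The combinatorial step is to identify four families of regrafted contributions with the four summands in the target formula. Every inner vertex of $t$ is either the root of some non-trivial $\theta_j$, captured by the leading $\r^{-1}\theta_j$ in $\delta_\psi(\theta_j)$ (which regrafts to $\partial_{A_j} t$ with $A_j$ the root of $\theta_j$ viewed as an inner vertex of $t$), or lies strictly inside some $\theta_j$, captured by the corresponding $\partial_B \theta_j$ terms. Every leaf of $t$ is either a trivial child $\theta_j$ (contributing through the base case $\delta_\psi(\theta_j) = d(a_j)$ and Convention~\ref{conv:tree}) or a leaf of some non-trivial $\theta_j$. The $\psi$-terms of the inductive expansions cover every non-leaf vertex of $t$ except the root itself, and the residual $-\psi(t[a_1,\ldots,a_n])$ appearing explicitly in \eqref{eq:recursive.arb.der} supplies the missing root contribution. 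Finally, the $\r^{-1}t[a_1,\ldots,a_n]$ term of the recursion matches the first summand of the explicit formula verbatim.

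The principal obstacle is the Koszul sign bookkeeping. The key identity to verify is that for any vertex $B$ of a subtree $\theta_j$,
\[
W^t_B \;=\; W^{\theta_j}_B \;-\; 1 \;+\; |\theta_1| + \cdots + |\theta_{j-1}|,
\]
which follows directly from the definition of the weight: passing from $\theta_j$ to $t$ lengthens the root-to-$B$ path by exactly one edge, and augments the collection of left subtrees along that path by precisely $\theta_1,\ldots,\theta_{j-1}$. Combining this identity with the Leibniz sign $(-1)^{|\theta_1|+\cdots+|\theta_{j-1}|}$ and the global minus sign appearing before $\r \circ \p \circ \delta_\psi \circ \r^{-1}$ in \eqref{eq:recursive.arb.der} produces precisely the factor $(-1)^{W_A}$ (with the correct overall sign $\pm 1$) prescribed by each of the four summands of the explicit formula. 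Once the sign analysis is carried out in each family, the combinatorial partition of vertices of $t$ assembles the four families into the stated closed form, completing the induction.
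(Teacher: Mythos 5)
Your argument is correct. The paper itself offers no in-text proof of this proposition (it is invoked in the proof of Proposition~\ref{prop:Q.explicit.form} as following from \cite[Proposition 2.22]{hancharuk2024}), so your self-contained induction on the number of leaves is exactly the kind of argument being deferred to: unwind the recursion \eqref{eq:recursive.arb.der}, write $\r^{-1}(t)=\theta_1\odot\cdots\odot\theta_k$, apply the graded Leibniz rule, invoke the inductive hypothesis on each child, and regraft. Your key sign lemma $W^t_B = W^{\theta_j}_B - 1 + |\theta_1|+\cdots+|\theta_{j-1}|$ is correct (the path to the root lengthens by one edge and the left subtrees at the new root are $\theta_1,\dots,\theta_{j-1}$), and combined with the Leibniz sign and the global minus in front of $\r\circ\p\circ\delta_\psi\circ\r^{-1}$ it does reproduce $(-1)^{W_A}$ in each of the four families: the leading terms $\r^{-1}\theta_j$ give the $\partial_{A}$-contributions at the children of the root, the inductive $\partial_B$, $d$, and $\psi$-terms give the contributions at deeper vertices, and the explicit $-\psi(t)$ of the recursion supplies the root term ($W_R=0$). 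The only place demanding extra care is the trivial-child case with $a_j\in\FM_{-1}$, where $d(a_j)\in\cO$: there the projection $\p$ kills the term precisely when the root would be left with valency one, which is exactly when $\mathrm{pr}_T$ in Convention~\ref{conv:tree} vanishes, so the closed formula and the recursion agree; you flag this, and spelling it out is all that separates your sketch from a complete proof.
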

Here, $\mathrm{InVert}(t)$, $\mathrm{Leaves}(t)$ are the sets of vertices, inner vertices, and leaves of $t$ respectively. 
Let us illustrate the explicit form on a particular example:
\begin{align*}
 \delta_{\psi} \left( \adjustbox{valign = c} {\scalebox{0.5}{  \begin{forest}
for tree = {grow' = 90}, nice empty nodes,
[ ,
            [,
      [\scalebox{2}{$a$}, tier =1]
      [\scalebox{2}{$b$}, tier =1] 
         ]
         [\scalebox{2}{$c$}, tier = 1 ]
 ]
\path[fill=black] (.parent anchor) circle[radius=4pt]
                (!1.child anchor) circle[radius=4pt];
\end{forest}}}\right) = &
  \adjustbox{valign = c} {\scalebox{0.5}{  \begin{forest}
for tree = {grow' = 90}, nice empty nodes,
            [
      [\scalebox{2}{$a$}, tier =1]
      [\scalebox{2}{$b$}, tier =1] 
         ]
\path[fill=black] (.parent anchor) circle[radius=4pt];
\end{forest}}} \odot
  \adjustbox{valign = c} {\scalebox{0.5}{  \begin{forest}
for tree = {grow' = 90}, nice empty nodes,
[
         [\scalebox{2}{$c$}, tier = 1 ]
 ]
\end{forest}}} 
- 
  \adjustbox{valign = c} {\scalebox{0.5}{  \begin{forest}
for tree = {grow' = 90}, nice empty nodes,
[ ,
      [\scalebox{2}{$a$}, tier =1]
      [\scalebox{2}{$b$}, tier =1] 
         [\scalebox{2}{$c$}, tier = 1 ]
 ]
\path[fill=black] (.parent anchor) circle[radius=4pt];
\end{forest}}}
  + \adjustbox{valign = c} {\scalebox{0.5}{  \begin{forest}
for tree = {grow' = 90}, nice empty nodes,
[ ,
            [,
      [\scalebox{2}{$da$}, tier =1]
      [\scalebox{2}{$b$}, tier =1] 
         ]
         [\scalebox{2}{$c$}, tier = 1 ]
 ]
\path[fill=black] (.parent anchor) circle[radius=4pt]
                (!1.child anchor) circle[radius=4pt];
\end{forest}}}
+(-1)^{|a|}\adjustbox{valign = c} {\scalebox{0.5}{  \begin{forest}
for tree = {grow' = 90}, nice empty nodes,
[ ,
            [,
      [\scalebox{2}{$a$}, tier =1]
      [\scalebox{2}{$db$}, tier =1] 
         ]
         [\scalebox{2}{$c$}, tier = 1 ]
 ]
\path[fill=black] (.parent anchor) circle[radius=4pt]
                (!1.child anchor) circle[radius=4pt];
\end{forest}}}
\\ \\
&+(-1)^{|a|+|b|}\adjustbox{valign = c} {\scalebox{0.5}{  \begin{forest}
for tree = {grow' = 90}, nice empty nodes,
[ ,
            [,
      [\scalebox{2}{$a$}, tier =1]
      [\scalebox{2}{$b$}, tier =1] 
         ]
         [\scalebox{2}{$dc$}, tier = 1 ]
 ]
\path[fill=black] (.parent anchor) circle[radius=4pt]
                (!1.child anchor) circle[radius=4pt];
\end{forest}}}
+ \adjustbox{valign = c} {\scalebox{0.5}{  \begin{forest}
for tree = {grow' = 90}, nice empty nodes,
            [
      [\scalebox{2}{$a\star b$}, tier =1]
      [\scalebox{2}{$c$}, tier =1] 
         ]
\path[fill=black] (.parent anchor) circle[radius=4pt];
\end{forest}}} - \psi \left( \adjustbox{valign = c} {\scalebox{0.5}{  \begin{forest}
for tree = {grow' = 90}, nice empty nodes,
[ ,
            [,
      [\scalebox{2}{$a$}, tier =1]
      [\scalebox{2}{$b$}, tier =1] 
         ]
         [\scalebox{2}{$c$}, tier = 1 ]
 ]
\path[fill=black] (.parent anchor) circle[radius=4pt]
                (!1.child anchor) circle[radius=4pt];
\end{forest}}}\right),
\end{align*}
where $|a|, |b|, |c| <-1$ and $a\star b = \psi\left( \adjustbox{valign = c} {\scalebox{0.5}{  \begin{forest}
for tree = {grow' = 90}, nice empty nodes,
            [
      [\scalebox{2}{$a$}, tier =1]
      [\scalebox{2}{$b$}, tier =1] 
         ]
\path[fill=black] (.parent anchor) circle[radius=4pt];
\end{forest}}}\right)$.

\section{Main results}\label{sec:2}

In this section, we state the main results of the paper. Recall that $\mathcal{O}$ an associative, commutative, unital algebra and $\cI\subset \cO$ is a proper ideal of $\cO$. The relevant geometric examples of $\mathcal{O}$ are an algebra of functions on a smooth manifold, a Stein manifold, or a polynomial ring $\mathbb{K}[x_1, \dots, x_n]$. In the latter case, the quotient algebra $\mathcal{O}/\mathcal{I}$ corresponds to a coordinate ring of an algebraic variety.
\medskip
This section investigates the extension problem for positively graded $Q$-varieties by the arborescent Koszul-Tate resolution of $\cO/\cI$ in two distinct settings:
\begin{enumerate}
    \item the case of a positively graded $Q$-variety over $\mathcal{O}/\mathcal{I}$, which is the subject of Theorem \ref{thm:gen.case.computations};
    \item the case of a positively graded $Q$-variety over $\mathcal{O}$ such that $Q$ preserves $\mathcal{I}$, which is delegated to Theorem \ref{thm:main2}.
\end{enumerate}
Although the second case is a particular instance of the first case (it yields a positively graded $Q$-variety over $\mathcal{O}/\mathcal{I}$), there are two distinct features appearing:
\begin{enumerate}
    \item The assumption imposed on $\mathcal{O}$ in Theorem~\ref{thm:gen.case.computations} can be avoided. Namely, the (smooth) K\"ahler module $\Omega_{\cO/\mathbb K}$ need not be projective.
    \item  The extension of Theorem \ref{thm:main2} is inherently equipped with a homotopy retract data coming from the arborescent Koszul-Tate resolution of $\mathcal{O}/\mathcal{I}$. This allows a more explicit description, as stated in Proposition \ref{prop:Q.explicit.form}.
\end{enumerate}
The extension problem is known in many applications, for instance in BV and BFV formalisms \cite{HT,stasheff_poisson, Barnich_1998, Felder-Kazhdan}. The main tool is a homological perturbation lemma, which provides a step-by-step algorithm and is based on the acyclicity of the $Q$-variety in negative degrees; see, e.g., Theorem 3.21 in \cite{KOTOV2023104908} in the context of our scope. In practice, finding such extensions explicitly can be a highly non-trivial task. The extension requires calculating a preimage of the Koszul-Tate resolution at every negative degree, and such calculations can be infinitely many. In this section, we present an enhanced algorithm that significantly simplifies the complexity in a class of examples. The cornerstone of this new technique is the use of arborescent Koszul-Tate resolutions as the negative part of $Q$.
Let us address the issue of complexity and calculations. In our work, we are dealing with two types of computational problems:
\begin{enumerate}
    \item  Finding a kernel of an $\cO$-linear map $\phi: A \rightarrow B$.
    \item Constructing an $\cO$-linear map $\gamma:A \rightarrow B$, $A$ being a projective module, in a diagram as depicted below :
    \begin{tikzcd} A \arrow[r, dotted, "\gamma"] \arrow[dr] & B \arrow[d,twoheadrightarrow]  \\ & C  \end{tikzcd}
\end{enumerate}
Although the difficulty of finding a map $h$ in the second computational problem is highly dependent on $C$ (if $C$ is zero, $\gamma = 0$ is an admissible choice), we say that the homological computations are \emph{restricted} to $(A, B)$. In case there is a collection of $\cO$-modules $ \lbrace \lbrace A_i, B_i\rbrace ,\, i\in I\rbrace$ for some indexed set $I$, we call the construction of appropriate morphisms $\gamma_i\colon A_i \rightarrow B_i$ as doing homological computations restricted to $\lbrace (A_i, B_i),\, i \in I\rbrace$. 
The classical Tate algorithm is essentially focused on problems of the first type, while the algorithm to find arborescent Koszul-Tate resolutions has both problems equally present. The first type is found in the construction of the resolution $(\FM, d)$ of $\cO/\cI$. The second type of computation problem is to find an appropriate hook  map $\psi\colon \mathcal{T}ree[\FM]\to \FM$ which is associated with the arborescent Koszul-Tate resolution $(S(\mathcal{T}ree[\FM]), \delta_\psi)$. The standard technique for extending $NQ$-varieties used in Theorem 3.21 in \cite{KOTOV2023104908} only has problems of the second type. Let us introduce our modified technique in detail.

\subsection{Arborescent extension of $\mathbb N$-graded $Q$-varieties over $\cO/\cI$}\label{sec:2.1}
 One of the crucial technical details in the proof of Theorem 3.21 \cite{KOTOV2023104908} is the lifting property of derivations of $\cO/\cI$ to $\cO$. We cover this case using the assumption that the (smooth) K\"ahler module $\Omega_{\cO/\mathbb K}$ is projective, see Lemma \ref{lem:kahler} and Remark \ref{rem:smooth.kahler}. In this setting, we will need to lift the homological derivation of a positively graded  variety $(\cA^+, Q^+)$ over $\cO/\cI$  to a derivation over $\cO$. 
 More precisely,

\begin{definition}\label{def:extensionQ^+}
	Let $(\cA^+ = S_{\cO/\cI}(\oplus_{i\geq 1} \mathcal V_{i} ),\; Q^+)$ be a positively graded $Q$-variety over $\cO/\cI$. An \emph{$\cO$-lift} of $(\cA^+, Q^+)$ over $\mathcal{O}$ is a pair $(\hat\cA^+\coloneqq S(\oplus_{i\geq 1} \mathcal V_{i} ),\, \hat Q^+)$ such that $\hat Q^+\in \mathrm{Der}(S(\oplus_{i\geq 1} \mathcal V_{i}))$ is a lift of $Q^+\in\mathrm{Der}( S_{\cO/\mathcal
    I}(\oplus_{i\geq 1} \mathcal V_{i} ))$ in the sense of Proposition \ref{prop:Qlift}. 
\end{definition}

We proceed with the following technical definition:
\begin{definition}
\label{def:arb.extension} Let $\cI\subset \cO$ be an ideal. Let $(\cA^+=S_{\cO/ \cI}(\oplus_{i\geq 1} \mathcal V_{i}), Q^+)$ be a positively graded $Q$-variety over $\cO/\cI$, and let $(\cA^-=S(\oplus_{i\geq 1} \mathcal V_{-i}), Q^-)$ be a  Koszul-Tate resolution of $\cO/\cI$. 

\begin{enumerate}
    \item A \emph{$\mathbb Z$-graded extension} of $(\cA^-,\cA^+)$ is a $\mathbb Z$-graded manifold/variety $$(\cA=\cA^-\bar \odot \hat\cA^+=\bar S(\oplus_{i\in \mathbb Z^\times} \mathcal V_{i}), Q)$$ over $\mathcal{O}$ such that its negative and positive components are given exactly by $(\cA^-, Q^-)$ and $(\cA^+, Q^+)$ in the sense of \S \ref{+and-components}. We shall say that $(\mathcal{A}, Q)$ is a \emph{$\mathbb Z$-graded extension} of $(\mathcal A^-, \mathcal A^+)$. 
    \item 
If $(\mathcal A^-, Q^-)$ is an arborescent Koszul-Tate resolution $(S(\cTr[\FM],\delta_{\psi})$ of $\cO/\cI$, then  an \emph{arborescent extension $(\mathcal A, Q, \alpha, \beta)$ of $(\mathcal A^-, \mathcal A^+)$} is a $\mathbb Z$-graded extension $(\mathcal A, Q)$ such that the homological vector field $Q$ decomposes as
\[
Q = \delta_\psi+\nabla_\alpha + {L_\beta},
\]
where:
\begin{enumerate}
\item $\delta_\psi$ is extended by $\hat\cA^+$-linearity on $\cA$;
    \item ${\nabla_\alpha}$ is a degree $+1$  derivation of $\mathcal A$ so that $({\nabla_\alpha})_{(0)}$ extends $\hat Q^{+}$ to $\cA^-$; 
    \item ${L_\beta} \colon \mathcal A \to \mathcal A$ is an $\cO$-linear derivation of degree $+1$ (referred to the \emph{total $\mathcal{O}$-linear part of $Q$}) so that {$({L_\beta})_{(0)}$ is $\hat\cA^+$-linear.}
\end{enumerate}

so that the negative degree $(Q_{(k)})_{k\geq -1}$ components of $Q$ are defined recursively through the operators $\delta_\psi, {\nabla_\alpha},{L_\beta}$ as follows:

\medskip

\begin{itemize}
    \item the negative degree  $-1$ component $Q_{(-1)}$  is $\delta_\psi$.
    \item \begin{equation}
    \label{eq:def.Q_0}
    Q_{(0)}(a) = \begin{cases}
        ({\nabla_\alpha})_{(0)}(a)=\hat Q^+(a), \hbox{ if $a\in \hat \cA^+$}; \\\\
        ({\nabla_\alpha})_{(0)}(a) -\underbrace{\r\circ \p\circ \left(\left[\delta_{\psi},({\nabla_\alpha})_{(0)}\right]+({L_\beta})_{(0)}\circ\delta_\psi\right)(a) - \beta_{(0)}(a)}_{({L_\beta})_{(0)}}, \hbox{ if $a \in \cTr[\FM]_{(\bullet)}$}.
    \end{cases}\end{equation}
    \item For $N\geq 0$,  the negative degree $N+1$ component $Q_{(N+1)}$ is given recursively as:
    \begin{equation}
    \label{eq:def.Q_N}
    \scalebox{0.9}{\hbox{$ {Q_{(N+1)}(a) = \begin{cases}
         ({\nabla_\alpha})_{(N+1)}(a)=- \sum\limits_{\substack{i+j=N \\ 0\leq i\leq j}}\r\circ \p \circ\left[Q_{(i)},  Q_{(j)}\right](a) - \alpha_{(N+1)}(a), \hbox{ if $a\in \cO$} \\ \\
        ({\nabla_\alpha})_{(N+1)}(a)-\underbrace{\r\circ \p \circ \left(\left[\delta_{\psi}, ({\nabla_\alpha})_{(N+1)}\right] +({L_\beta})_{(N+1)}\circ\delta_\psi +\sum\limits_{\substack{i+j=N \\ 0\leq i\leq j}}\left[Q_{(i)}, Q_{(j)}\right]\right)(a) - \beta_{(N+1)}(a)}_{({L_\beta})_{(N+1)}},\\\\ \hbox{ if $a\in \cTr[\FM]_{(\bullet)}\oplus\mathcal V_{\geq 1}$}.
    \end{cases}}$}}\end{equation}
\end{itemize}
Here, $\alpha$ and $\beta$ are respectively $\mathbb{K}$-linear and $\mathcal{O}$-linear maps of degree 
$+1$\begin{equation}
    \begin{cases}
    \alpha\colon \cO\to \FM\bar \odot \hat\cA^+\quad\hbox{with$\quad\alpha_{(0)}=0$}\\\\        \beta\colon \cTr[\FM]\oplus \cV_{\geq 1}\to \FM\bar \odot \hat\cA^+ \quad \text{with}\quad \beta_{(0)}|_{\cV_{\geq 1}}=0
    \end{cases}\end{equation}
which are referred to as {retraction residues} of $Q|_{\mathcal O}$ and $Q|_{\cTr[\FM]\oplus\mathcal V_{\geq 1}}$, respectively. Those maps introduced in \S \ref{sec:2}, namely,  $\psi\colon \cTr[\FM]\to \FM$; $\r \colon S^{\geq 2}(\cTr[\FM]) \cong \cTr^{\geq 2}[\FM]$; $\p\colon\, S(\cTr[\FM]) \longrightarrow S^{\geq 2}(\cTr[\FM])$; and $\ptr \colon\, S(\cTr[\FM]) \longrightarrow \FM$ are extended by $\hat\cA^+$-linearity to $\cA$.
\end{enumerate}
\end{definition}

\begin{remark}
Definition \ref{def:arb.extension}(2) is inspired by the proof of Theorem \ref{thm:gen.case.computations} below, which we establish in \S \ref{sec:result1}. In that proof, the maps $\alpha$ and $\beta$ are obtained after refining  $Q|_{\mathcal O}$ and $Q|_{\cTr[\FM]\oplus\mathcal V_{\geq 1}}$ while taking into account the homotopy retraction  of  $(S(\cTr[\FM],\delta_{\psi})$ on $(\FM ,d)$ \begin{equation}\label{retration0}
    \delta_{\psi}\circ \underbrace{(\r\circ \p)}_{h} +  (\r\circ \p) \circ \delta_\psi = \mathrm{Id} - \underbrace{(\ptr + \psi\circ \r\circ \p)}_{p}.
\end{equation} This motivates the terminology retraction residue. It is important to note that these maps are not unique, but are determined only up to $\delta_\psi$-boundaries.    
\end{remark}

In the smooth setting, the existence of $\mathbb Z$-graded extensions and the uniqueness of the holomogical vector field $Q$ are granted by \cite[Theorem 3.21]{KOTOV2023104908}. We have to mention that the extension result in \cite[Theorem 3.21]{KOTOV2023104908} does not take into account the arborescent structure of the Koszul--Tate resolution.


The definition of an arborescent extension above appears rather intricate and lengthy; however, it provides a more explicit description of the homological vector field $Q$, and the homological computations become significantly more constrained.

The following results constitute the main statements of the paper.

{\begin{theorem}
\label{thm:gen.case.computations} Let $\mathcal{O}$ be an associative commutative unital algebra such that the (smooth) K\"ahler module {$\Omega_{\cO/\mathbb K}$} is a projective $\mathcal{O}$-module. \begin{enumerate}
    \item Let $\cI\subset \cO$ be a proper ideal of $\cO$ and $(\cA^+, Q^+)$, $\cA^+ = S_{\cO/\cI}(\oplus_{i\geq 1}\cV_i)$  a positively-graded $Q$-variety over $\cO/\cI$ for a collection of projective $\cO$-modules $\lbrace\cV_i, i\geq 1\rbrace$;
    \item Let $(S(\cTr[\FM]), \delta_{\psi})$ be an arborescent Koszul-Tate resolution of $\cO/\cI$ with hook map $\psi\colon \mathcal{T}ree[\FM]\to \FM$.
\end{enumerate} Then there exist retraction residues \begin{equation}
    \begin{cases}
    \alpha\colon \cO\to \FM\bar \odot \hat\cA^+\\\\        \beta\colon \cTr[\FM]\oplus \cV_{\geq 1}\to \FM\bar \odot \hat\cA^+
    \end{cases}\end{equation} so that   $(\cA^+, Q^+)$ admits an arborescent extension $(\cA=\cA^-\bar \odot\hat\cA^+, Q, \alpha, \beta)$ with $(\cA^-, Q^-)=(S(\cTr[\FM]), \delta_{\psi})$. In particular,  
        the homological computations are restricted\footnote{See the introduction of \S \ref{sec:2} for the meaning of "restricted" here. } to the collection of $\cO$-modules \[\left\{(\cTr[\FM]_{(i)},\;\FM_{(i+j-1)}\bar \odot \hat\cA^+_{j} )\bigcup  (\cV_i, \FM_{(j)}\bar \odot\hat\cA^+_{i+j+1} )\bigcup(\Omega_{\cO/\mathbb K}, \FM_{(i)}\bar \odot \hat\cA^+_{i+1})| \; i,j \geq 1 \right\}.\]
Here, $\hat{\mathcal{A}}^+:=S(\oplus_{i\geq 1}\cV_i)$  and $$
\begin{tikzcd}
  \cdots \arrow[r,""]  & \mathfrak M_{-i}\arrow[r,""] & \cdots \arrow[r, ""] & \FM_{-1} \arrow[r, ""] & \cO/\mathcal I \arrow[r] & 0
\end{tikzcd}
 $$
 is a free/projective resolution of $\cO/\mathcal I$.
\end{theorem}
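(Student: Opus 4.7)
The plan is to construct the components $(Q_{(k)})_{k\geq -1}$ of the homological vector field $Q$ inductively on the negative degree, following the explicit template of Definition~\ref{def:arb.extension}. We set $Q_{(-1)} := \delta_\psi$ extended $\hat\cA^+$-linearly. For $Q_{(0)}$, we first invoke the projectivity of $\Omega_{\cO/\mathbb K}$ (Lemma~\ref{lem:kahler}) to lift $Q^+$ from $\cA^+ = S_{\cO/\cI}(\oplus_{i\geq 1}\cV_i)$ to a derivation $\hat Q^+$ on $\hat\cA^+ = S(\oplus_{i\geq 1}\cV_i)$, and we set $Q_{(0)}|_{\hat\cA^+} := \hat Q^+$. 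On the negative generators $\cTr[\FM]$, the value of $Q_{(0)}$ is fixed by the requirement $[Q_{(-1)},Q_{(0)}] = 0$ up to an $\hat\cA^+$-linear correction; applying the homotopy identity $\delta_\psi\,(\r\circ\p) + (\r\circ\p)\,\delta_\psi = \mathrm{Id} - \iota\circ p$ with $p = \ptr + \psi\circ\r\circ\p$ splits the obstruction into a homotopy-exact part (solved by applying $h := \r\circ\p$) and a residue part with values in $\FM\,\bar\odot\,\hat\cA^+$, which is exactly the free parameter $\beta_{(0)}$.

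Inductively, assume $Q_{(k)}$ is constructed for $k\leq N$ so that the total bracket $[Q,Q]_{(j)}=0$ for $j\leq N-1$ on all generators. The obstruction to extending is the derivation
\[
C_{N+1} \;:=\; \sum_{\substack{i+j=N\\ 0\leq i\leq j}}[Q_{(i)},Q_{(j)}],
\]
which has negative degree $N$ and total degree $+2$. A graded Jacobi identity together with the inductive vanishing of $[Q,Q]_{(<N)}$ shows that $C_{N+1}$ is a $\delta_\psi$-cocycle on generators with values in $\FM\,\bar\odot\,\hat\cA^+$-tensored generators. Since $(S(\cTr[\FM]),\delta_\psi)$ is an acyclic resolution of $\cO/\cI$, $C_{N+1}$ is $\delta_\psi$-exact modulo the image of $\iota\circ p$: applying $h$ yields an explicit primitive, and the residual image-of-$p$ ambiguity is absorbed by the retraction residues $\alpha_{(N+1)}$ on $\cO$ and $\beta_{(N+1)}$ on $\cTr[\FM]\oplus\cV_{\geq 1}$. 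The recursions~\eqref{eq:def.Q_0}--\eqref{eq:def.Q_N} are precisely what this argument produces.

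The main obstacle is verifying that $[Q,Q]_{(N)}$ genuinely vanishes on generators after this prescription; that is, that $C_{N+1}$ is really $\delta_\psi$-closed on generators, not merely up to terms handled by induction. This requires a careful accounting of signs in the graded Jacobi identity, together with the fact that $\hat Q^+$ reduces modulo $\cI$ to the original $Q^+$ on $\cA^+$, so that the $\cI$-valued residues of $[\hat Q^+,\hat Q^+]$ are themselves $\delta_\psi$-exact via the resolution $(\FM,d)$. Convergence of the series $Q = \sum_{k\geq -1}Q_{(k)}$ in $\mathrm{Der}(\cA)$ is then free: each $Q_{(k)}$ shifts negative degree by $k$, so the partial sums form Cauchy sequences in the degreewise-completed algebra $\bar S(\oplus_{i\in\mathbb Z^\times}\cV_i)$ of Definition~\ref{def:filtration}.

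The claim that the homological computations are restricted to the listed pairs of $\cO$-modules is then read directly off the recursion. On a generator $a\in\cTr[\FM]_{(i)}$, the correction produced by inverting $\delta_\psi$ via $h$ and then projecting by $p$ lands in $\FM_{(i+j-1)}\bar\odot\hat\cA^+_{j}$; on a generator of $\cV_i$, it lands in $\FM_{(j)}\bar\odot\hat\cA^+_{i+j+1}$; and the initial lift of $Q^+$ from $\cO/\cI$ to $\cO$ via the Kähler module contributes exactly the pair $(\Omega_{\cO/\mathbb K},\FM_{(i)}\bar\odot\hat\cA^+_{i+1})$. These three families match the list in the theorem, and no further homological computation (beyond the already-fixed data of the projective resolution $(\FM,d)$ and the hook map $\psi$) is required.
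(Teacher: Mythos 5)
Your proposal is correct and follows essentially the same route as the paper's proof: the decomposition $Q=\delta_\psi+\nabla_\alpha+L_\beta$, induction on negative degree, the graded Jacobi/trivial-identity argument for $\delta_\psi$-closedness of the obstruction $\sum_{i+j=N}[Q_{(i)},Q_{(j)}]$, the lift of $Q^+$ via projectivity of $\Omega_{\cO/\mathbb K}$, and the refinement through the homotopy retract $h=\r\circ\p$, $p=\ptr+\psi\circ\r\circ\p$ producing the residues $\alpha$ and $\beta$ by projectivity of $\Omega_{\cO/\mathbb K}$, $\cTr[\FM]$ and $\cV_{\geq 1}$. The only point treated more explicitly in the paper is that exactness must hold in the tensored complex $\cA^-\bar\odot\hat\cA^+$ (not just in $S(\cTr[\FM])$ itself), which follows from projectivity of $\hat\cA^+$ and the corresponding Tor-vanishing; this is a routine addition to your argument rather than a gap.
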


The homological computations needed to obtain an arborescent extensions in Theorem \ref{thm:gen.case.computations} are exactly those needed to find suitable retraction residues $\alpha$ and $\beta$.
\begin{corollary}In Theorem \ref{thm:gen.case.computations}, 
if $\oplus_{i\geq 1}\FM_{(i)}, \oplus_{j\geq 1}\cV_{j}$  are finite collections of finitely generated $\cO$-modules, then there are finitely many homological computations needed to obtain an arborescent extension $(\cA, Q, \alpha, \beta)$.  
    In particular, if $\cO$ is a polynomial ring in $n$ variables, a $\mathbb Z$-graded $Q$-variety over $\cO$ such that 
    \begin{itemize}
    	\item [$\bullet$] its negative part is an arborescent Koszul-Tate resolution of $\cO/\cI$
    	\item[$\bullet$] its positive part is a universal Lie $\infty$-algebroid associated to a Lie-Rinehart algebra over $\cO$
    \end{itemize} can be constructed in a finite number of homological computations.  
\end{corollary}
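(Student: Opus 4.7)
The plan is to combine the restriction-of-computations statement in Theorem~\ref{thm:gen.case.computations} with boundedness coming from the corollary's hypotheses, and then to deduce the polynomial-ring special case by applying Hilbert's Syzygy Theorem once to $\cO/\cI$ and once to the Lie--Rinehart algebra $\mathfrak{A}$. First I would fix integers $K, L \geq 1$ such that $\FM_{(k)} = 0$ for $k > K$ and $\cV_j = 0$ for $j > L$, which exist by assumption. Inspecting the three families of $\cO$-module pairs named in Theorem~\ref{thm:gen.case.computations}, the target $\FM_{(i+j-1)}\bar\odot\hat\cA^+_j$ in the first family vanishes whenever $i+j-1 > K$, so this family contributes only for the finitely many pairs with $i+j \leq K+1$; the second family $(\cV_i,\,\FM_{(j)}\bar\odot\hat\cA^+_{i+j+1})$ contributes only when $i \leq L$ and $j \leq K$; and the third family $(\Omega_{\cO/\mathbb K},\,\FM_{(i)}\bar\odot\hat\cA^+_{i+1})$ contributes only when $i \leq K$. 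This gives a finite index set of nontrivial pairs for which the lifts defining $\alpha, \beta$ in the recursion \eqref{eq:def.Q_0}--\eqref{eq:def.Q_N} must be computed.

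Next I would check that each individual lift terminates. Since $\cO$ is Noetherian, finite tensor and symmetric products of finitely generated $\cO$-modules remain finitely generated, and the number of admissible planar rooted trees of any fixed degree is finite once the degree range of the leaf decorations is bounded. This propagates the finitely generated hypothesis to every source and target in the finite list from the previous paragraph, in particular to $\cTr[\FM]_{(i)}$ and $\hat\cA^+_j$ at the relevant bidegrees. For each chosen generator, finding a preimage against $\delta_\psi$ or providing the required residue reduces to a syzygy computation on a finitely presented module over a Noetherian ring; this terminates. Hence the first assertion of the corollary follows by performing finitely many finite computations.

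For the polynomial ring case $\cO = \mathbb{K}[x_1,\ldots,x_n]$, Hilbert's Syzygy Theorem provides a finite free resolution $\FM$ of $\cO/\cI$ whose terms are finitely generated, so the hypothesis on $\FM$ is satisfied and the arborescent Koszul--Tate resolution $(S(\cTr[\FM]),\delta_\psi)$ falls within the scope of the first assertion. A Lie--Rinehart subalgebra $\mathfrak{A}$ of $\mathrm{Der}(\cO) \cong \cO^{\,n}$ is a submodule of a finitely generated module over a Noetherian ring and is therefore itself finitely generated; applying Theorem~\ref{thm;Universal} and Proposition~\ref{thm;Universal2} to a Hilbert-syzygy resolution of $\mathfrak{A}$ produces a universal $NQ$-variety whose $\cV_j$ form a finite collection of finitely generated projective $\cO$-modules. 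The hypotheses of the first part are therefore met, and the construction of the $\mathbb{Z}$-graded $Q$-variety reduces to finitely many homological computations.

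The step I expect to require the most care is the bookkeeping of degrees ensuring that the a priori infinite families of modules $\cTr[\FM]_{(i)}$ and $\hat\cA^+_j$ entering the completed extension interact only through finitely many nontrivial bidegrees. The decisive point is that the targets in which the retraction residues $\alpha,\beta$ take values always carry a factor $\FM_{(k)}$ with $k \leq K$, confining the nonzero contributions to a finite strip of the $(i,j)$-bidegree lattice; everything outside this strip is automatically zero and requires no computation.
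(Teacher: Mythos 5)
Your bookkeeping for the first assertion is correct and is essentially the paper's argument (the paper simply declares it a direct consequence of the statement of Theorem \ref{thm:gen.case.computations}): once $\FM_{(k)}=0$ for $k>K$ and $\cV_j=0$ for $j>L$, every target in the three families of pairs carries a factor $\FM_{(k)}$, so only finitely many pairs are nonzero, hence only finitely many retraction residues $\alpha_{(N)},\beta_{(N)}$ require a lifting problem at all. Two small remarks on that part: the notion of ``homological computation'' fixed in \S\ref{sec:2} is a lifting problem against a projective source, so the statement only asks you to count such problems; your second paragraph about each individual lift terminating via syzygy computations invokes Noetherianity of $\cO$, which is not a hypothesis of the first assertion and is not needed for the claim being proved.

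The genuine gap is in the ``in particular'' (polynomial-ring) assertion. There you only verify that the hypotheses of the first assertion are met (a finite free resolution of $\cO/\cI$ via Hilbert's Syzygy Theorem, a finite-length, finite-rank resolution of the finitely generated Lie--Rinehart algebra, hence finitely many finitely generated $\cV_j$), and you conclude that the \emph{extension} step is finite. But the statement asserts that the whole $\mathbb Z$-graded $Q$-variety can be \emph{constructed} in finitely many homological computations, and the construction also includes producing the two inputs of Theorem \ref{thm:gen.case.computations}: the hook map $\psi$ defining the arborescent Koszul--Tate resolution $(S(\cTr[\FM]),\delta_\psi)$, and the universal homological vector field $\hat Q^+$ on the chosen resolution of the Lie--Rinehart algebra. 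Neither is addressed in your proposal. The paper closes exactly these two points: it notes that, the ranks of the $\cV_j$'s in the universal positively graded $Q$-variety of Proposition \ref{thm;Universal2} being finite at each degree, only finitely many homological computations are needed to construct $Q^+$, and it cites \cite[Theorem 5.1]{hancharuk2024} for the finiteness of the computations producing the arborescent Koszul--Tate resolution itself. Adding these two observations (or citations) would complete your argument for the second assertion.
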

\begin{proof}
If $\oplus_{i\geq 1}\FM_{(i)}, \oplus_{j\geq 1}\cV_{j}$  are finite collections of finitely generated $\cO$-modules, then the claim is a direct consequence of the statement of Theorem $\ref{thm:gen.case.computations}$. If $\cO$ is a polynomial ring, using Hilbert Syzygy theorem, we can always obtain both a resolution of $\cO/\cI$, as well as a resolution of the subalgebra of vector fields on $\cO$ of finite length and finite rank at each degree. Also, the ranks of the $\mathcal V_j$'s  in the universal positively graded $Q$-variety of a Lie-Rinehart algebra of Proposition \ref{thm;Universal2} are finite at each degree, so is the number of homological computations needed to construct $Q^+$. From  \cite[Theorem 5.1]{hancharuk2024} the arborescent Koszul-Tate can be obtained in a finite number of homological computations. The claim follows.
\end{proof}

\subsection{ 
Proof of Theorem \ref{thm:gen.case.computations}}\label{sec:result1}
In this section, we prove Theorem \ref{thm:gen.case.computations}. We refine Theorem 3.21 of  \cite{KOTOV2023104908} by presenting a precise algorithm to extend a positively graded $Q$-manifold/variety over $\mathcal{O}/\mathcal{I}$ to a $\mathbb{Z}$-graded one, whose negative component corresponds to the arborescent Koszul-Tate resolution of $\mathcal{O}/\mathcal{I}$.  We show that this algorithm terminates in a finite number of steps when $\mathcal{O}/\mathcal{I}$ admits a projective resolution of finite length and of finite ranks. This enables us to compute explicit examples of such extensions, which are illustrated in \S \ref{sec:3}. In \S \ref{sec:arb,extension}, we further refine the extension technique for positively graded $Q$-manifolds over any unital commutative algebra $\cO$. In particular, for a universal $\mathbb{N}$-graded manifold of a Lie-Rinehart algebra that preserves $\mathcal{I}$.

We now prove the main theorem of the paper
\begin{proof}[Proof (of Theorem \ref{thm:gen.case.computations})] The strategy of the proof is to carefully look at the computational steps of the homological perturbation technique. The construction of the homological vector field $Q$ on $\mathcal A=\cA^-\bar \odot \hat\cA^+=\bar S(\oplus_{i\in \mathbb Z^\times} \cV_i)$ goes into three main repetitive stages. \begin{itemize}
    \item [(a)]The first one is a construction of a $\mathbb K$-linear derivation of $\cO$ valued in $\cA$ that coincides with $\hat Q^+|_{\mathcal O}$, and is extended to a {derivation} $ \nabla$ on $\cA$. 
    \item [(b)]The second step is the construction of an $\cO$-linear derivation (referred to as an \emph{$\mathcal{O}$-linear part} of $Q$)  $L\colon \cA \longrightarrow \cA$ such that $Q=\delta_\psi + \nabla + L$.

    \item[(c)] The third step consists in refining  $\nabla $ and $L$ by means of the homotopy retract data of $(S(\cTr[\FM]),\delta_{\psi})$. Recall that from  Theorem \ref{thm:arborescentKT}  that $(S(\cTr[\FM],\delta_{\psi})$ is a homotopy retract on $(\FM ,d)$ with
    \begin{equation}
    \label{eq:hom.retract2}
        \delta_{\psi}\circ (\r\circ \p) +  (\r\circ \p) \circ \delta_\psi = \mathrm{Id} - (\ptr + \psi\circ \r\circ \p)
    \end{equation}
This will prove in particular that its homological computations are restricted to the collection of modules \[\left\{(\cTr[\FM]_{(i)},\;\FM_{(i+j-1)}\bar \odot \hat\cA^+_{j} )\bigcup  (\cV_i, \FM_{(j)}\bar \odot\hat\cA^+_{i+j+1} )\bigcup (\Omega_{\cO/\mathbb K}, \FM_{(i)}\bar \odot \hat\cA^+_{i+1})| \; i,j \geq 1 \right\}.\]

To do that, we need to extend the maps $\psi\colon \cTr[\FM]\to \FM$; $\r \colon S^{\geq 2}(\cTr[\FM]) \cong \cTr^{\geq 2}[\FM]$; $\p\colon\, S(\cTr[\FM]) \longrightarrow S^{\geq 2}(\cTr[\FM])$; and $\ptr \colon\, S(\cTr[\FM]) \longrightarrow \FM$ by $\hat\cA^+$-linearity to $\cA$.
\end{itemize}
	
	Let us now go into details. 

    Notice that for \(N \in \mathbb{N}\cup\{-1\}\), the requirement that \(Q\) be a differential, i.e., \(Q^2 = 0\), imposes the following condition in negative degree \(N\):
\begin{equation}
\label{eq:general.case}
0=(Q^2)_{(N)} = \delta_\psi \circ Q_{(N+1)} + Q_{(N+1)} \circ \delta_\psi + \sum_{\substack{i+j=N \\ i,j \geq 0}} Q_{(i)} \circ Q_{(j)}, \quad N \geq -1.
\end{equation}
This means that $\left(\sum_{i\geq -1}^{N+1} Q_{(i)}\right)^2_{(N)}=0$, and the component \(Q_{(N+1)}\) must serve as a preimage of a certain $[\;\cdot\,,\delta_\psi]$-cycle, which depends on the lower negative degree components $Q_{(0)}, \ldots, Q_{(N)}$. In particular, for all $f\in\cO$ 

\begin{equation}
\label{eq:general.case2}
 \delta_\psi \circ Q_{(N+1)}(f) = - \sum_{\substack{i+j=N \\ i,j \geq 0}} Q_{(i)} \circ Q_{(j)}(f), \quad N \geq -1.
\end{equation}
This forces the r.h.s of Equation \eqref{eq:general.case2} to be a $\delta_\psi$-boundary, in particular a $\delta_\psi$-cycle. {In the construction of $Q$, we proceed by a double recursion: first on the negative-degree component $Q_{(N)}$ of $Q$, and the other on the negative degree of $\cTr[\mathfrak M]_{(\bullet)}$}.

\noindent 
\textbf{Step 1: Construction of $Q_{(-1)}$ and $ Q_{(0)}$}. We let $Q_{(-1)}:=\delta_\psi$ which is extended by $\hat{\cA}^+$-linearity on $\mathcal{A}=\mathcal{A}^-\bar \odot \hat\cA^+$. Notice that the extension $(\cA, \delta_{\psi})$ is acyclic on $\hat\cA_{(\geq 1)}$: This holds for the graded symmetric algebra 
$\mathcal A^- \odot \widehat{\mathcal A}^+ = S(\oplus_{i \ge 1} \mathcal V_{-i}) \odot S(\oplus_{i \ge 1} \mathcal V_{+i})$, 
as a consequence of the vanishing 
$\mathrm{Tor}(\mathcal O/\mathcal I, \widehat{\mathcal A}^+) = 0$, 
together with the fact that 
$\widehat{\mathcal A}^+ = S(\oplus_{i\geq 1} \mathcal V_{i})$ 
is projective. Using this observation, for every $\delta_\psi$-closed Cauchy sequence in the completion
\[
\mathcal A=\mathcal A^- \,\bar\odot\, \widehat{\mathcal A}^+=\bar S(\oplus_{i\in \mathbb Z^\times} \mathcal V_{i}),
\]
we construct a $\delta_\psi$-preimage, which is itself a Cauchy sequence. See also \cite[Lemma~3.7]{KOTOV2023104908}. 

\noindent
\textbf{Step 1(a): Construction of a $\mathbb K$-linear derivation ${\nabla}_{(0)}$.} Equation \eqref{eq:general.case} translates into $Q_{(0)}$ and $\delta_{\psi}$ to commute. We can choose $Q_{(0)}|_{\hat \cA^+} := \hat Q^+$ on $\hat \cA^+=S(\oplus_{i\geq 1} \mathcal V_{i} )$, where $\hat Q^+$ is an extension of $Q^+$ as in Definition \ref{def:extensionQ^+}. Clearly, $\delta_{\psi}\circ \hat Q^+ + \hat Q^+\circ \delta_{\psi} = 0$ on $\hat\cA^+$ by $\hat \cA^+$-linearity of $\delta_\psi$. Now we need to extend $Q_{(0)}|_{\hat \cA^+}$ to $\cTr[\FM]$: to do so, we introduce a derivation $\nabla_{(0)}\colon \cA \to \cA$ 
which extends the derivation $\hat Q^+|_{\hat \cA^+}$ to $\mathcal{A}^{-}=S(\mathcal{T}ree[\FM])$. We choose $\nabla_{(0)}$ to be an $\cA$-valued  derivation of $\mathcal A^-$ of degree $+1$ and negative degree $0$; it  preserves $\mathcal{I}$ by construction.

\noindent
\textbf{Step 1(b): A construction of an $\cO$-linear part $L_{(0)}$}. One searches for $Q_{(0)} = \nabla_{(0)} + L_{(0)}$, for some $\hat \cA^+$-linear derivation $L_{(0)}$ of $\cA$.
 
Since $Q_{(0)}$ and $\delta_{\psi}$ need to commute, the derivation $L_{(0)}$  must satisfy the following equation
 \begin{equation}
 	\label{eq:gen.case.A}
 	\delta_{\psi}L_{(0)} (a) = -(\delta_{\psi}\circ  \nabla_{(0)} + \nabla_{(0)}\circ \delta_{\psi} + L_{(0)}\circ \delta_{\psi}) (a)
 \end{equation}
 for $a \in \cTr[\FM]_{(i)}$ for all $i \geq 1$. Now we employ the standard homological perturbation technique to construct a particular $L_{(0)}$. The construction is done by induction on negative degrees $i\in \mathbb N$ of $\cTr[\FM]$.
 \begin{itemize}
 	\item[$\bullet$] Let us fix $i=1$.

    We first define $L_{(0)}$ on $\cTr[\FM]_{(1)}$. Let $a\in\cTr[\FM]_{(1)} = \FM_{(1)} $. The fact that $\nabla_{(0)}$ preserves $\cI$ yields $(\delta_{\psi}\circ \nabla_{(0)} + \hat\nabla_{(0)}\circ \delta_{\psi}) (a) \in \cI\hat \cA^+$. Since $\cI\hat \cA^+ = \delta_{\psi}(\FM_{(1)}\bar \odot \hat\cA^+)$, the latter is $\delta_{\psi}$-exact. By projectivity of $\cTr[\FM]_{(1)}$ there exists an $\cO$-linear map $L_{(0)}$ that makes the following diagram commute:
 	\[
 	\begin{tikzcd}[row sep=large, column sep=large]
 		& \cA_{(1)} \arrow[two heads, d, "\delta_{\psi}"']\\
 		\cTr[\FM]_{(1)} \arrow[dashed, ur, "\exists L_{(0)}"] \arrow[r, "rhs" above] & \cI\hat\cA^+\subset \cA_{(0)}=\hat\cA^+ 
 	\end{tikzcd}
 	\]
 	where $rhs$ is $-(\delta_{\psi}\circ  \nabla_{(0)} + \nabla_{(0)}\circ \delta_{\psi})$.
 	\item[$\bullet$]We assume that we have constructed $L_{(0)}$ solving Equation \eqref{eq:gen.case.A} on $\cTr[\FM]_{(<i)}$ for $i>1$. Let $a\in\cTr[\FM]_{(i)}$.

    Let us write $R_{\leq(0)}$ the sum of  derivations $Q_{(-1)}$ and $R_{(0)}$, where $R_{(0)}$ is an $\cA$-valued derivation of $\cA$ of degree $+1$, negative degree $0$, and coincides with $Q_{(0)}$ on $\cTr[\FM]_{(<i)}$ and $\hat \cA^+$. The trivial identity $0 = R_{\leq(0)}^2 \circ R_{\leq(0)} - R_{\leq(0)} \circ R_{\leq(0)}^2$ has the following component of negative degree $-2$:
    $$
    (R_{\leq(0)}^2 \circ R_{\leq(0)} - R_{\leq(0)} \circ R_{\leq(0)}^2)_{(-2)} = (R_{\leq(0)}^2)_{(-1)}\circ \delta_{\psi} - \delta_{\psi}\circ (R_{\leq(0)}^2)_{(-1)} = 0.
    $$
    Now, let us evaluate this expression on an element $a\in \cTr[\FM]_{(i)}$. By definition of $R_{\leq(0)}^2$, the first summand is equal to $(Q_{(0)}\circ\delta_{\psi} + \delta_{\psi}\circ Q_{(0)})\circ \delta_{\psi}(a)$ and it vanishes due to \eqref{eq:gen.case.A}. The second summand $\delta_{\psi}\circ (R_{\leq(0)}^2)_{(-1)}(a) = 0$ translates into $\delta_{\psi}\circ R_{(0)} \circ \delta_{\psi} (a) =\delta_{\psi}\circ Q_{(0)} \circ \delta_{\psi} (a) = 0$. The latter equality can be rewritten as $( \nabla_{(0)} + L_{(0)})\circ \delta_{\psi}(a)$ is $\delta_{\psi}$ closed, which easily translates into the rhs of Equation \eqref{eq:gen.case.A} being $\delta_{\psi}$-closed. Thus, the rhs of Equation \eqref{eq:gen.case.A} is exact, so for all $a\in \cTr[\FM]$ there exists a $\delta_\psi$-preimage of $-(\delta_{\psi}\circ  \nabla_{(0)} + \nabla_{(0)}\circ \delta_{\psi} + L_{(0)}\circ \delta_{\psi}) (a)$. Therefore, the map $$-(\delta_{\psi}\circ  \nabla_{(0)} + \nabla_{(0)}\circ \delta_{\psi} + L_{(0)}\circ \delta_{\psi})\colon \cTr[\FM]_{(i)}\to \cA_{(i-1)}$$ is $\delta_\psi(\cA_{(i)})$-valued, moreover, it is easily checked to be $\cO$-linear on $\cTr[\FM]_{(i)}$. By projectivity of $\cTr[\FM]_{(i)}$ there exists an $\cO$-linear map  $L_{(0)}$ such that the following  diagram commutes:
 	
 	\[
 	\begin{tikzcd}[row sep=large, column sep=large]
 		& \cA_{(i)} \arrow[two heads, d, "\delta_{\psi}"']\\
 		\cTr[\FM]_{(i)} \arrow[dashed, ur, "\exists L_{(0)}"] \arrow[r, "rhs" above] & \delta_{\psi}(\cA_{(i)})\subset\cA_{(i-1)} 
 	\end{tikzcd}
 	\]
 	where $rhs$ is $-(\delta_{\psi}\circ  \nabla_{(0)} + \nabla_{(0)}\circ \delta_{\psi} + L_{(0)}\circ \delta_{\psi})$.
 \end{itemize} 
 
 \noindent
 \textbf{Step 1(c): A more refined description of $L_{(0)}$ via the homotopy retract data}. We can further refine $L_{(0)}\colon \cA_{}\to \mathcal A_{}$ to include homotopy retract data \eqref{eq:hom.retract2}. Indeed, the identity \begin{equation}\label{identity1}
     \delta_\psi\circ L_{(0)}=-(\delta_{\psi}\circ \nabla_{(0)} + \underbrace{\nabla_{(0)}\circ \delta_{\psi} + L_{(0)}\circ \delta_{\psi}}_{Q_{(0)}\circ \delta_\psi})=rhs
 \end{equation} 
 and the homotopy retract
\begin{equation}\label{retration}
    \delta_{\psi}\circ \underbrace{(\r\circ \p)}_{h} +  (\r\circ \p) \circ \delta_\psi = \mathrm{Id} - \underbrace{(\ptr + \psi\circ \r\circ \p)}_{p}
\end{equation}
 imply that 
\begin{equation}\label{eq:L_0}
   \delta_\psi(L_{(0)}(a))=\delta_\psi\circ h\circ rhs(a)+ p\circ rhs(a),\quad \text{for all}\; a\in \cTr[\FM]_{(i)}, 
\end{equation}
where $rhs$ is $-(\delta_{\psi}\circ  \nabla_{(0)} + \nabla_{(0)}\circ \delta_{\psi} + L_{(0)}\circ \delta_{\psi})$. In particular, the summand $p\circ rhs(a)$ is a $\delta_\psi$-cycle for all $a\in \cTr[\FM]_{(i)}$, therefore, it is $\delta_\psi$-exact. {This allows to search $L_{(0)}$ in the following form: 
\begin{equation}
\label{eq:L0-refined}
L_{(0)} = h\circ rhs - \beta_{(0)},
\end{equation}
where $\beta_{(0)}$ is a $\FM\bar \odot \hat\cA^+_{1}$-valued $\mathcal O$-linear map. The existence of such a map is due to the projectivity of $\cTr[\FM]$, as is in the following diagram below:} 
   	\[
   \begin{tikzcd}[row sep=large, column sep=large]
   	& \FM_{(i)}\bar \odot \hat \cA^+_{1} \arrow[two heads, d, "\delta_{\psi}"']\\
   	\cTr[\FM]_{(i)} \arrow[dashed, ur, "\exists -\beta_{(0)}"] \arrow[r, "p\circ rhs" above] & \delta_{\psi}(\FM_{(i)})\bar \odot \hat \cA^{+}_{1} 
   \end{tikzcd}
   \]
  
In conclusion, for $N=0$ the homological computations are restricted to $(\cTr[\FM]_{(i)}, \FM_{(i)}\bar \odot \hat \cA^+_{1})$ for all $i\geq 1$. This greatly reduces the number of homological computations.\\

\noindent
\textbf{Step 2: Recursive construction of $Q_{(k)}$, $k\geq 1$.} Assume we have constructed $Q=(Q_{(k)})_{0\leq k\leq N}$ such that $(Q^2)_{(\leq N-1)}=0$ for some $N\geq 0$.

The construction of $Q_{(N+1)}$ such that $(Q^2)_{(\leq N)} =0$ goes, in part, along the similar lines as for $Q_{(0)}$.
  
\noindent
\textbf{Step 2(a): Construction of a $\mathbb K$-linear derivation ${\nabla}_{(N+1)}$.} This construction is not as straightforward as in Step 1 (a). One notable difference is that one has to construct an $\mathcal A$-valued $\mathbb K$-linear  derivation of $\cO$ of negative degree $N+1$ which we do not have for free, in contrast to the previous.  It is carried out in the following stages.

\noindent
    \textbf{A construction of $Q_{(N+1)}|_{\cO}$}: From Equation \eqref{eq:general.case2} the rhs of 
   \begin{equation}
   	\label{eq:aux.eq.thm}
   \delta_{\psi}\circ Q_{(N+1)}(f) = - \sum\limits_{\substack{i+j=N \\ i,j\geq 0}}Q_{(i)}\circ Q_{(j)}(f),\quad f \in \cO.
   \end{equation}
   must be $\delta_\psi$-closed. Indeed,
  as previously, one can show that there is no homological obstruction to find such $Q_{(N+1)}$:
  Write $R_{<(N+1)}$ for the sum of derivations $\delta_{\psi} +\sum_{i=0}^N Q_{(i)}$.   Due to the recursive assumption, the negative degree $N-1$ component of the trivial identity $0 = R_{<(N+1)} \circ R_{<(N+1)}^2 - R_{<(N+1)}^2 \circ R_{<(N+1)}$ reads as
  $$
  0=(R^2_{<(N+1)} \circ R_{<(N+1)} - R_{<(N+1)} \circ R^2_{<(N+1)})_{(N-1)} = (R^2_{<(N+1)})_{(N)}\circ \delta_{\psi} - \delta_{\psi} \circ (R^2_{<(N+1)})_{(N)}
  $$
  Upon evalution on $f\in \cO$, the term $(R^2_{<(N+1)})_{(N)}\circ \delta_{\psi}$ vanishes. On the other hand, $(R^2_{<(N+1)})_{(N)}(f) = \sum\limits_{\substack{i+j=N \\ i,j\geq 0}}Q_{(i)}\circ Q_{(j)}(f) $ is $\delta_\psi$-closed, thus exact. It is clear that the map \[\sum\limits_{\substack{i+j=N \\ i,j\geq 0}}Q_{(i)}\circ Q_{(j)}=\sum\limits_{\substack{i+j=N \\ 0\leq i\leq j}}[Q_{(i)}, Q_{(j)}]\colon \mathcal O\to \cA\] is a derivation. By projectivity of  $\Omega_{\cO/\mathbb K}$, the construction of a $\mathbb K$-linear derivation $Q_{(N+1)}$ can be described as follows:
   
   \[
    \begin{tikzcd}[row sep=large, column sep=large]
   	\Omega_{\cO/\mathbb K}\arrow[dashed, rr, "\exists \tilde q_{(N+1)}"] \arrow[drr, "D_{(N)}"{pos=0.4, sloped}]& & \cA_{(N+1)} \arrow[two heads, d, "\delta_{\psi}"']\\
   	\cO \arrow[u, "D"]  \arrow[rr, "rhs" above] \arrow[dotted,urr, ""]& & \delta_{\psi}(\cA_{(N+1)})\subset\cA_{(N)}
   \end{tikzcd}
   \]Here $rhs$ is $\sum\limits_{\substack{i+j=N \\ i,j\geq 0}}Q_{(i)}\circ Q_{(j)}$, and $D_{(N)}$ is the unique $\cO$-linear map determining $rhs$ as a derivation. Then, we define $Q_{(N+1)}|_{\cO}$ as the composition of the universal derivation $D$ with $\tilde q_{(N+1)}$, that is, $Q_{(N+1)}|_{\cO} := \tilde q_{(N+1)}\circ D$. The latter satisfies Equation \eqref{eq:aux.eq.thm}, by construction.

    \noindent
 \textbf{Step 2(c$'$): A refinement $Q_{(N+1)}|_{\cO}$}:
In order to reduce the homological computations, it is possible to refine the construction of the derivation $Q_{(N+1)}|_{\cO}$ by taking into account the homotopy retract data \eqref{eq:hom.retract2} of $(S(\cTr[\FM]), \delta_{\psi})$ on $(\FM, d)$. This is done along the same lines as for Step 1(c). Namely, we obtain from \eqref{eq:hom.retract2} and \eqref{eq:aux.eq.thm}
   \begin{equation}\label{eq:der_N+1}
       \delta_{\psi}(Q_{(N+1)}(f)) = \delta_{\psi}\circ h\circ rhs(f) + p\circ rhs(f), 
   \end{equation}
   where $rhs= -\sum\limits_{\substack{i+j=N \\ i,j\geq 0}}Q_{(i)}\circ Q_{(j)}$. This allows to choose $Q_{(N+1)}|_{\cO}$ in the following form: \begin{equation}\label{eq:Q_{(1)}f}
       Q_{(N+1)}(f) = - \r\circ \p \circ \sum\limits_{\substack{i+j=N\\i,j\geq 0}}Q_{(i)}\circ Q_{(j)} (f) - \alpha_{(N+1)}(f),
   \end{equation}
   for some {$\FM_{(N+1)}\bar \odot\hat\cA^+_{N+2}$}-valued derivation $\alpha_{(N+1)}$ of $\cO$ such that \begin{equation}
   \label{eq:Delta.N}
   	\delta_{\psi}(\alpha_{(N+1)}(f)) =- p\circ rhs(f), \quad \forall f \in \cO.
   \end{equation} 
   By projectivity of $\Omega_{\cO/\mathbb K}$, there exists an $\mathcal O$-linear map $\tilde\alpha_{(N+1)}\colon\Omega_{\cO/\mathbb K}\to \FM_{(N+1)}\bar \odot \hat \cA^+_{N+2}$ so that the following diagram commutes:
 \[
 \begin{tikzcd}[row sep=large, column sep=large]
 	\Omega_{\cO/\mathbb K}\arrow[dashed, rr, "\exists -\tilde \alpha_{(N+1)}"] \arrow[drr, "\tilde D_{(N)}"{pos=0.4, sloped}]& & \FM_{(N+1)}\bar \odot \hat \cA^+_{N+2} \arrow[two heads, d, "\delta_{\psi}"']\\
 	\cO \arrow[u, "D"] \arrow[dotted, urr, ""] \arrow[rr, "p\circ rhs" above] & & \delta_{\psi}(\FM_{(N+1)})\bar \odot \hat \cA^+_{N+2} 
 \end{tikzcd}
 \]
 where $\tilde D_{(N)}$ is the $\cO$-linear map uniquely determined by derivation $p\circ  rhs$. We define $\alpha_{(N+1)} := \tilde \alpha_{(N+1)}\circ D$. The homological computation done through this homotopy retract procedure are restricted to $(\Omega_{\cO/\mathbb K},\, \FM_{(N+1)}\bar \odot \hat \cA^+_{N+2})$, $N\geq 0$.

 Having obtained $Q_{(N+1)}|_{\cO}$ we extend it arbitrary to a derivation $\nabla_{(N+1)}\colon \cA\to \cA$ of degree $+1$ and negative degree $N+1$.

 \noindent
 \textbf{Step 2(b): A construction of an $\cO$-linear part $L_{(N+1)}$}. The $\cO$-linear part of $Q_{(N+1)}$ is obtained similarly to Step 1(b).  We search for $Q_{(N+1)}$ in the form $\nabla_{(N+1)} + L_{(N+1)}$ so that the $\mathcal O$-linear  derivation  $L_{(N+1)}\colon \cA\to \cA$ satisfies Equation \eqref{eq:general.case} on $\cTr[\FM]_{(i)}\oplus\mathcal V_j$, $i,j
 \geq 1$, that is, 
 \begin{equation}\label{eq:N.gen.case.A}
 	\delta_{\psi}( L_{(N+1)}(a)) =  -\left(\delta_{\psi}\circ \nabla_{(N+1)} + \nabla_{(N+1)}\circ \delta_{\psi} + L_{(N+1)}\circ \delta_{\psi} + \sum\limits_{\substack{i+j=N \\ i,j\geq 0}}Q_{(i)}\circ Q_{(j)}\right) (a)
 \end{equation}
 for any $a\in \cTr[\FM]_{(i)}\oplus\mathcal V_j$, $i,j
 \geq 1$. 
 The construction of $L_{(N+1)}$, as well as the consistency check can be done as follows :
 \begin{itemize}
 	\item [$\bullet$] For $a\in \mathcal V_j$, $j\geq 1$, the right hand side of Equation \eqref{eq:N.gen.case.A} is $\sum\limits_{\substack{i+j=N \\ i,j\geq 0}}Q_{(i)}\circ Q_{(j)}(a)$. We proceed then exactly as in Step 2(a) to show that the right hand side of \eqref{eq:N.gen.case.A} is $\delta_{\psi}$-closed\footnote{Replace $f\in\cO$ by an $a\in \cV_j$, and apply the argument of Step 2(a).}, thus exact. So there are no obstructions to find an appropriate  $\cO$-linear $L_{(N+1)}|_{\cV_j}$:
 		\[
 	\begin{tikzcd}[row sep=large, column sep=large]
 		& \cA_{(N+1)} \arrow[two heads, d, "\delta_{\psi}"']\\
 		\mathcal V_j \arrow[dashed, ur, "\exists L_{(N+1)}"] \arrow[r, "rhs" above] & \delta_\psi(\cA_{(N+1)}) 
 	\end{tikzcd}
 	\]
 	\item[$\bullet$] Let us now extend $L_{(N+1)}$ to $\cTr[\FM]$. Let us fix $i \geq 1$. The case $i=1$, i.e., for $a\in \cA_{(0)}$, is done above. Let us write $R_{\leq(N+1)}\coloneqq \delta_{\psi} + \sum_{j=0}^N Q_{(j)} + R_{(N+1)}$, where $R_{(N+1)}$ is an $\cO$-linear derivation $\cA \to \cA$ which coincides with $Q_{(N+1)}$ on $\cA_{(<i)}$. Considering the negative degree $N-1$ component of the trivial identity $R_{\leq(N+1)}^2 \circ R_{\leq(N+1)} - R_{\leq(N+1)}\circ R_{\leq(N+1)}^2 = 0$ and applying it to an element $a\in \cA_{(i)}$ gives:
    $$
    (R_{\leq(N+1)}^2 \circ R_{\leq(N+1)} - R_{\leq(N+1)}\circ R_{\leq(N+1)}^2)_{(N-1)}(a) = ((R_{\leq(N+1)}^2)_{(N)}\circ \delta_{\psi} - \delta_{\psi}\circ (R_{\leq(N+1)}^2)_{(N)})(a) = 0.
    $$
    The first summand $(R_{\leq(N+1)}^2)_{(N)}\circ \delta_{\psi}(a)= (\sum_{j=-1}^{N+1} Q_{(j)})^2_{(N)}(\delta_\psi a) = 0$, by the induction assumption $(\sum_{j=-1}^{N+1} Q_{(j)})^2_{(N)}(a)=~0$ for all $a\in \cA_{(< i)}$. The second summand can be easily checked to be $$- \delta_{\psi} \circ (R_{\leq(N+1)}^2)_{(N)}(a) = -\delta_{\psi}\circ \left( (\sum\limits_{\substack{i+j=N \\ i,j\geq 0}}Q_{(i)}\circ Q_{(j)})(a) + Q_{(N+1)}\circ \delta_\psi\right)(a) = 0.$$ From here we see directly that the rhs of Equation \eqref{eq:N.gen.case.A} is $\delta_{\psi}$-closed, thus exact.  By projectivity of $\cTr[\FM]_{(i)}$, there exists an $\cO$-linear map $L_{(N+1)}$ such that
    \[
 	\begin{tikzcd}[row sep=large, column sep=large]
 		& \cA_{(N+i+1)} \arrow[two heads, d, "\delta_{\psi}"']\\
 		 \cTr[\FM]_{(i)} \arrow[dashed, ur, "\exists L_{(N+1)}"] \arrow[r, "rhs" above] & \delta_{\psi}(\cA_{(N+i+1)}) 
 	\end{tikzcd}
 	\]
 	where $rhs = -(\delta_{\psi}\circ \nabla_{(N+1)} + \nabla_{(N+1)}\circ \delta_{\psi} + L_{(N+1)}\circ \delta_{\psi} + \sum\limits_{\substack{i+j=N \\ i,j\geq 0}}Q_{(i)}\circ Q_{(j)})$. 
    
    \end{itemize}
    
    We the extend $L_{(N+1)}$ by derivation on $\cA_{(i)}$.

\noindent
 \textbf{Step 2(c): A refined description of $L_{(N+1)}$ via the homotopy retract data}. As in the case for $L_{(0)}$, it is possible to refine the required computations for $L_{(N+1)}$. It follows the exact same lines as in Step 1(c). From Equation \eqref{retration}, projectivvity of $\cTr[\FM]$ and $\cV_{\geq 1}$, as well as the exactness of $\delta_
 \psi$-cycles, there exists an $\mathcal{O}$-linear map $$\beta_{(N+1)}\colon \cTr[\FM]\oplus \cV_{\geq 1}\to \FM\bar \odot \hat\cA^+$$ of negative degree $N+1$ and of degree $+1$ so that
 
 \begin{equation}
 	\label{eq:gen.A.final.form}
 L_{(N+1)}(a) = h \circ rhs(a) - \beta_{(N+1)}(a),
 \end{equation}
 where $rhs = -(\delta_{\psi}\circ \nabla_{(N+1)} + \nabla_{(N+1)}\circ \delta_{\psi} + L_{(N+1)}\circ \delta_{\psi} + \sum\limits_{\substack{i+j=N \\ i,j\geq 0}}Q_{(i)}\circ Q_{(j)})$.
 The map $\beta_{(N+1)}$ is obtained  by projectivity of $\cTr[\FM]_{(i)}$ and $\mathcal V_j,\,\, \forall i,j\geq 0$ as in the following diagrams:
 \begin{equation}\label{eq:beta_{N+1}}
 \begin{tikzcd}[row sep=large, column sep=large]
 	& \FM_{(N+1)}\bar \odot \hat \cA^+_{N+j+1} \arrow[two heads, d, "\delta_{\psi}"']  & & & \FM_{(i+N+1)}\bar \odot \hat \cA^+_{i+N+2} \arrow[two heads, d, "\delta_{\psi}"']\\
 	\mathcal V_j \arrow[dashed, ur, "\exists -\beta_{(N+1)}"] \arrow[r, "p\circ rhs" above] & \delta_{\psi}(\FM_{(N+1)}\bar \odot \hat \cA^+_{N+j+1}) & & 	\cTr[\FM]_{(i)} \arrow[dashed, ur, "\exists -\beta_{(N+1)}"] \arrow[r, "p\circ rhs" above] & \delta_{\psi}(\FM_{(i+N+1)}\bar \odot \hat \cA^+_{i+N+2})  
 \end{tikzcd}\end{equation}
 The homological computations here are restricted to $(\mathcal V_j, \FM_{(N+1)}\bar \odot \hat \cA^+_{N+j+1} )\bigcup(	\cTr[\FM]_{(i)}, \FM_{(i+N+1)}\bar \odot \hat \cA^+_{i+N+2})$, $i,j \geq 1$.

The maps $\nabla, L$ correspond to $\nabla_\alpha, L_{\beta}$ in Definition \ref{def:arb.extension} (2), see Equations \eqref{eq:L0-refined}, \eqref{eq:Q_{(1)}f}, \eqref{eq:gen.A.final.form}. This ends the proof.
\end{proof}

\subsection{Arborescent extension of $\mathbb N$-graded $Q$-varieties over $\cO$}
\label{sec:arb,extension}

In \S~\ref{sec:2.1}, we examined the general extension problem for positively graded \( Q \)-varieties over \( \mathcal{O}/\mathcal{I} \). {Here, in contrast with the previous section, we begin with a positively graded \( Q \)-variety $(\hat\cA^+ = S(\oplus_{i\geq 1} \mathcal V_{i} ),\; \hat Q^+)$ over \( \mathcal{O} \) (that is, \( ({\hat{Q}^+})^2 = 0 \) on $\cO$)  such that 
\[
\hat Q^+(\mathcal{I}) \subseteq \mathcal{I}\hat\cA^+.
\]
Of course, the latter induces a positively graded \( Q \)-variety $(\cA^+ = S_{\cO/\cI}(\oplus_{i\geq 1} \mathcal V_{i} ),\; Q^+)$  over \( \mathcal{O}/\mathcal{I} \).  The difference in this section is that we do not need Proposition \ref{prop:Qlift} needed to ensure the existence of an $\cO$-lift as in Definition \ref{def:extensionQ^+}, since the $\cO$-lift is already provided.}

In several important examples (see \S \ref{sec:3}), we deal with positively graded varieties over \( \mathcal{O} \) that preserve the ideal \( \mathcal{I} \). In particular,  
\begin{enumerate}
    \item the universal Lie~\(\infty\)-algebroid of a singular foliation described in~\cite{LLS}, and  
    \item more generally, the universal Lie~\(\infty\)-algebroid associated with a Lie–Rinehart algebra over~\( \mathcal{O} \) \cite{CLRL},
\end{enumerate}  
provide instances of such positively graded \( Q \)-varieties. One might reasonably expect that the number of homological computations required in this simplified setting is smaller than in the general case \S \ref{sec:2.1}.

In this section, we retain the notations introduced in \S \ref{sec:2.1}. The main result of this section is the following
\begin{theorem}\label{thm:main2}

Let $\mathcal{O}$ be an associative commutative unital algebra. \begin{enumerate}
    \item Let $\cI\subset \cO$ be a proper ideal of $\cO$ and $(\hat \cA^+, \hat Q^+)$, $\hat \cA^+ = S(\oplus_{i\geq 1}\cV_i)$  a positively-graded $Q$-variety over $\cO$ such that $Q^+(\cI)\subset \cI\hat\cA^+$;
    \item and $(S(\cTr[\FM]), \delta_{\psi})$ an arborescent Koszul-Tate resolution of $\cO/\cI$ with hook map $\psi\colon \mathcal{T}ree[\FM]\to \FM$.
\end{enumerate} 
There exists an arborescent extension $(\cA, Q, {\alpha=0, \beta})$ of the following form:
\begin{itemize}
    \item[$i.$] The restriction of $Q$ to $\hat \cA^+$ is $\hat Q^+$.
    \item[$ii.$] For all $a \in \FM$, $Q(a)$ is valued in $\FM\bar \odot \hat \cA^+\oplus \hat \cA^+$. In more details,
    $$
Q_{(i)}(a) = \begin{cases} d(a), \hbox{ if $i = -1$} \\
\left({(\nabla_\alpha)}_{(0)} - \beta_{(0)}\right)(a), \hbox{ if $i=0$}\\
-\beta_{(i)}, \hbox{ if $i \geq 1$}
\end{cases}
    $$
    \item[$iii.$] On non-trivial trees $\cTr^{\geq 2}[\FM]$ $Q$ is given recursively by
    $$
Q = \r^{-1}-\r \circ \p \circ Q \circ \r^{-1} - \chi,
$$
where $\chi\colon \cTr^{\geq 2}[\FM]\to \FM\bar \odot\hat \cA^+$ is an $\cO$-linear map, whose negative degree $-1$ component is $\psi$ and  $\chi_{(\geq 0)} = \beta|_{\cTr^{\geq 2}[\FM]}$.
\end{itemize}
We shall call the extension $(\cA, Q, \alpha=0, \beta)$  described above an \emph{explicit arborescent extension} of $(\cA^-, \hat \cA^+)$ with \emph{hook} $\chi$, and denote it by $(\cA, Q_{\chi})$.
\end{theorem}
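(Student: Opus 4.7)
The plan is to run the extension algorithm of Theorem~\ref{thm:gen.case.computations} with input $(\hat{\cA}^+, \hat{Q}^+)$, observing that the hypothesis $\hat{Q}^+(\cI) \subseteq \cI\hat{\cA}^+$ provides two simplifications: the retraction residue $\alpha$ on $\cO$ can be taken to be zero, and no lift of derivations from $\cO/\cI$ to $\cO$ is required, so the projectivity hypothesis on $\Omega_{\cO/\mathbb K}$ is unnecessary. The explicit form of $Q$ is then read off by unfolding the perturbation recursion and matching it against the arborescent recursion of Definition~\ref{def:arb.KT}.

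First, set $(\nabla_\alpha)_{(0)}|_{\hat{\cA}^+} := \hat{Q}^+$ and verify inductively that $\alpha_{(N+1)} = 0$ is admissible for all $N \geq -1$. For $N = -1$, $\delta_\psi\circ\hat{Q}^+|_{\cO} = 0$ since $\hat{Q}^+(\cO) \subseteq \hat{\cA}^+$ and $\delta_\psi$ is extended $\hat{\cA}^+$-linearly. For $N \geq 0$, the inductive hypothesis $Q_{(j)}|_{\cO} = 0$ for $1 \leq j \leq N$ combined with $(\hat{Q}^+)^2 = 0$ and $Q_{(\geq 1)}|_{\hat{\cA}^+} = 0$ forces $\sum_{i+j=N}[Q_{(i)}, Q_{(j)}]|_{\cO} = 0$. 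Thus the algorithm of Theorem~\ref{thm:gen.case.computations} runs with $\alpha = 0$, producing an arborescent extension $(\cA, Q, 0, \beta)$ satisfying item~$(i)$.

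Second, I would prove item $(iii)$ by induction on the number of leaves of $t \in \cTr^{\geq 2}[\FM]$, with inductive claim
\[
Q(t) = \r^{-1}(t) - (\r\circ\p\circ Q\circ\r^{-1})(t) - \chi(t),
\]
where $\chi_{(-1)} := \psi$ and $\chi_{(k)} := \beta_{(k)}|_{\cTr^{\geq 2}[\FM]}$ for $k \geq 0$. The negative-degree-$(-1)$ layer is exactly Definition~\ref{def:arb.KT}. For $k \geq 0$, expand the formula~\eqref{eq:def.Q_N} for $Q_{(k)}(t)$: the bracket sum $\sum [Q_{(i)}, Q_{(j)}](t)$ can be rewritten, via the graded Leibniz rule and the root/forest isomorphism $\r$, as the $Q$-image of the child forest $\r^{-1}(t)$, which then composes with $\r\p$ to give precisely $(\r\p Q \r^{-1})(t)$ at negative degree $k$; the residue $-\beta_{(k)}(t)$ is renamed $-\chi_{(k)}(t)$. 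Item $(ii)$ follows because $\p$ annihilates polynomial-degree-one elements, so on trivial trees the $\r\p$-correction vanishes and the recursion collapses to the three stated cases; the free extension $(\nabla_\alpha)_{(k)}|_{\FM}$ for $k \geq 1$ is taken to be zero (admissible and absorbed into $\beta$).

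The main obstacle is the combinatorial rearrangement in the second step, namely recognizing that $\sum [Q_{(i)}, Q_{(j)}](t)$, after application of $\r\p$, equals $(\r\p Q \r^{-1})(t)$ at each negative degree. This rests on $Q$ acting as a derivation so that on the forest $\r^{-1}(t) = \theta_1 \odot \cdots \odot \theta_m$ it differentiates each child subtree separately, and on the $\hat{\cA}^+$-linearity of $\r$ and $\p$, which lets the positive-degree contributions of $\hat{Q}^+$ commute past the root/projection. Once the rearrangement is established, the construction of $\chi_{(k)}$ for $k \geq 0$ mimics that of $\psi$ in Theorem~\ref{thm:arborescentKT}: at each negative degree one solves a recursive equation analogous to~\eqref{eq:psi.recursion}, which is consistent by projectivity of $\cTr[\FM]$ and acyclicity of $\FM \bar\odot \hat{\cA}^+$ in negative degrees, the latter following from projectivity of $\hat{\cA}^+$ over $\cO$.
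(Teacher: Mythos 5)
Your overall plan---rerun the algorithm of Theorem~\ref{thm:gen.case.computations} with $\alpha=0$ and read off the recursion---is the same as the paper's, and your verification that $\alpha_{(N+1)}=0$ is admissible for all $N$ is essentially the paper's Lemma~\ref{prop:arb.ext.exists}. But there are two intertwined gaps in your treatment of items~$(ii)$--$(iii)$.

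First, you never impose the extra constraints on the degree-zero extension $(\nabla_\alpha)_{(0)}\colon\cTr[\FM]\to\cA$ that are the actual engine of the proof. The paper demands that $(\nabla_\alpha)_{(0)}(\FM)\subset\FM\bar\odot\hat\cA^+$ and that $(\nabla_\alpha)_{(0)}\circ\r\circ\p + \r\circ\p\circ(\nabla_\alpha)_{(0)}=0$, the latter being forced on nontrivial trees by the recursion $(\nabla_\alpha)_{(0)}(\r(b))=-\r\bigl((\nabla_\alpha)_{(0)}(b)\bigr)$. These conditions are what make every $Q_{(k)}$ with $k\geq 0$ an arity-zero operation, i.e., $Q_{(\geq 0)}(\cTr[\FM])\subset\cTr[\FM]\bar\odot\hat\cA^+\subset\ker\p$. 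Without them the choice of extension of $\hat Q^+$ to $\cTr[\FM]$ is too loose, and the claimed formula in item~$(iii)$ simply need not hold for $Q_{(0)}$; taking ``$(\nabla_\alpha)_{(k)}|_{\FM}=0$ for $k\geq 1$'' does not address the $k=0$ layer.

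Second, you misattribute the origin of the $\r\circ\p\circ Q\circ\r^{-1}$ term. You assert that the bracket sum $\sum_{i+j=N}[Q_{(i)},Q_{(j)}](t)$ becomes $(\r\p Q\r^{-1})(t)$ after applying $\r\p$. That is not what happens: once the arity-zero property above is in place, $\r\circ\p$ \emph{annihilates} the entire bracket sum, because each $Q_{(j)}(t)$ for $j\geq 0$ already lies in $\cTr[\FM]\bar\odot\hat\cA^+\subset\ker\p$. The term $\r\p\circ Q_{(N+1)}\circ\r^{-1}$ instead comes from $Q_{(N+1)}\circ\delta_\psi(t)$, upon expanding $\delta_\psi(t)=\r^{-1}(t)-\r\p\delta_\psi\r^{-1}(t)-\psi(t)$ and observing that the second and third summands are again killed by $\p$ after applying $Q_{(N+1)}$. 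Your appeal to the graded Leibniz rule and the differentiation of the child forest $\r^{-1}(t)$ is the right picture for this last step, but you apply it to the wrong term of the recursion. Both issues need to be fixed: impose the recursion constraint on $(\nabla_\alpha)_{(0)}$, establish the arity-zero property of $Q_{(\geq 0)}$ on $\cTr[\FM]$, and then trace the $\r\p Q\r^{-1}$ contribution to $Q_{(N+1)}\circ\delta_\psi$ rather than to the bracket sum.
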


The proof is delegated to \S \ref{sec:proof.main2}.
The adjective “explicit” in Theorem \ref{thm:main2} is clarified in the following proposition.
\begin{proposition}
\label{prop:Q.explicit.form}
An explicit arborescent extension of Theorem \ref{thm:main2} has the following form:
     \begin{align*}
    \hbox{On $\hat\cA^+\colon$ }&\quad Q_\chi = \hat Q^+. \\
     \hbox{On $\FM\colon$ }&\quad Q_\chi = \delta_\psi + \nabla_{(0)}- \beta_{(0)}. \\
    \hbox{On $\cTr^{\geq 2}[\FM]\colon$ }&\quad Q_\chi(t[a_1, \dots, a_n]) = \r^{-1}t[a_1,\dots, a_n]+ \sum_{A\in \mathrm{InVert}(t)}(-1)^{W_A}\partial_A t[a_1,\dots, a_n] \\ +\sum_{A\in \mathrm{Leaves}(t)}&(-1)^{W_A}t[a_1,\dots,Q(a_A) ,\dots, a_n] - \sum_{A\in \mathrm{InVert}(t)\cup\mathrm{Root}}(-1)^{W_A}t_{\downarrow A}[a_1,\dots,\chi({t_{\uparrow A}}(a_A)) ,\dots, a_n]. 
    \end{align*} 
    Here, we used notations from Convention \ref{conv:uptree} and Proposition \ref{prop:KT.explicit.form}.
\end{proposition}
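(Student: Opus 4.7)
The first assertion, $Q_\chi|_{\hat\cA^+}=\hat Q^+$, is exactly item (i) of Theorem \ref{thm:main2}. The second, $Q_\chi|_{\FM} = \delta_\psi + \nabla_{(0)} - \beta_{(0)}$, follows by collecting the negative-degree components prescribed in item (ii): on trivial trees the negative-degree $-1$ component agrees with $d = \delta_\psi$, while the remaining components are packaged into $\nabla_{(0)} - \beta_{(0)}$ in the notation of Definition \ref{def:arb.extension}.

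For the tree formula on $\cTr^{\geq 2}[\FM]$, the plan is to unwind the recursion
$$Q_\chi = \r^{-1} - \r\circ\p\circ Q_\chi\circ \r^{-1} - \chi$$
of Theorem \ref{thm:main2}(iii) by induction on the number of inner vertices of $t[a_1,\dots,a_n]$. In the base case (only the root is an inner vertex), $\r^{-1}(t) = a_1\odot\cdots\odot a_n$; the graded Leibniz rule expands $Q_\chi(a_1\odot\cdots\odot a_n)$ into the signed sum $\sum_i (-1)^{W_{A_i}} a_1\odot\cdots\odot Q_\chi(a_i)\odot\cdots\odot a_n$ of leaf-derivation terms, and applying $\r\circ\p$ re-roots the polynomial-degree $\geq 2$ contributions into the leaf sum of the statement. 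The term $-\chi(t)$ accounts for the unique (root) contribution of the inner-vertex $\chi$-sum, and there are no non-root inner vertices to contribute $\partial_A$-terms.

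For the inductive step, decompose $\r^{-1}(t) = \theta_1\odot\cdots\odot\theta_k$ as the symmetric product of the root's child subtrees, and expand
$$Q_\chi(\r^{-1}(t)) = \sum_{j=1}^{k} (-1)^{\sigma_j}\, \theta_1\odot\cdots\odot Q_\chi(\theta_j)\odot\cdots\odot\theta_k,$$
where $\sigma_j$ is the Koszul sign for moving $Q_\chi$ past $\theta_1,\dots,\theta_{j-1}$. Apply the induction hypothesis to each $Q_\chi(\theta_j)$: this yields a root-splitting term $\r^{-1}(\theta_j)$, inner-vertex contractions $(-1)^{W_A^{\theta_j}}\partial_A\theta_j$, leaf-derivation terms $(-1)^{W_A^{\theta_j}}\theta_j[\dots,Q_\chi(a_A),\dots]$, and hook terms $-(-1)^{W_A^{\theta_j}}(\theta_j)_{\downarrow A}[\dots,\chi((\theta_j)_{\uparrow A}),\dots]$. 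Re-rooting via $\r\circ\p$ reinserts these back into $t$; the root-splitting term $\r^{-1}(\theta_j)$ together with the prefactor $\r$ produces a contraction $\partial_{A_j} t$ where $A_j$ is the root of $\theta_j$ (i.e.\ a child of $t$'s root), while the remaining terms extend to the corresponding global sums over $\mathrm{InVert}(t)$ and $\mathrm{Leaves}(t)$. Adding the outer $\r^{-1}(t)$ and the final $-\chi(t)$ (which supplies the root contribution to the last sum) recovers exactly the three sums of the statement.

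The main obstacle is the Koszul-sign bookkeeping: one must verify that the Leibniz signs $\sigma_j$ composed with the internal weights $W_A^{\theta_j}$ reassemble into the global weight $W_A^{t}$. This reduces to the combinatorial identity $W_A^{t} = \sigma_j + W_A^{\theta_j} - 1$ for every vertex $A$ lying in $\theta_j$, which holds because the root-to-$A$ path in $t$ extends its counterpart in $\theta_j$ by exactly one edge (accounting for the $-1$) and acquires, at the newly traversed root of $t$, the degrees of the left-sibling subtrees $\theta_1,\dots,\theta_{j-1}$ (accounting for $\sigma_j$). Once this additivity of weights along the path is established, the three sums match term-by-term with the formula in the statement. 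This parallels the proof of Proposition \ref{prop:KT.explicit.form}, to which the present argument specializes upon replacing $\chi$ by $\psi$ and $Q_\chi$ on leaves by $d$.
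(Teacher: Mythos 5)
Your proof is correct and takes essentially the same approach the paper has in mind. The paper's own proof is a single sentence deferring to Proposition~\ref{prop:KT.explicit.form} (and, through it, to Proposition~2.22 of \cite{hancharuk2024}), on the grounds that the recursion in Theorem~\ref{thm:main2}(iii) is formally identical to the one in Definition~\ref{def:arb.KT}, with $\chi$ in place of $\psi$ and $Q_\chi|_{\FM}$ in place of $d$. What you have done is reconstruct, in detail, the induction on the number of inner vertices that underlies that cited result, including the weight identity $W_A^{t} = W_A^{\theta_j} - 1 + |\theta_1|+\cdots+|\theta_{j-1}|$ that ties the Koszul sign from the graded Leibniz rule to the weight $W_A$ in the global tree; I checked this identity and the overall sign bookkeeping, and they are correct.

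One small presentational caveat: in the inductive step you write ``apply the induction hypothesis to each $Q_\chi(\theta_j)$,'' but the IH (and the formula $\r^{-1}(\theta_j)$) only applies when $\theta_j\in\cTr^{\ge 2}[\FM]$; for a trivial child subtree $\theta_j = a_j\in\FM$ one instead invokes Theorem~\ref{thm:main2}(ii), which directly supplies the corresponding leaf term $(-1)^{W_{a_j}} t[a_1,\dots,Q(a_j),\dots,a_n]$ after re-rooting. You implicitly do this (the base case handles exactly these trivial children), but the inductive step should distinguish the two cases explicitly. This is a matter of exposition rather than a gap: once separated, the argument goes through and matches the statement term by term.
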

\begin{proof}
    The proof is a consequence of the Proposition \ref{prop:KT.explicit.form}, which follows from  \cite[Proposition 2.22]{hancharuk2024}.
\end{proof}

Notice the similarity of the explicit arborescent extension and the arborescent Koszul-Tate resolution in Definition \ref{def:arb.KT}. It turns out that there is an analogous statement on the existence of homotopy retract data as in  \cite[Proposition 3.1]{hancharuk2024}. More precisely,

\begin{proposition}

\label{prop:Homotopy}
Let $(\cA, Q_\chi)$ be an explicit arborescent extension  of a positively graded $Q$-variety $(\hat\cA^+, \hat Q^+)$ over $\cO$. Consider the complex $(\FM\bar \odot \hat \cA^+\oplus \hat \cA^+, \hat Q'_\chi)$ where $ Q'_{\chi}$ is a restriction $(Q_\chi)|_{\FM \bar \odot \hat \cA^+\oplus \hat \cA^+}$. The $\cO $-linear maps
$$
\mathrm{Incl}\colon (\FM\bar \odot \hat \cA^+) \oplus \hat \cA^+\longrightarrow \cA, \quad (a, b) \mapsto a+b
$$
and
$$\mathrm{Proj}\colon \cA \longrightarrow (\FM\bar \odot \hat \cA^+)\oplus \hat \cA^+, \quad a \mapsto \chi\circ \r\circ \p(a) + \mathrm{p}_{\FM,\hat \cA^+}(a), $$
are chain maps. Moreover, they are homotopy inverse one to the other. More precisely,
 $$ \left\{ \begin{array}{rcl} {\mathrm{Proj}} \circ {\mathrm{Incl}} &=& {\mathrm{Id}} \\ 
{\mathrm{Incl}} \circ {\mathrm{Proj}}  &= & {\mathrm {Id}} - \left(h \circ Q_\chi + Q_\chi \circ h \right)  \end{array}\right.$$
where the homotopy $h$ is given by $ h:= \r \circ \p 
$. Here, $\mathrm{p}_{\FM, \hat\cA^+}$ is the projection of $\cA$ onto $(\FM \bar \odot\hat \cA^+)\oplus \hat \cA^+$. Also, 
 \begin{equation}\label{eq:additional} h^2 =0 \, , \, h \circ {\mathrm{Incl}} =0 \, , \, {\mathrm{Proj}} \circ h =0. \end{equation}
Conditions \eqref{eq:additional} are called \emph{side relations} in \cite{zbMATH03080175}.
 \end{proposition}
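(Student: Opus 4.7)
The proof plan is to parallel the homotopy retract argument for the arborescent Koszul--Tate resolution itself (Theorem \ref{thm:arborescentKT} and \cite[Proposition 3.1]{hancharuk2024}), with $Q_\chi$ replacing $\delta_\psi$ and the hook $\chi$ replacing $\psi$. The recursive identity $Q_\chi=\r^{-1}-\r\circ\p\circ Q_\chi\circ\r^{-1}-\chi$ on $\cTr^{\geq 2}[\FM]$ (Theorem \ref{thm:main2}, item $iii$) is precisely what will force a homotopy retract of the claimed form.

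First I would decompose $\cA=\cA^-\bar\odot\hat\cA^+$ by polynomial degree in $\cTr[\FM]$, writing
$$\cA=\hat\cA^+\;\oplus\;\bigl(\FM\bar\odot\hat\cA^+\bigr)\;\oplus\;\bigl(\cTr^{\geq 2}[\FM]\bar\odot\hat\cA^+\bigr)\;\oplus\;\bigl(S^{\geq 2}(\cTr[\FM])\bar\odot\hat\cA^+\bigr),$$
and dispose of $\mathrm{Proj}\circ\mathrm{Incl}=\mathrm{Id}$ together with the three side relations $h^2=0$, $h\circ\mathrm{Incl}=0$, $\mathrm{Proj}\circ h=0$ directly: $\p$ kills everything of tree-polynomial degree $\le 1$, while $h=\r\circ\p$ lands in $\cTr^{\geq 2}[\FM]\bar\odot\hat\cA^+$, which again has tree-polynomial degree one and sits outside the image of $\mathrm{Incl}$.

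The main step is the verification of $\mathrm{Id}-\mathrm{Incl}\circ\mathrm{Proj}=h\circ Q_\chi+Q_\chi\circ h$ on each of the four summands above. On $\hat\cA^+$ and on $\FM\bar\odot\hat\cA^+$ both sides vanish: items $i$ and $ii$ of Theorem \ref{thm:main2} guarantee $Q_\chi$ sends these summands into $\FM\bar\odot\hat\cA^+\oplus\hat\cA^+$, which is killed by $h$. On $\cTr^{\geq 2}[\FM]\bar\odot\hat\cA^+$, for $t\bar\odot c$ only the piece $\r^{-1}(t)$ of $Q_\chi(t)$ survives the projection $\p$, and the recursion yields $h\circ Q_\chi(t\bar\odot c)=t\bar\odot c$, matching $\mathrm{Id}-\mathrm{Incl}\circ\mathrm{Proj}$ there. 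On $S^{\geq 2}(\cTr[\FM])\bar\odot\hat\cA^+$, for $s\bar\odot c$, a Leibniz expansion gives
$$(h\circ Q_\chi+Q_\chi\circ h)(s\bar\odot c)=\bigl(\r\circ\p\circ Q_\chi(s)+Q_\chi(\r(s))\bigr)\bar\odot c+\bigl((-1)^{|s|}+(-1)^{|s|-1}\bigr)\r(s)\bar\odot\hat Q^+(c),$$
the Koszul signs in the second term cancelling; substituting the recursion $Q_\chi(\r(s))=s-\r\circ\p\circ Q_\chi(s)-\chi(\r(s))$ collapses this to $(s-\chi(\r(s)))\bar\odot c$, exactly $(\mathrm{Id}-\mathrm{Incl}\circ\mathrm{Proj})(s\bar\odot c)$.

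With the homotopy identity and the side relations in hand, the chain-map property is formal. By Theorem \ref{thm:main2}$(i)$--$(ii)$ and graded Leibniz, $Q_\chi$ preserves $\FM\bar\odot\hat\cA^+\oplus\hat\cA^+$, so $Q'_\chi$ is literally $Q_\chi$ restricted; moreover $h\circ Q_\chi\circ\mathrm{Incl}=0$, whence $(\mathrm{Id}-\mathrm{Incl}\circ\mathrm{Proj})\circ Q_\chi\circ\mathrm{Incl}=Q_\chi\circ h\circ Q_\chi\circ\mathrm{Incl}=0$ using $Q_\chi^2=0$, proving $\mathrm{Incl}$ is a chain map. Dually, $\mathrm{Proj}\circ Q_\chi\circ h=\mathrm{Proj}-\mathrm{Proj}\circ\mathrm{Incl}\circ\mathrm{Proj}-\mathrm{Proj}\circ h\circ Q_\chi=0$ by the side relations, giving the chain-map property of $\mathrm{Proj}$. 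The main obstacle throughout will be the bookkeeping of Koszul signs in the Leibniz rule for $Q_\chi$ on tensor products $s\bar\odot c$ and the compatibility of the $\hat\cA^+$-linear extensions of $\r,\p,\chi$ with the product $\bar\odot$; once these are pinned down, every remaining step is either direct case analysis or formal diagram chasing.
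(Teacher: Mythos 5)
Your proof is correct, but it organizes the computation differently from the paper. The paper splits the operator $\mathrm{Incl}\circ\mathrm{Proj}$ by \emph{negative degree}: its component of negative degree $0$ is $\mathrm{p}_{\FM,\hat\cA^+}+\psi\circ\r\circ\p$, which equals $\mathrm{Id}-(h\delta_\psi+\delta_\psi h)$ by the already-established Koszul--Tate retract \eqref{eq:hom.retract0} (extended $\hat\cA^+$-linearly); the paper then handles the negative-degree-$\geq 0$ components by showing $h\circ(Q_\chi)_{(\geq 0)}+(Q_\chi)_{(\geq 0)}\circ h=-\chi_{(\geq 0)}\circ\r\circ\p$, again via the recursion of Theorem \ref{thm:main2}$(iii)$. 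You instead decompose $\cA$ by $\cTr[\FM]$-polynomial degree (and by triviality of the tree) into the four $\hat\cA^+$-summands $\hat\cA^+$, $\FM\bar\odot\hat\cA^+$, $\cTr^{\geq 2}[\FM]\bar\odot\hat\cA^+$, $S^{\geq 2}(\cTr[\FM])\bar\odot\hat\cA^+$, and verify the homotopy identity on each. Your case analysis is fully explicit and does not presuppose the Koszul--Tate retract as a black box, which some readers may find more transparent; the paper's route is shorter because it recycles \eqref{eq:hom.retract0} wholesale. Both hinge on exactly the same recursion $Q_\chi=\r^{-1}-\r\circ\p\circ Q_\chi\circ\r^{-1}-\chi$ in the $S^{\geq 2}$ case, and on the Koszul-sign cancellation $(-1)^{|s|}+(-1)^{|s|-1}=0$ in the Leibniz expansion, which you identify correctly.

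One small remark on the final step: your statement ``$\mathrm{Proj}\circ Q_\chi\circ h=0\ldots$ giving the chain-map property of $\mathrm{Proj}$'' compresses a short derivation. To go from this to $Q'_\chi\circ\mathrm{Proj}=\mathrm{Proj}\circ Q_\chi$, write $\mathrm{Proj}\circ Q_\chi=\mathrm{Proj}\circ Q_\chi\circ(\mathrm{Incl}\circ\mathrm{Proj}+hQ_\chi+Q_\chi h)$, kill the $Q_\chi^2$ term, use $\mathrm{Proj}\circ Q_\chi\circ\mathrm{Incl}=Q'_\chi$ and the identity you proved. The paper takes the shorter route of composing with the injective chain map $\mathrm{Incl}$ and using only $Q_\chi^2=0$ and the homotopy identity; that avoids invoking the side relations at this stage. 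Either argument is valid, but yours would benefit from spelling out the intermediate identity explicitly.
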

 \begin{proof}
 The proof is done by a direct verification. 
 The restriction map $Q'_{\chi}\colon \,\,\FM\bar \odot \hat \cA^+ \oplus \hat \cA^+ \to \FM\bar \odot \hat \cA^+ \oplus \hat \cA^+$ is well-defined, by Theorem \ref{thm:main2}$(i-ii)$. The inclusion map is clearly a chain map, that is \[\mathrm{Incl}\circ Q'_{\chi} = Q_\chi \circ \mathrm{Incl}.\] 

 The side relations \eqref{eq:additional} are clearly satisfied by the definition of the maps $\mathrm{Incl}$, $\mathrm{Proj}$, $h$. We have  $\mathrm{Proj}\circ \mathrm{Incl} = \mathrm{p}_{\FM, \hat\cA^+}\circ  \mathrm{Incl} = \mathrm{Id}$. Let us consider the negative degree expansion of the composition $\mathrm{Incl}\circ \mathrm{Proj}$: the negative degree zero component reads $$(\mathrm{Incl}\circ \mathrm{Proj})_{(0)} = \mathrm{Incl}\circ \mathrm{p}_{\FM, \cA^+} +\mathrm{Incl}\circ \psi\circ \r\circ \p .$$  By Equation \eqref{eq:hom.retract0}, this expression is equal to $\mathrm{Id} - (h\circ \delta_{\psi} + \delta_{\psi} \circ h)$ with $h,\delta_{\psi}$ being $\hat \cA^+$-linear. Since the components $(Q_\chi)_{(\geq 0)}$ map $\cTr[\FM]$ to $\cTr[\FM]\bar \odot\hat \cA^+$, we therefore deduce the following:
 \begin{align*}
     h\circ (Q_\chi)_{(\geq 0)} + (Q_\chi)_{(\geq 0)}\circ h &= h\circ (Q_\chi)_{(\geq 0)}\circ (\mathrm{Id}- \p + \p) -\underbrace{r\circ \p}_{h} \circ (Q_
 \chi)_{(\geq 0)}\circ \p - \chi_{(\geq 0)}\circ \r\circ \p \\&= - \chi_{(\geq 0)}\circ \r\circ \p.
 \end{align*}
 Combining the expressions in negative degree $0$ and $\geq 1$ we obtain the desired relation \[{\mathrm{Incl}} \circ {\mathrm{Proj}}  = {\mathrm {Id}} - \left(h \circ Q_\chi + Q_\chi \circ h \right)\] It remains to check that $\mathrm{Proj}$ is a chain map. This follows directly from $\mathrm{Incl}$ being an injective chain map, as well as the established identities:
 \begin{align*}
\mathrm{Incl}\circ (Q'_\chi \circ \mathrm{Proj} - \mathrm{Proj}\circ Q_\chi)& = Q_\chi\circ \mathrm{Incl}\circ \mathrm{Proj} - \mathrm{Incl}\circ \mathrm{Proj} \circ Q_\chi    \\
&= Q_\chi\circ (\mathrm{Id} - \left(h \circ Q_\chi + Q_\chi \circ h \right)) - (\mathrm{Id} - \left(h \circ Q_\chi + Q_\chi \circ h \right))\circ Q_\chi = 0.
 \end{align*}
 \end{proof}

\subsubsection{Applications and results in the geometric setting} Theorem \ref{thm:main2} together with those of Laurent-Gengoux, Lavau, and Strobl \cite{LLS}, as well as \cite{CLRL}, yields the following.

\begin{theorem}\label{main:smooth}
The universal positively graded manifold (or variety) $(E= (E_{-i})_{i\in \mathbb N}, \hat Q^+)$ associated with a singular foliation or Lie--Rinehart algebra preserving the ideal $\mathcal{I}$, as constructed in \cite{LLS} and \cite{CLRL}, extends to a $\mathbb{Z}$-graded manifold (or variety) $((E_{i})_{i\in \mathbb{Z}^{\times}}, {Q})$ over $\mathcal{O}$.\begin{enumerate}
    \item This extension can be chosen to be an explicit arborescent extension.
    \item  Moreover, the homological vector field ${Q}$ is unique up to diffeomorphism.
\end{enumerate}
\end{theorem}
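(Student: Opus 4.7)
The plan is to settle item~(1) by a direct application of Theorem~\ref{thm:main2}, and to handle the uniqueness statement in item~(2) by a homological perturbation argument: the intertwining diffeomorphism between two extensions is built as the exponential of a degree-zero derivation, solved order by order in the negative filtration.

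For item~(1), the universal positively graded $Q$-variety $(\hat\cA^+, \hat Q^+)$ of a singular foliation (resp.\ Lie--Rinehart algebra) preserving $\cI$ is a positively graded $Q$-variety over $\cO$ satisfying $\hat Q^+(\cI) \subseteq \cI\hat\cA^+$, which is exactly the hypothesis of Theorem~\ref{thm:main2}. Fixing any arborescent Koszul--Tate resolution $(S(\cTr[\FM]), \delta_\psi)$ of $\cO/\cI$ (which exists by Theorem~\ref{thm:arborescentKT}), Theorem~\ref{thm:main2} produces an explicit arborescent extension $(\cA, Q_\chi)$ of $(\cA^-, \hat\cA^+)$, settling~(1) at once.

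For item~(2), I take two $\mathbb Z$-graded extensions $(\cA, Q)$ and $(\cA, Q')$ sharing the same negative part $\delta_\psi$ and the same positive part $\hat Q^+$, and look for a degree-zero derivation $X \in \mathrm{Der}(\cA)$ that strictly raises the negative filtration and vanishes on $\hat\cA^+$, such that $\Phi := \exp(\mathrm{ad}_X)$ satisfies $\Phi(Q) = Q'$. The exponential converges on the degreewise completion $\cA$ thanks to Remark~\ref{rmk:completion}(4), since $\mathrm{ad}_X$ strictly raises the negative filtration. Decomposing $X = \sum_{k\geq 1} X_{(k)}$ by negative degree, the identity $\Phi(Q) = Q'$ expands into an infinite system whose $(k-1)$-st negative-degree component has the form
\[
[\delta_\psi, X_{(k)}] \;=\; C_{(k-1)},
\]
where $C_{(k-1)}$ is a derivation depending only on $X_{(<k)}$, $Q_{(\leq k)}$ and $Q'_{(\leq k)}$. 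I would construct the $X_{(k)}$ recursively: the $\delta_\psi$-closedness of $C_{(k-1)}$ follows from $Q^2 = (Q')^2 = 0$ together with the recursion hypothesis, via the standard trick of peeling a $\delta_\psi$ off both sides of a $[\,\cdot\,,\cdot\,]$ identity used repeatedly in \S\ref{sec:result1}; $\delta_\psi$-exactness on the generators $\cV_j$ and $\cTr[\FM]_{(i)}$ is then produced using the acyclicity of $(\cA, \delta_\psi)$ on $\cA_{(\geq 1)}$ together with projectivity of these generators, exactly as in the proof of Theorem~\ref{thm:gen.case.computations}.

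The anticipated main obstacle is twofold. First, one must preserve the normalization $X|_{\hat\cA^+} = 0$ throughout the recursion; this is automatic because $Q$ and $Q'$ already agree on $\hat\cA^+$ and $X_{(<k)}$ vanishes there by induction, forcing $C_{(k-1)}|_{\hat\cA^+} = 0$ and allowing $X_{(k)}$ to be chosen accordingly. Second, and more delicate, one must verify the $\delta_\psi$-exactness of the obstruction $C_{(k-1)}$ in the full $\mathbb Z$-graded complex $(\cA, \delta_\psi)$ rather than merely in its negative part; this is the step where projectivity of $\hat\cA^+$ as an $\cO$-module, hence $\mathrm{Tor}^\bullet_\cO(\cO/\cI, \hat\cA^+) = 0$, is invoked to transfer acyclicity of $(\cA^-, \delta_\psi)$ to $(\cA, \delta_\psi)$, mirroring Step~1 of the proof of Theorem~\ref{thm:gen.case.computations}. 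Once these two points are secured, $\Phi = \exp(\mathrm{ad}_X)$ is a well-defined diffeomorphism of the $\mathbb Z$-graded variety intertwining $Q$ and $Q'$, which yields the claimed uniqueness up to diffeomorphism.
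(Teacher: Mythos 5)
Your item~(1) coincides with the paper's own argument: the universal positively graded $Q$-variety attached to the singular foliation or Lie--Rinehart algebra (Theorem~\ref{thm;Universal}) preserves $\mathcal I$, so Theorem~\ref{thm:main2} applies, together with the existence of an arborescent Koszul--Tate resolution from Theorem~\ref{thm:arborescentKT}; the only ingredient you leave implicit, which the paper states, is the choice of a projective/geometric resolution of the Lie--Rinehart algebra identifying $\mathcal V_i=\Gamma(E_{-i}^*)$, so that the extension really lives on a $\mathbb Z$-graded manifold/variety. For item~(2), however, the paper does not give an argument at all: its proof ends with the appeal to Theorem~\ref{thm:main2}, and the uniqueness up to diffeomorphism is inherited from \cite[Theorem 3.21]{KOTOV2023104908} (in the smooth setting). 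So your exponential-of-a-derivation scheme is additional content rather than a reproduction of the paper's proof, and it needs two repairs before it establishes the claim.

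First, you compare two extensions that literally restrict to $\hat Q^+$ on $\hat{\mathcal A}^+$ and you normalize $X|_{\hat{\mathcal A}^+}=0$. For a general $\mathbb Z$-graded extension in the sense of Definition~\ref{def:arb.extension}(1), the restriction of $Q$ to $\hat{\mathcal A}^+$ is only required to induce $Q^+$ after modding out $\mathcal I_-+Q[\mathcal I_-]$, so it may differ from $\hat Q^+$ by terms in $\mathcal I\hat{\mathcal A}^+$ and in $\mathcal I_-$; your normalization therefore covers the explicit arborescent extensions of Theorem~\ref{thm:main2}(i), but not the full uniqueness statement, unless you either let $X$ act nontrivially on $\hat{\mathcal A}^+$ or first reduce to the split case. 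Second, and more seriously, at the initial step $k=1$ the obstruction $C_{(0)}$ evaluated on generators of negative degree zero ($f\in\mathcal O$ and $a\in\mathcal V_j$) lands in negative degree $0$, where $(\mathcal A,\delta_\psi)$ is \emph{not} acyclic (the homology there is $S_{\mathcal O/\mathcal I}(\oplus_{i\geq1}\mathcal V_i)$), so ``acyclicity on $\mathcal A_{(\geq1)}$'' does not produce exactness. What saves the argument is the specific fact that $(Q'-Q)_{(0)}$ on $\mathcal O$ and on $\mathcal V_j$ takes values in $\mathcal I\hat{\mathcal A}^+=\delta_\psi\bigl(\mathfrak M_{(1)}\bar\odot\hat{\mathcal A}^+\bigr)$, precisely because both extensions induce the same positive component after the quotient; this is the same mechanism as Step~1(b) of the proof of Theorem~\ref{thm:gen.case.computations} and must be invoked explicitly at the base of your recursion, after which the higher steps do follow from acyclicity in negative filtration degrees together with projectivity of $\hat{\mathcal A}^+$ as you indicate.
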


\begin{proof}
Let  $(\mathfrak A, [\cdot, \cdot]_\mathfrak A, \rho_\mathfrak A)$ be a Lie-Rinehart algebra over $\mathcal O$ that preserves an ideal $\cI\subset \cO$ . Assume there exists a projective resolution of $\mathfrak A$ by sections of vector bundles \begin{equation}\label{eq:geo_res}
    \cdots \stackrel{\dd} \longrightarrow\mathcal P_{-3} \stackrel{\dd}{\longrightarrow} \mathcal P_{-2} \stackrel{\dd}{\longrightarrow} \mathcal P_{-1} \stackrel{\pi}{\longrightarrow} \mathfrak A
\end{equation} that is, $\mathcal{P}_{-i}\simeq \Gamma(E_{-i})$ for some vector bundle $E_{-i}\to M$ for $i\geq 1$.  By Thereom \ref{thm;Universal}, the resolution \eqref{eq:geo_res} is endowed with a unique (up to homotopy) Lie $\infty$-algebroid over $\mathcal{O}$, whose $1$-ary bracket is $\dd$ and the anchor map is $\rho_\mathfrak{A}\circ\pi$.  The latter corresponds to a positively graded $Q$-variety $(S(\oplus_{i\geq 1} \cV_i), \hat Q^+)$ over $\mathcal{O}$ that preserves $\mathcal I$. Here, $\cV_i = \Gamma(E^*_{-i})$. The result follows from Theorem \ref{thm:main2}.
\end{proof}

In particular,  when $\mathcal{O}$ is a Noetherian ring, \eqref{eq:geo_res} exists, e.g.,  $\mathcal{O}=\mathbb C[x_1,\ldots, x_d]$.  We have the following result
\begin{corollary}\label{cor:affine} Let $\mathcal{O}=\mathbb C[x_1,\ldots,x_n]$ and $W\subset \mathbb C^n$ be an affine variety given by an ideal $\mathcal{I}_W\subset \mathcal{O}$. Let $\mathfrak X_W\subset \mathrm{Der}(\mathcal O)$ be the Lie-Rinehart algebra generated by derivations of $\mathcal{O}$  that preserve $\mathcal{I}_W$. There exists a $\mathbb{Z}$-graded  variety over $\mathcal{O}$ which is an explicit arborescent extension of the restriction of a universal $NQ$-variety of the Lie-Rinehart algebra $\mathfrak X_W$ on $W$.

\end{corollary}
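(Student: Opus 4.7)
The plan is to deduce the statement directly from Theorem \ref{main:smooth} by verifying its hypotheses in the concrete affine algebraic setting. First I would check the structural requirements on $\mathfrak{X}_W$: by definition it is a Lie--Rinehart subalgebra of $\mathrm{Der}(\cO)$ consisting of derivations that preserve $\cI_W$, so $\cI_W$ is a Lie--Rinehart ideal for $\mathfrak{X}_W$. Since $\cO=\mathbb{C}[x_1,\ldots,x_n]$ is Noetherian and $\mathrm{Der}(\cO)=\bigoplus_{i=1}^{n}\cO\,\partial/\partial x_i$ is a finitely generated free $\cO$-module, its submodule $\mathfrak{X}_W$ is finitely generated. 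By the Hilbert Syzygy theorem, $\mathfrak{X}_W$ admits a free resolution of finite length $\cdots\to \mathcal{P}_{-2}\to \mathcal{P}_{-1}\to \mathfrak{X}_W$ with each $\mathcal{P}_{-i}$ finitely generated.

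Next I would build the two pieces feeding into the extension. Applying Theorem \ref{thm;Universal} together with Proposition \ref{thm;Universal2} to such a resolution of $\mathfrak{X}_W$ yields the universal Lie $\infty$-algebroid $\mathbb{U}_{\mathfrak{X}_W}$, which dualizes to a positively graded $Q$-variety $(\hat{\cA}^+, \hat{Q}^+)$ over $\cO$. The fact that the anchor $\rho_{\mathfrak{X}_W}\circ \pi$ takes values in $\mathfrak{X}_W$ forces the degree-$0$ component of $\hat{Q}^+|_{\cO}$ to preserve $\cI_W$, while the higher-negative-degree components of $\hat{Q}^+$ are $\cO$-linear and raise the positive degree, hence send $\cI_W$ into $\cI_W\hat{\cA}^+$ tautologically. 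Thus $\hat{Q}^+(\cI_W)\subseteq \cI_W\hat{\cA}^+$. In parallel, Hilbert's syzygy theorem also furnishes a finite projective resolution $(\FM, d)$ of $\cO/\cI_W$ with finitely generated terms, so by Theorem \ref{thm:arborescentKT} there exists an associated arborescent Koszul--Tate resolution $(S(\cTr[\FM]),\delta_\psi)$ of $\cO/\cI_W=\mathbb{C}[W]$.

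I would then apply Theorem \ref{thm:main2} (equivalently, the first assertion of Theorem \ref{main:smooth}) with input $(\hat{\cA}^+,\hat{Q}^+)$ and $(S(\cTr[\FM]),\delta_\psi)$, obtaining an explicit arborescent extension $(\cA, Q_\chi)$ whose positive component is $(\hat{\cA}^+,\hat{Q}^+)$ and whose negative component is the arborescent Koszul--Tate resolution of $\cO/\cI_W$. Passage modulo $\cI_W$ on the positive part recovers the restriction to $W$ of the universal $NQ$-variety of $\mathfrak{X}_W$, which is precisely what the corollary asks for.

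The only non-routine point, rather than a genuine obstacle, is the verification that $\hat{Q}^+$ preserves $\cI_W\hat{\cA}^+$ so that the simpler Theorem \ref{thm:main2} applies directly (rather than falling back on Theorem \ref{thm:gen.case.computations}, which would additionally require projectivity of the K\"ahler module $\Omega_{\cO/\mathbb{C}}$; that projectivity does hold here for $\cO=\mathbb{C}[x_1,\ldots,x_n]$, providing a safety net). The preservation is in any case immediate from the identification of $\hat{Q}^+|_{\cO}$ with the anchor of $\mathfrak{X}_W$, and with it the corollary follows.
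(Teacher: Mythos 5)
Your proposal is correct and follows essentially the same route as the paper: the corollary is obtained as a direct specialization of Theorem \ref{main:smooth} (hence of Theorem \ref{thm:main2}), with Noetherianity and the Hilbert Syzygy theorem supplying finite free resolutions of $\mathfrak X_W$ and of $\cO/\cI_W$, the universal Lie $\infty$-algebroid of Theorem \ref{thm;Universal} providing the positively graded $Q$-variety over $\cO$ that preserves $\cI_W$, and the arborescent Koszul--Tate resolution serving as the negative part. Your extra verification that $\hat Q^+(\cI_W)\subseteq \cI_W\hat\cA^+$ via the anchor being valued in $\mathfrak X_W$ is exactly the point the paper uses implicitly in the proof of Theorem \ref{main:smooth}.
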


\begin{remark}
Notice that in Corollary \ref{cor:affine}, one can consider a Lie–Rinehart algebra over the quotient $\mathbb C[x_1,\ldots,x_n]/\mathcal I_W$. Moreover, Theorem \ref{thm;Universal} still applies, and we obtain a positively graded $Q$-variety over 
$\mathbb C[x_1,\ldots,x_n]/\mathcal I_W$. In this case, we have an arborescent extension $(\cA, Q, \alpha, \beta)$ as in Theorem \ref{thm:gen.case.computations}, not necessarily an explicit one with $\alpha=0$.
\end{remark}

\subsection{Proof of Theorem \ref{thm:main2}}
\label{sec:proof.main2}
We start with the following lemma
\begin{lemma}
	\label{prop:arb.ext.exists}
	Let $(\cA^-, Q^-)$ be an arborescent Koszul-Tate resolution $(S(\cTr[\FM]), \delta_{\psi})$ of $\cO/\cI$, and let $(\hat \cA^+, \hat Q^+)$ be a positively graded $Q$-variety over $\cO$. \begin{enumerate}
	    \item There exists an arborescent extension of $(\cA^-, \hat \cA^+)$ of the form $(\cA, Q, \alpha = 0,\beta)$, with $\beta|_{\cV_{\geq 1}}=0$.
    \item {The negative degree components $(Q_{(i)})_{i\neq 0}$ are $\hat \cA^+$-linear.}
    \item Moreover, the homological computations are restricted to
	$\lbrace(\cTr[\FM]_{(i)},\;\FM_{(i+j-1)}\bar \odot \hat \cA^+_{j} )| \; i,j \geq 1 \rbrace$.
	\end{enumerate}
\end{lemma}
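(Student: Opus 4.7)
The approach I would take is to adapt the three-stage inductive construction in the proof of Theorem~\ref{thm:gen.case.computations}, while exploiting the stronger hypothesis that $\hat Q^+$ is already a square-zero derivation on \emph{all} of $\hat\cA^+$ (not only on $\cO/\cI$). This will let me strengthen the recursion so that every component $Q_{(i)}$ with $i\neq 0$ vanishes on $\hat\cA^+$, simultaneously producing $\alpha=0$, $\beta|_{\cV_{\geq 1}}=0$, and the $\hat\cA^+$-linearity claimed in (2). For initialization I would set $Q_{(-1)}:=\delta_\psi$ (extended $\hat\cA^+$-linearly) and $Q_{(0)}|_{\hat\cA^+}:=\hat Q^+$; then, as in Step~1(c) of the proof of Theorem~\ref{thm:gen.case.computations}, extend $\hat Q^+|_{\cO}$ to a $\mathbb K$-linear derivation $\nabla_{(0)}$ on $\cA$ and correct by an $\hat\cA^+$-linear $L_{(0)}=h\circ rhs-\beta_{(0)}$ produced by the homotopy retract, with $\beta_{(0)}\colon\cTr[\FM]\to\FM\bar\odot\hat\cA^+_{\,1}$. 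No correction $\alpha_{(0)}$ is required because $\hat Q^+|_{\cO}$ is already given, and $\hat\cA^+$-linearity of $L_{(0)}$ automatically forces $\beta_{(0)}|_{\cV_{\geq 1}}=0$.

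The inductive step is where the hypothesis on $\hat Q^+$ really pays off. Assuming that $Q_{(-1)},\dots,Q_{(N)}$ have been built with $(Q^2)_{(\leq N-1)}=0$ and $Q_{(k)}|_{\hat\cA^+}=0$ for $1\leq k\leq N$, the obstruction governing $Q_{(N+1)}$ is
$$\sum_{\substack{i+j=N\\ i,j\geq 0}}Q_{(i)}\circ Q_{(j)},$$
which restricts to zero on $\hat\cA^+$: the inductive vanishing kills every term except possibly $i=j=0$, and that residual term equals $(\hat Q^+)^2|_{\hat\cA^+}=0$ by assumption. I may therefore set $Q_{(N+1)}|_{\cO}=0$ and $Q_{(N+1)}|_{\cV_{\geq 1}}=0$, i.e.\ $\alpha_{(N+1)}=0$ and $\beta_{(N+1)}|_{\cV_{\geq 1}}=0$; then running the homotopy-retract argument of Step~2(c) in the proof of Theorem~\ref{thm:gen.case.computations} on the generators $\cTr[\FM]$ produces an $\cO$-linear $\beta_{(N+1)}\colon\cTr[\FM]\to\FM\bar\odot\hat\cA^+$ solving the remaining equation.

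The three assertions then fall out: (1) by construction $(\cA,Q,0,\beta)$ is an arborescent extension with $\alpha=0$ and $\beta|_{\cV_{\geq 1}}=0$; (2) vanishing of $Q_{(i)}$ on the generators $\cO\cup\cV_{\geq 1}$ of $\hat\cA^+$ for $i\neq 0$, together with the $\hat\cA^+$-linearity of $Q_{(-1)}=\delta_\psi$, forces each such $Q_{(i)}$ to be $\hat\cA^+$-linear; (3) since $\beta_{(k)}$ shifts negative degree by $k$ while having total degree $+1$, a direct degree count shows that preimages are needed only in $(\cTr[\FM]_{(i)},\FM_{(i+j-1)}\bar\odot\hat\cA^+_{\,j})$ for $i,j\geq 1$ (setting $j=k+1$). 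The main technical obstacle I expect is precisely the inductive verification that the obstruction vanishes on $\hat\cA^+$; once that is in hand, the rest of the argument is a bookkeeping of degrees that transfers essentially unchanged from Theorem~\ref{thm:gen.case.computations}.
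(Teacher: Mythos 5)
Your proposal is correct and follows essentially the same route as the paper: both revisit the inductive construction of Theorem \ref{thm:gen.case.computations}, use $(\hat Q^+)^2=0$ on all of $\hat\cA^+$ together with the inductive vanishing of $Q_{(k)}|_{\hat\cA^+}$ to show the obstruction $\sum_{i+j=N}Q_{(i)}\circ Q_{(j)}$ vanishes on $\cO$ and $\cV_{\geq 1}$, and then choose $\alpha_{(N+1)}=0$ and $\beta_{(N+1)}|_{\cV_{\geq 1}}=0$, which yields the $\hat\cA^+$-linearity and the restriction of the homological computations to the pairs $(\cTr[\FM]_{(i)},\FM_{(i+j-1)}\bar\odot\hat\cA^+_j)$. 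The only differences are cosmetic (phrasing the induction hypothesis as vanishing on $\hat\cA^+$ rather than $\hat\cA^+$-linearity, which is equivalent for derivations).
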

\begin{proof}We shall use the notations introduced in the proof of Theorem \ref{thm:gen.case.computations}. The argument proceeds by revisiting the steps of that proof, with particular attention to Equations \eqref{retration}-\eqref{eq:beta_{N+1}}, 
which are used to construct the retraction residues $\alpha$ and $\beta$. We show that $\alpha, \beta$ of Theorem \ref{thm:gen.case.computations} can be chosen such that $\alpha_{(i)} = 0$ and $\beta_{(i)}|_{\cV_{\geq 1}} = 0$ for all $i\in \mathbb N$. Then the result follows.

By Definition \ref{def:arb.extension}, $\alpha_{(0)}$ and $\beta_{(0)}=0$  and $\beta|_{\cV_{\geq 1}}=0$. Let $i=1$. Since for all $f\in \cO$, we have $Q_{(0)}^2(f)=(\hat Q^+)^2(f)=0$, Equation \eqref{eq:Delta.N} reads 
$$
	\delta_{\psi}(\alpha_{(1)}(f)) = -p\circ Q_{(0)}^2(f) = 0, \quad \forall f \in \cO.
$$ 
We choose $\alpha_{(1)} = 0$. Moreover, Equation \eqref{eq:Q_{(1)}f} yields \[Q_{(1)}|_{\cO} =  -\r\circ \p \circ Q^2_{(0)}|_{\cO} - \alpha_{(1)} = 0\] Therefore, we can choose $(\nabla_{\alpha})_{(1)}$, an extension of $Q_{(0)}|_{\cO}$ to $\cA$, to be zero. The condition  $Q_{(0)}^2 = 0$ on $\hat \cA^+$ and  the homotopy retract equation \eqref{retration} imply that the $\cO$-linear part $(L_{\beta})_{(1)}$ given in  Equation \eqref{eq:gen.A.final.form} takes a simplified form upon restriction to $\mathcal V_j$, $j\in \mathbb N$:
$$
(L_\beta)_{(1)}(a) = \underbrace{h \circ rhs(a)}_{=0} - \beta_{(1)}(a)= -\beta_{(1)}(a),\; a\in \cV_{\geq 1}.
$$
Moreover, by Equation \eqref{eq:beta_{N+1}}, the term $\beta_{(1)}$ satisfies \[\delta_{\psi}(\beta_{(1)}(a))=p \circ rhs(a)  =0\quad \text{for all} \quad a \in \cV_{\geq 1}.\] Thus, as an admissible choice, we choose $\beta_{(1)} = 0$ on $\hat\cA^+$, as well as $(L_{\beta})_{(1)} =0$ on $\hat\cA^+$. Therefore, the negative degree $+1$ of $Q$ is \[Q_{(1)} = (\nabla_{\alpha})_{(1)} + (L_\beta)_{(1)} = 0,\quad \text{on} \quad \hat\cA^+.\] We proceed by induction: Assume that we have constructed  $Q_{(i)},\, 1\leq i \leq N$ to be  $\hat\cA^+$-linear. Then from Equation \eqref{eq:Delta.N} evaluated on elements $f\in\cO$ we obtain:
$$
\delta_{\psi}\alpha_{(N+1)}(f) = p\circ\sum\limits_{\substack{i+j=N \\ i,j\geq 0}}Q_{(i)}\circ Q_{(j)}(f) = p\circ Q_{(N)}(Q_{(0)}(f)) = 0.
$$
Therefore, we choose $\alpha_{(N+1)} = 0$ on $\cO$. It follows that $Q_{(N+1)}|_{\cO}=0$ and  $(\nabla_{\alpha})_{(N+1)}$ is chosen to be equal to $0$ on $\cA$. In this case, the linear part $(L_\beta)_{(N+1)}$ satisfies \[\delta_{\psi}(L_{\beta})_{(N+1)}(a)  =-\beta_{(N+1)}\quad \text{on}\;\cV_{\geq 1}\] by Equation \eqref{eq:gen.A.final.form}. Moreover, Equation \eqref{eq:beta_{N+1}} reads as $\beta_{(N+1)}$ is $\delta_{\psi}$-closed on $\cV_{\geq 1}$. Therefore,  we choose $\beta_{(N+1)}$ to be zero. Thus, we obtain an $\hat \cA^+$-linear $Q_{(N+1)}$. 
\end{proof}

\begin{remark}
    We showed that homological computations in the simplified setting are reduced to studying the extension on the arborescent Koszul-Tate resolution $(S(\cTr[\FM]), \delta_{\psi})$ part. Note that no conditions are needed on the (smooth) K\"ahler module $\Omega_{\cO/\mathbb K}$. This extension can be further improved by taking a specific extension $(\nabla_{\alpha = 0})_{(0)}$, as explained in the proof of Theorem \ref{thm:main2} below.
\end{remark}

\begin{proof}[Proof (of Theorem \ref{thm:main2})]
The proof is a prolongation of Lemma \ref{prop:arb.ext.exists}, we further impose the following conditions on the choice of the derivation $(\nabla_\alpha)_{(0)}$:
\begin{enumerate}
    \item $ (\nabla_\alpha)_{(0)}(\FM) \subset \FM \bar \odot \cA^+$.
    \item  $ (\nabla_\alpha)_{(0)} \circ \r\circ \p + \r\circ \p \circ  (\nabla_\alpha)_{(0)}  = 0$.
\end{enumerate}
The first condition can always be satisfied without any restrictions. The second condition can be interpreted as a recursive formula for $(\nabla_\alpha)_{(0)}$ for non-trivial trees (i.e., trees with several leaves): for any tree $a = \r(b)$, where $b \in S^{\geq 2}(\cTr[\FM])$ define  
\begin{equation}
\label{eq:aux.nabla}
    (\nabla_\alpha)_{(0)} (a) = -\r\circ (\nabla_\alpha)_{(0)}(b) 
\end{equation}
so that Item 2 is satisfied.

Let us show that the arborescent extension of Lemma \ref{prop:arb.ext.exists} together with the condition on $(\nabla_\alpha)_{(0)}$ from above satisfies items $i-iii$ of Theorem \ref{thm:main2}.

\begin{itemize}
    \item [$\bullet$] on $\hat \cA^+$:  $Q$ coincides with $\hat Q^+$. This is item $i$ of Theorem \ref{thm:main2}.
    \item[$\bullet$] Let $N\geq -1$. The imposed conditions on $(\nabla_\alpha)_{(0)}$ imply that $(\nabla_\alpha)_{(0)}$ is an arity $0$ operation on $\cTr[\FM]$, that is,  $(\nabla_\alpha)_{(0)}(\cTr[\FM]) \subset \cTr[\FM]\bar \odot \hat\cA^+$. Together with the Definition \ref{def:arb.extension}(2),  this implies that all the components of negative degree $\geq 0$ of $Q$ are of arity $0$ on $\cTr[\FM]$, i.e.,  $Q_{(\geq 0)}(\cTr[\FM]) \subset\cTr[\FM]\bar \odot \cA^+$. In particular,  the linear part $L_\beta$ restricted to $\FM$ is $L_\beta = -\beta$,  since  \begin{align*}
        &\left[\delta_{\psi}, ({\nabla_\alpha})_{(N+1)}\right] +({L_\beta})_{(N+1)}\circ\delta_\psi +\sum\limits_{\substack{i+j=N \\ 0\leq i\leq j}}\left[Q_{(i)}, Q_{(j)}\right] \\ \\
        &= \delta_\psi \circ (\nabla_{\alpha})_{(N+1)} + Q_{(N+1)}\circ \delta_{\psi} + \sum\limits_{\substack{i+j=N \\ 0\leq i\leq j}}\left[Q_{(i)}, Q_{(j)}\right]
    \end{align*}is in the kernel of $\r\circ \p $. Using Equation \eqref{eq:def.Q_N} we simplify further the expression for $Q_{(N+1)}$:
    $$
    Q_{(N+1)}(a) = (\nabla_\alpha)_{(N+1)}(a) - \beta_{(N+1)}(a)= \begin{cases}
        (\nabla_{(0)} - \beta_{(0)})(a), \hbox{if $N=-1$} \\
        -\beta_{(N+1)}, \hbox{if $N\geq 0$}
    \end{cases}, \quad \hbox{for $a\in \FM$.}
    $$
    This, together with $Q_{(-1)} = \delta_{\psi}$, implies item $ii$ of Theorem \ref{thm:main2}.
\item[$\bullet$] In a similar fashion we simplify $Q_{(N+1)}$ upon evaluation on $\cTr^{\geq 2}[\FM]$: for $a\in \cTr^{\geq 2}[\FM]$
\begin{align*}
Q_{(N+1)}(a)& = ((\nabla_\alpha)_{(N+1)}-\r\circ \p \circ (\delta_{\psi}\circ (\nabla_\alpha)_{(N+1)} + Q_{(N+1)}\circ \delta_{\psi} + \sum\limits_{\substack{i+j=N \\ i,j\geq 0}}Q_{(i)}\circ Q_{(j)}) - \beta_{(N+1)})(a)  \\
&=((\nabla_\alpha)_{(N+1)}-\r\circ \p \circ (\delta_{\psi}\circ (\nabla_\alpha)_{(N+1)} + Q_{(N+1)}\circ \r^{-1} ) - \beta_{(N+1)})(a)  \\
&=(-\r\circ \p \circ  Q_{(N+1)}\circ \r^{-1}  - \beta_{(N+1)})(a),\quad\text{by Equation \eqref{eq:aux.nabla}}
\end{align*}
 Here, the transition from the first line to the second one is due to $Q_{(i)}, \, i\geq 0$ mapping $\cTr[\FM]\bar \odot \hat \cA^+$ to $\cTr[\FM]\bar \odot \hat\cA^+$ and the kernel of $\p$ being $\hat\cA^+\oplus\cTr[\FM]\bar \odot \hat\cA^+$. So that $$\p \circ (Q_{(N+1)}\circ \delta_{\psi} + \sum\limits_{\substack{i+j=N \\ i,j\geq 0}}Q_{(i)}\circ Q_{(j)})(a) = \p\circ Q_{(N+1)} \circ \r^{-1}(a).$$
Together with $Q_{(-1)} = \delta_{\psi}$ on $\cTr^{\geq 2}[\FM]$,  we recover item $iii$ of Theorem \ref{thm:main2}. This ends the proof.
\end{itemize}
\end{proof}

\section{Examples}\label{sec:3}
In this section, we present explicit examples that illustrate the main results discussed in \S \ref{sec:2}. 

\subsection{Arborescent extention of an $\cI$-Lie algebroid}

It is natural to begin with an example in which the positive part is a Lie algebroid, see Example  \ref{ex:Lie-alg}. Let $(A, \lb_A, \rho)$ be a Lie algebroid over a manifold $M$, such that the anchor map preserves an ideal $\cI\subset C^{\infty}(M)$, so that it $\rho(\Gamma (A))(\cI)\subset \cI$. Here, $A$ shall be called an \emph{$\cI$-Lie algebroid}. Such a Lie algebroid $A$ corresponds to a positively graded $Q$-variety $(S(\cV_1), \hat Q^+)$ that preserves $\cI$, where $\cV_1 \coloneqq \Gamma(A[-1]^*)$. In local coordinates, the homological vector fields  $\hat Q^+$ takes the following form: $$
\hat Q^+ = \rho^i_a\xi^a\frac{\partial}{\partial x^i} - \frac{1}{2}C_{ab}^c\xi^a\xi^b \frac{\partial}{\partial \xi^c},
$$
where $x^i, \xi^a$ are local coordinates on $A[-1]^*$, $\rho^i_a$, $C^c_{ab}$ are local coordinates for the anchor map and the bracket $\lb_A$. The summation of the repeated indices is assumed. 

We associate to the ideal $\cI$ a projective resolution $(\FM, d)$ of $\cO/\cI$, upon which we construct an arborescent Koszul–Tate resolution $(S(\cTr[\FM], \delta_\psi)$ of $\cO/\cI$. By Theorem \ref{thm:main2}, it is possible to choose an explicit arborescent extension the Lie algebroid $A$. Moreover, such an extension induces a specific structure on $\FM$,  as explained in the following:
\begin{proposition}
    Let $(\cA^-, Q^-)\!\coloneqq (S(\cTr[\FM]), \delta_\psi)$ be an arborescent Koszul-Tate resolution of $\cO/\cI$, and let $(\hat \cA^+, \hat Q^+)\!\coloneqq (S(\cV_1), \hat Q^+)$ be a positively graded $Q$-variety associated to an $\cI$-Lie algebroid $(A, \lb_A, \rho)$ of rank $k$. \begin{enumerate}
        \item An explicit arborescent extension $(\cA, Q_{\chi})$ of $(\cA^-, \hat \cA^+)$ has its homological calculations restricted to $(\cTr[\FM]_{(\bullet)}, \FM_{(\bullet+j-1)}\odot S^j(\cV_1))$, $1\leq j\leq k$. \item Moreover, it induces an $\mathcal O$-linear multiplication $\star_{(\mathrm{top})}\colon \FM \times\FM \to \FM\odot \Omega^{\mathrm{top}}(A)$ of negative degree \footnote{Here, $\Omega^{\mathrm{top}}(A)$ is a space of sections of the line bundle over the base manifold $M$ of top-forms on $A$.} $k=\mathrm{top}$.
           \end{enumerate}
\end{proposition}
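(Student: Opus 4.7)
Both items will follow by specializing Lemma~\ref{prop:arb.ext.exists} and Theorem~\ref{thm:main2} to the case $\hat\cA^+ = S(\cV_1)$ with $\cV_1 = \Gamma(A[-1]^*)$. The key algebraic observation is that since $\cV_1$ sits in odd degree $+1$, the graded symmetric algebra $S(\cV_1)$ is naturally an exterior algebra, so $\hat\cA^+_j \simeq \Gamma(\Lambda^j A^*)$ as $\cO$-modules. Because $A$ has rank $k$, this vanishes for $j>k$ and equals $\Gamma(\det A^*)=\Omega^{\mathrm{top}}(A)$ for $j=k$. Moreover, $\hat\cA^+$ is finitely generated in each degree with only finitely many nonzero components, so the degreewise completion is trivial and $\bar\odot$ coincides with $\odot$ in the expressions below.

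For item~1, the assumption that $\rho$ preserves $\cI$ is equivalent to $\hat Q^+(\cI)\subset \cI\hat\cA^+$, so Lemma~\ref{prop:arb.ext.exists} applies and produces an explicit arborescent extension $(\cA,Q_\chi)$ of $(\cA^-,\hat\cA^+)$ whose homological computations are restricted to $\{(\cTr[\FM]_{(i)},\FM_{(i+j-1)}\odot \hat\cA^+_j)\mid i,j\geq 1\}$. Intersecting this with the vanishing $\hat\cA^+_j=0$ for $j>k$ truncates the index $j$ to the range $1\leq j\leq k$, which is item~1.

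For item~2, Theorem~\ref{thm:main2}(iii) tells us that $Q_\chi$ on $\cTr^{\geq 2}[\FM]$ is controlled by an $\cO$-linear hook $\chi\colon \cTr^{\geq 2}[\FM]\to \FM\odot \hat\cA^+$ with $\chi_{(-1)}=\psi$. Decomposing the codomain by polynomial degree in $\cV_1$ gives the finite expansion
\[\chi = \sum_{j=0}^{k}\chi^{(j)}, \qquad \chi^{(j)}\colon \cTr^{\geq 2}[\FM]\longrightarrow \FM\odot S^j(\cV_1).\]
Restricting each $\chi^{(j)}$ to two-leaf trees yields an $\cO$-bilinear map $\FM\times\FM\to \FM\odot S^j(\cV_1)$. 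For $j=0$ this recovers the $\star$-product of Remark~\ref{rem:psi.mult}, and for $j=k$ we define
\[a\star_{(\mathrm{top})}b := \chi^{(k)}(t[a,b])\in \FM\odot \Omega^{\mathrm{top}}(A),\]
which is $\cO$-bilinear by the $\cO$-linearity of $\chi$. A short degree count, using $|\chi|=+1$ and $|t[a,b]|=|a|+|b|-1$, gives $|\chi(t[a,b])|=|a|+|b|$; since $S^k(\cV_1)$ is concentrated in degree $k$, the $\FM$-factor of $a\star_{(\mathrm{top})}b$ carries degree $|a|+|b|-k$, so $\star_{(\mathrm{top})}$ shifts the usual $\FM$-degree by $-k$, i.e.\ has negative degree $k$.

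The only real subtlety is the bookkeeping of indices in the decomposition $\chi=\sum_j\chi^{(j)}$ and matching them against the degree conventions of \S~\ref{sec:1}; no new homological computation is needed beyond those already organized by Lemma~\ref{prop:arb.ext.exists}, and the rank-$k$ cutoff is responsible both for the finite range in item~1 and for the appearance of $\Omega^{\mathrm{top}}(A)$ at the very top in item~2.
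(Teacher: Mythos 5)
Your proposal is correct and follows essentially the same route as the paper: item~1 is the restriction statement of Lemma~\ref{prop:arb.ext.exists} truncated by the vanishing of $S^j(\cV_1)$ for $j>k$, and item~2 defines $\star_{(\mathrm{top})}$ by evaluating the top component of the hook $\chi$ on two-leaf trees. Your decomposition of $\chi$ by polynomial degree in $\cV_1$ is just a reindexing of the paper's decomposition by negative degree (on a two-leaf tree decorated by $\FM$, the component valued in $\FM\odot S^{k}(\cV_1)$ is exactly $\chi_{(k-1)}$), so the two definitions of the product coincide.
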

\begin{proof}
The degrees of elements in $S(\cV_1)$ are in $\{0, 1, 2, \ldots, k\}$, since $\cV_1$ is concentrated in degree $+1$. Therefore, the claim on homological computations for the explicit extension $Q_\chi$ follows from Theorem \ref{thm:gen.case.computations} as well as Lemma \ref{prop:arb.ext.exists}. This proves item 1.  In order to prove the second claim, observe that the negative degree $k-1$ component of the hook map $\chi$ of $(\cA, Q_\chi)$ induces an $\mathcal O$-linear product $\star_{(k)}\coloneqq \FM \times \FM \to \FM\odot S^{\mathrm{top}}(\cV_1)$ by 
    $$
    a\star_{(k)} b\coloneqq  \chi_{(k-1)}\left( \adjustbox{valign = c} {\scalebox{0.5}{  \begin{forest}
for tree = {grow' = 90}, nice empty nodes,
            [,
      [\scalebox{2}{$a$}, tier =1]
      [\scalebox{2}{$b$}, tier =1] 
         ]
\path[fill=black] (.parent anchor) circle[radius=4pt];
\end{forest}}} \right)
    $$
\end{proof}
The product $\star_{(k)}$ measures the violation of Leibniz rule of a tower of products $\star_{(i)}$, and derivations $(Q_\chi)_{(j)}$ restricted to $S(\FM)$, for all  $i,j\leq k-1$. The latter is  explained in full details in \S \ref{sec:h.mult}, where an example of degree $+1$ product $\star_{(1)}$ is also provided.
\subsection{Vector fields vanishing on a subspace of $\mathbb K^2 $}
Let $\cI$ be an ideal of a ring $\cO = \mathbb K[x_1, x_2]$, so that $\cI$ is generated by a regular sequence $\varphi_1, \varphi_2$. Let $\mathfrak  A_\mathcal{I} = \cI\,\mathrm{Der}(\cO)$ be the Lie-Reinhart algebra generated by $\varphi_1 \frac{\partial}{\partial x_1}, \varphi_2\frac{\partial}{\partial x_1}, \varphi_1 \frac{\partial}{\partial x_2}, \varphi_2\frac{\partial}{\partial x_2}$. 

In this example, we associate a $\mathbb Z$-graded $Q$-variety to the Lie-Rinehart algebra $\mathfrak A_{\cI}$ such that its negative {part} is a Koszul-Tate resolution of $\cO/\cI$, and its positive part is the universal $Q$-variety of $\mathfrak A_{\cI}$. We do the extension in two distinct cases: \begin{enumerate}
    \item First, we provide an extension when the negative part is given by a Koszul complex (which is a Koszul-Tate resolution in this case). \item Second, we give an explicit arborescent extension, with the negative part being an arborescent Koszul-Tate resolution. The general discussion of such extensions is delegated to Appendix \ref{sec:appendix}.
\end{enumerate}

\noindent
 \textbf{The positive graded part}: There is a positively graded $Q$-variety $(\hat \cA^+, \hat{Q}^+)$ associated with $\mathfrak A_\cI$, which is described as follows:

    There is free resolution of $\mathfrak A_{\cI}$ in the following form:
    \begin{equation}\label{eq:res_IDer}
\begin{tikzcd}
  0 \arrow[r,] & P_{-2} \arrow[r, "\ell"] & P_{-1} \arrow[r, "\rho"] & \mathfrak A_\cI
\end{tikzcd}
    \end{equation}
where $P_{-1}=\cO^{\times4}$ and $P_{-2}=\cO^{\times 2}$. The anchor map $\rho$ is defined on the basis $(e^1_1, e^1_2, e^2_1, e^2_2)$ of $P_{-1}$ as follows:
$$
\rho(e^1_1) = \varphi_1 \frac{\partial}{\partial x_1}, \quad  \rho(e^1_2) = \varphi_2 \frac{\partial}{\partial x_1}, \quad\rho(e^2_1) = \varphi_1 \frac{\partial}{\partial x_2}, \quad\rho(e^2_2) = \varphi_2 \frac{\partial}{\partial x_2}.
$$
While the map $\ell$ describes the relations between the generators. It is given on the basis  elements $\{u, v\}$ of $P_{-2}$ by
$$
\ell(u) =\varphi_1 e^1_2 - \varphi_2 e^1_1, \quad  \ell(v) =\varphi_1 e^2_2 - \varphi_2 e^2_1.
$$
Equation \eqref{eq:res_IDer} comes equipped with an explicit Lie $2$-algebroid structure (\cite[Proposition 3.17]{CLRL}}) whose dual is the positively graded $Q$-variety $(S(\cV_1\oplus \cV_2),\, \hat Q^+)$ over $\cO$, where $\cV_1\coloneqq P^*_{-1}$, $\cV_2 \coloneqq P^*_{-2}$ . For  $\xi^1_1, \xi^1_2,  \xi^2_1, \xi^2_2$ and $\xi^1, \xi^2$, a  basis of $\cV_1\oplus\cV_2$, the differential $\hat Q^+$ is given by:
\begin{align*}
   \hat Q^+(x_1) &= \xi^1_1\varphi_1 + \xi^1_2\varphi_2,\\
    \hat Q^+(x_2) &= \xi^2_1\varphi_1 + \xi^2_2\varphi_2, \\
    \hat Q^+(\xi^1_1) &= -\xi^1\varphi_2 - \xi^1_1\xi^1_2\frac{\partial\varphi_2}{\partial x_1} +\xi^1_1 \xi^2_1 \frac{\partial\varphi_1}{\partial x_2}+ \xi^1_2 \xi^2_1 \frac{\partial\varphi_2}{\partial x_2},\\
    \hat Q^+(\xi^1_2) &= \xi^1\varphi_1 + \xi^1_1\xi^1_2\frac{\partial\varphi_1}{\partial x_1} +\xi^1_1 \xi^2_2 \frac{\partial\varphi_1}{\partial x_2}+ \xi^1_2 \xi^2_2 \frac{\partial\varphi_2}{\partial x_2},\\ 
    \hat Q^+(\xi^2_1) &= -\xi^2\varphi_2 - \xi^1_1\xi^2_1\frac{\partial\varphi_1}{\partial x_1} -\xi^1_1 \xi^2_2 \frac{\partial\varphi_2}{\partial x_1}- \xi^2_1 \xi^2_2 \frac{\partial\varphi_2}{\partial x_2},\\
    \hat Q^+(\xi^2_2) &= +\xi^2\varphi_1 - \xi^1_2\xi^2_1\frac{\partial\varphi_1}{\partial x_1} +\xi^2_1 \xi^2_2 \frac{\partial\varphi_1}{\partial x_2}- \xi^1_2 \xi^2_2 \frac{\partial\varphi_2}{\partial x_1},\\
    \hat Q^+(\xi^1) &= -\xi^2_1\xi^1 \frac{\partial\varphi_1}{\partial x_2} + \xi^2_2\xi^1 \frac{\partial\varphi_2}{\partial x_2} + \xi^1_1\xi^2 \frac{\partial\varphi_1}{\partial x_2} + \xi^1_2\xi^2 \frac{\partial\varphi_2}{\partial x_2} - \xi^1_1\xi^1_2\xi^2_2\frac{\partial^2\varphi_2}{\partial x_1 \partial x_2} -  \xi^1_1\xi^1_2\xi^2_1\frac{\partial^2\varphi_1}{\partial x_1 \partial x_2}  \\&+\xi^1_1\xi^2_1\xi^2_2\frac{\partial^2\varphi_1}{\partial x_2 \partial x_2} + \xi^1_2\xi^2_1\xi^2_2\frac{\partial^2\varphi_2}{\partial x_2 \partial x_2},\\
    \hat Q^+(\xi^2) &= -\xi^1_1\xi^2 \frac{\partial\varphi_1}{\partial x_1} - \xi^1_2\xi^2 \frac{\partial\varphi_2}{\partial x_1} + \xi^2_1\xi^1 \frac{\partial\varphi_1}{\partial x_1} - \xi^2_2\xi^1 \frac{\partial\varphi_2}{\partial x_1} + \xi^1_1\xi^1_2\xi^2_1\frac{\partial^2\varphi_1}{\partial x_1 \partial x_1} -  \xi^1_1\xi^2_1\xi^2_2\frac{\partial^2\varphi_1}{\partial x_1 \partial x_2}  \\&+\xi^1_1\xi^1_2\xi^2_2\frac{\partial^2\varphi_2}{\partial x_1 \partial x_1} - \xi^1_2\xi^2_1\xi^2_2\frac{\partial^2\varphi_2}{\partial x_1 \partial x_2}.
\end{align*}

\noindent
    \textbf{The negative graded part}:
    
    \textbf{Case 1: Koszul resolution.} Since the ideal $\cI=(\varphi_1, \varphi_2)$ is generated by a regular sequence, the Koszul complex $(S(\cV_{-1}), \delta)$, where $\cV_{-1} = \cO^{\times2}$, and the differential $\delta$ is defined on the basis of $\cV_{-1}$ as follows:
    $$
\delta \pi_1 = \varphi_1, \quad \delta \pi_2 = \varphi_2.
    $$ is a Koszul-Tate resolution of $\cO/\cI$. Therefore, it can be used as the negative part $(\cA^-, Q^-)$ of  a $\mathbb Z$-graded $Q$-variety extension $(\cA, Q)$ of $(\hat \cA^+, \hat{Q}^+)$. 
    
    \textbf{Case 2: Arborescent Koszul-Tate resolution.} As an alternative to the Koszul complex, we can use an arborescent Koszul-Tate resolution as the negative part $(\cA^-, Q^-)$. It is build on top of the Koszul complex, where we "forget" the multiplicative structure. In more details, we start with the Koszul complex $(\FM, d)$:
    $$
    \begin{tikzcd}
  0 \arrow[r,] & \FM_{-2} \arrow[r, "d"] & \FM_{-1} \arrow[r, "d"] & \cO,
\end{tikzcd}
    $$
    where $\FM_{-1}=\cV_{-1}$, $\FM_{-2}= \FM_{-1}\odot\FM_{-1}\simeq \cO$. The differential $d$ is given by
    $$
    d(\pi_1) = \varphi_1, \quad d(\pi_2) = \varphi_2, \quad d(\pi) = \varphi_1\pi_2 - \varphi_2\pi_1,
    $$
    where $\pi_1, \pi_2$ is the basis of $\FM_{-1}$, and $\pi$ is the basis element of $\FM_{-2}$. An arborescent Koszul-Tate resolution is completely specified by a hook map $\psi$, which is even unique in this example, and is given by
$$\psi\left( \adjustbox{valign = c} {\scalebox{0.5}{  \begin{forest}
for tree = {grow' = 90}, nice empty nodes,
            [,
      [\scalebox{2}{$\pi_1$}, tier =1]
      [\scalebox{2}{$\pi_2$}, tier =1] 
         ]
\path[fill=black] (.parent anchor) circle[radius=4pt];
\end{forest}}} \right) = \pi.
$$
The differential $\delta_{\psi}$ is illustrated on a tree with three leaves as follows:
 \begin{equation}
 \delta_{\psi} \left( \adjustbox{valign = c} {\scalebox{0.5}{  \begin{forest}
for tree = {grow' = 90}, nice empty nodes,
[ ,
            [,
      [\scalebox{2}{$\pi_1$}, tier =1]
      [\scalebox{2}{$\pi_2$}, tier =1] 
         ]
         [\scalebox{2}{$\pi$}, tier = 1 ]
 ]
\path[fill=black] (.parent anchor) circle[radius=4pt]
                (!1.child anchor) circle[radius=4pt];
\end{forest}}}\right) = 
  \adjustbox{valign = c} {\scalebox{0.5}{  \begin{forest}
for tree = {grow' = 90}, nice empty nodes,
            [
      [\scalebox{2}{$\pi_1$}, tier =1]
      [\scalebox{2}{$\pi_2$}, tier =1] 
         ]
\path[fill=black] (.parent anchor) circle[radius=4pt];
\end{forest}}} \odot
  \adjustbox{valign = c} {\scalebox{0.5}{  \begin{forest}
for tree = {grow' = 90}, nice empty nodes,
[
         [\scalebox{2}{$\pi$}, tier = 1 ]
 ]
\end{forest}}} 
- 
  \adjustbox{valign = c} {\scalebox{0.5}{  \begin{forest}
for tree = {grow' = 90}, nice empty nodes,
[ ,
      [\scalebox{2}{$\pi_1$}, tier =1]
      [\scalebox{2}{$\pi_2$}, tier =1] 
         [\scalebox{2}{$\pi$}, tier = 1 ]
 ]
\path[fill=black] (.parent anchor) circle[radius=4pt];
\end{forest}}}
  +\varphi_1 \adjustbox{valign = c} {\scalebox{0.5}{  \begin{forest}
for tree = {grow' = 90}, nice empty nodes,
[ ,
            [,
      [\scalebox{2}{$\pi_1$}, tier =1]
      [\scalebox{2}{$\pi_2$}, tier =1] 
         ]
         [\scalebox{2}{$\pi_2$}, tier = 1 ]
 ]
\path[fill=black] (.parent anchor) circle[radius=4pt]
                (!1.child anchor) circle[radius=4pt];
\end{forest}}}
- 
  \varphi_2\adjustbox{valign = c} {\scalebox{0.5}{  \begin{forest}
for tree = {grow' = 90}, nice empty nodes,
[ ,
            [,
      [\scalebox{2}{$\pi_1$}, tier =1]
      [\scalebox{2}{$\pi_2$}, tier =1] 
         ]
         [\scalebox{2}{$\pi_1$}, tier = 1 ]
 ]
\path[fill=black] (.parent anchor) circle[radius=4pt]
                (!1.child anchor) circle[radius=4pt];
\end{forest}}}
+ \adjustbox{valign = c} {\scalebox{0.5}{  \begin{forest}
for tree = {grow' = 90}, nice empty nodes,
            [
      [\scalebox{2}{$\pi$}, tier =1]
      [\scalebox{2}{$\pi$}, tier =1] 
         ]
\path[fill=black] (.parent anchor) circle[radius=4pt];
\end{forest}}}
\end{equation}

 \noindent
\textbf{The total $\mathbb Z$-graded description}:

    \textbf{Case 1: Extension with the Koszul resolution.} The total differential $Q$ on $\cA= S(\cV_{-1}\oplus \cV_{1}\oplus \cV_{2})$ is described as follows:
    On $\cV_{i\geq 1}$ it coincides with $\hat Q^+$. On $\cV_{-1}$ it consists of components $Q_{(-1)}, Q_{(0)}, Q_{(1)}$, $Q = Q_{(-1)}+ Q_{(0)}+ Q_{(1)}$. The component $Q_{(-1)}$ is given by $\delta$ which is extended by $0$ on $\hat \cA^+=S( \cV_{1}\oplus \cV_{2})$. The component $Q_{(0)}$ is defined as:
    $$
    Q_{(0)}(\pi_i) =-\pi_1\left(\xi^1_1\frac{\partial\varphi_i}{\partial x_1}+ \xi^2_1\frac{\partial\varphi_i}{\partial x_2}\right) -\pi_2\left(\xi^1_2\frac{\partial\varphi_i}{\partial x_1}+ \xi^2_2\frac{\partial\varphi_i}{\partial x_2}\right),
    $$
for $i=1,2$, while the component $Q_{(1)}$ is described as follows:
\begin{align*}
    Q_{(1)}(\pi_i) = &-\pi_1\pi_2\xi^1\frac{\partial \varphi_i}{\partial x_1} -\pi_1\pi_2\xi^2\frac{\partial \varphi_i}{\partial x_2} -\pi_1\pi_2\xi^1_1\xi^1_2\frac{\partial^2 \varphi_i}{\partial x_1\partial x_1}
-\pi_1\pi_2\xi^1_1\xi^2_2\frac{\partial^2 \varphi_i}{\partial x_1\partial x_2}
+\pi_1\pi_2\xi^1_2\xi^2_1\frac{\partial^2 \varphi_i}{\partial x_1\partial x_2}\\\\
&-\pi_1\pi_2\xi^2_1\xi^2_2\frac{\partial^2 \varphi_i}{\partial x_2\partial x_2}
\end{align*}

    \textbf{Case 2: Extending with an arborescent Koszul-Tate resolution.}
    Here we provide an explicit arborescent extension $(\cA,Q, \alpha=0, \beta)$ of $(\cA^-, \hat\cA^+)$ by applying the explicit formula of Proposition \ref{prop:Q.explicit.form}. This is completely specified by a retraction residue map $\beta$ and a derivation $(\nabla_{\alpha=0})_{(0)}$ restricted to $\FM$. Since the resolution $(\FM,d)$ is a resolution of free modules, we put without any restriction $(\nabla_{\alpha=0})_{(0)} = 0$ on the basis elements $\xi^1_1, \xi^1_2, \xi^2_1, \xi^2_2, \xi^1, \xi^2$. The map $\beta $ contains components $\beta_{(0)}$, $\beta_{(1)}$, $\beta = \beta_{(0)} + \beta_{(1)}$, defined as follows:
   $$
    \beta_{(0)}(\pi_i) =\pi_1\left(\xi^1_1\frac{\partial\varphi_i}{\partial x_1}+ \xi^2_1\frac{\partial\varphi_i}{\partial x_2}\right) +\pi_2\left(\xi^1_2\frac{\partial\varphi_i}{\partial x_1}+ \xi^2_2\frac{\partial\varphi_i}{\partial x_2}\right),
    $$
    $$
    \beta_{(0)}(\pi) = -\pi\left(\xi^1_1\frac{\partial\varphi_1}{\partial x_1}+ \xi^1_2\frac{\partial\varphi_2}{\partial x_1}+ \xi^2_1\frac{\partial\varphi_1}{\partial x_2}+ \xi^2_2\frac{\partial\varphi_2}{\partial x_2}\right)
    $$
for $i=1,2$, while the component $\beta_{(1)}$ is given by:
$$
\beta_{(1)}(\pi_i) = +\pi\xi^1\frac{\partial \varphi_i}{\partial x_1} +\pi\xi^2\frac{\partial \varphi_i}{\partial x_2} +\pi\xi^1_1\xi^1_2\frac{\partial^2 \varphi_i}{\partial x_1\partial x_1}
+\pi\xi^1_1\xi^2_2\frac{\partial^2 \varphi_i}{\partial x_1\partial x_2}
-\pi\xi^1_2\xi^2_1\frac{\partial^2 \varphi_i}{\partial x_1\partial x_2}
+\pi\xi^2_1\xi^2_2\frac{\partial^2 \varphi_i}{\partial x_2\partial x_2}
$$
so that $Q = -\beta$ on the basis elements $\pi_1,\pi_2,\pi$. For degree reasons, the restriction $\beta|_{\cTr^{\geq 2}[\FM]}$ is identically zero. The negative degree components of the homological vector field $Q$ are then $Q_{(-1)},Q_{(0)}, Q_{(1)}$. So that  $Q=Q_{(-1)}+Q_{(0)}+ Q_{(1)}$. The component $Q_{(-1)}$ is given by $\delta_{\psi}$ on $\cA^-$ and $0$ on $\hat \cA^+$. The component $Q_{(0)}$:
$$
Q_{(0)}(a) = \begin{cases}
    \hat Q^+(a), \quad \hbox{if $a\in \hat\cA^+$,} \\
    \hat Q^+(c_1)\pi_1+ \hat Q^+(c_2)\pi_2+ \hat Q^+(c)\pi-\beta_{(0)}(a), \quad \hbox{if $a=c_1\pi_1+ c_2\pi_2 +c\pi $,  for any $c_1,c_2,c \in \cO$},\\
    -\r\circ Q_{(0)}\circ \r^{-1}(a), \quad \hbox{if $a\in \cTr^{\geq2}[\FM]$},
\end{cases}
$$
while the component $Q_{(1)}$ is
$$
Q_{(1)}(a) = \begin{cases}
    -\alpha_{(1)}, \quad \hbox{if $a\in \FM$},\\
    -\r\circ Q_{(1)}\circ \r^{-1}(a), \quad \hbox{if $a\in \cTr^{\geq2}[\FM]$},
    \end{cases}
$$
The recursion for the components $Q_{(0)}, Q_{(1)}$ tantamount to applying $Q_{(0)}$, $Q_{(1)}$ to the decoration of trees with an appropriate sign. For example,
$$
Q_{(0)}\left( \adjustbox{valign = c} {\scalebox{0.5}{  \begin{forest}
for tree = {grow' = 90}, nice empty nodes,
[ ,
            [,
      [\scalebox{2}{$a$}, tier =1]
      [\scalebox{2}{$b$}, tier =1] 
         ]
         [\scalebox{2}{$c$}, tier = 1 ]
 ]
\path[fill=black] (.parent anchor) circle[radius=4pt]
                (!1.child anchor) circle[radius=4pt];
\end{forest}}}\right) = \adjustbox{valign = c} {\scalebox{0.5}{  \begin{forest}
for tree = {grow' = 90}, nice empty nodes,
[ ,
            [,
      [\scalebox{2}{$Q_{(0)}(a)$}, tier =1]
      [\scalebox{2}{$b$}, tier =1] 
         ]
         [\scalebox{2}{$c$}, tier = 1 ]
 ]
\path[fill=black] (.parent anchor) circle[radius=4pt]
                (!1.child anchor) circle[radius=4pt];
\end{forest}}} +(-1)^{|a|}
\adjustbox{valign = c} {\scalebox{0.5}{  \begin{forest}
for tree = {grow' = 90}, nice empty nodes,
[ ,
            [,
      [\scalebox{2}{$a$}, tier =1]
      [\scalebox{2}{$Q_{(0)}(b)$}, tier =1] 
         ]
         [\scalebox{2}{$c$}, tier = 1 ]
 ]
\path[fill=black] (.parent anchor) circle[radius=4pt]
                (!1.child anchor) circle[radius=4pt];
\end{forest}}} + (-1)^{|a|+|b|}
\adjustbox{valign = c} {\scalebox{0.5}{  \begin{forest}
for tree = {grow' = 90}, nice empty nodes,
[ ,
            [,
      [\scalebox{2}{$a$}, tier =1]
      [\scalebox{2}{$b$}, tier =1] 
         ]
         [\scalebox{2}{$Q_{(0)}(c)$}, tier = 1 ]
 ]
\path[fill=black] (.parent anchor) circle[radius=4pt]
                (!1.child anchor) circle[radius=4pt];
\end{forest}}}
$$
for any homogeneous $a,b,c\in \FM$.

\subsection{Vector fields preserving the ideal of quadratic functions}
Let $\cO =\mathbb K[x,y]$ and let $\cI = \langle x^2,xy, y^2\rangle$. Let $\mathfrak A_\cI\subset \mathrm{Der}(\cO)$ be the Lie-Rinehart algebra made of derivations $X$ of $\cO$ that preserve $\cI$, i.e., $X[\cI]\subset \cI$. The Lie-Rinehart algebra  $\mathfrak A_\cI$  is spanned by
    $\langle x\frac{\partial}{\partial x}, y\frac{\partial}{\partial x}, x\frac{\partial}{\partial y}, y\frac{\partial}{\partial y} \rangle$. We compute  a $\mathbb Z$-graded $Q$-variety that is associated with $\mathfrak A_{\cI}$ as follows:\\

\noindent
\textbf{The positive graded part}. There is a positively graded $Q$-variety $(\hat \cA^+, \hat{Q}^+)$ associated with $\mathfrak A_\cI$, which is constructed out of a free resolution of $\mathfrak A_\cI$. It is  described  as follows:

 A free $\cO$-module resolution of $\mathfrak A_\cI$ can be chosen as follows:

\begin{equation}\label{eq:ex:preserveI}
    \begin{tikzcd}
  0 \arrow[r,] & P_{-2} \arrow[r, "\ell"] & P_{-1} \arrow[r, "\rho"] & \mathfrak A_\cI
\end{tikzcd}
\end{equation}
with $P_{-1} \simeq \cO^4$, $P_{-2} \simeq \cO^2$. The anchor map $\rho$ is given on the basis of $P_{-1}$ as follows:
$$
\rho(e_1) = x\frac{\partial}{\partial x}, \quad
\rho(e_2) = y\frac{\partial}{\partial x}, \quad
\rho(e_3) = x\frac{\partial}{\partial y}, \quad \rho(e_4) = y\frac{\partial}{\partial y}.
$$
The map $\ell$ encodes the two relations of generators of $\cI$:
$$
\ell(u) =xe_2 - ye_1, \quad \ell(v) = xe_4 - ye_3.
$$
where $u,v$ are basis elements of $P_{-2}$. Write $\xi^{1}, \xi^{2}, \xi^{3}$ for the basis of $\cV_{1}:=P_{-1}^*$ and $\eta^1, \eta^2$ for the basis of $\cV_2:=P_{-2}^*$. Then there is a  Lie $2$-algebroid structure on \eqref{eq:ex:preserveI} \cite{CLRL}. The latter corresponds to a positively graded $Q$-variety   $(S(\cV_1\oplus \cV_2), \hat Q^+)$ over $\cO$ which we  describe explicitly  as follows:
\begin{equation*}
    \begin{split}
        \hat Q^+ =\enspace &\xi^1 x\frac{\partial}{\partial x}  +
        \xi^2 y\frac{\partial}{\partial x}+ \xi^3 x\frac{\partial}{\partial y} + \xi^4 y\frac{\partial}{\partial y} + \eta^1  (x \frac{\partial}{\partial \xi^2} - y \frac{\partial}{\partial \xi^1}) \\
        &+\eta^2  (x \frac{\partial}{\partial \xi^4} - y \frac{\partial}{\partial \xi^3})
        +(\xi^2\xi^3) \frac{\partial}{\partial \xi^1} +(\xi^1 \xi^2 + \xi^2\xi^4) \frac{\partial}{\partial \xi^2} -(\xi^1 \xi^3 + \xi^3\xi^4) \frac{\partial}{\partial \xi^3}\\
& - (\xi^2\xi^3) \frac{\partial}{\partial \xi^4} 
        +(\xi^2 \eta^2- \xi^4\eta^1) \frac{\partial}{\partial \eta^1} + (-\xi^1 \eta^2 + \xi^3\eta^1) \frac{\partial}{\partial \eta^2}.
    \end{split}
\end{equation*}

\noindent
 \textbf{The negative graded  part}: Here $(\cA^-, Q^- )$ is the arborescent Koszul-Tate resolution, which is computed out of a free resolution $(\mathfrak M, d)$ of $\cO/\cI$:
$$
\begin{tikzcd}
  0 \arrow[r,] & \mathfrak M_{-2} \arrow[r, "d"] & \FM_{-1} \arrow[r, "d"] & \cO \arrow[r] & 0.
\end{tikzcd}
$$
Here $\FM_{-1} = \cO^{\times 3}$, $\FM_{-2} = \cO^{\times 2}$. The differential $d$ is defined on the basis $\lbrace \pi_1, \pi_2, \pi_3\rbrace$ of $\FM_{-1}$ and $\lbrace \pi, \bar \pi\rbrace$ of $\FM_{-2}$ as 
$$
d(\pi_1) = x^2, \quad d(\pi_2) = xy, \quad d(\pi_3) = y^2,
$$
$$
d(\pi) = x  \pi_2 - y \pi_1, \quad d(\bar \pi) = x\pi_3 - y\pi_2.
$$
The arborescent Koszul-Tate resolution $(S(\cTr[\FM]),\, \delta_{\psi})$ is specified by a choice of the hook map $\psi$, which takes the following form:
$$\psi\left( \adjustbox{valign = c} {\scalebox{0.5}{  \begin{forest}
for tree = {grow' = 90}, nice empty nodes,
            [,
      [\scalebox{2}{$\pi_1$}, tier =1]
      [\scalebox{2}{$\pi_2$}, tier =1] 
         ]
\path[fill=black] (.parent anchor) circle[radius=4pt];
\end{forest}}} \right) = x\pi, \quad 
\psi\left( \adjustbox{valign = c} {\scalebox{0.5}{  \begin{forest}
for tree = {grow' = 90}, nice empty nodes,
            [,
      [\scalebox{2}{$\pi_2$}, tier =1]
      [\scalebox{2}{$\pi_3$}, tier =1] 
         ]
\path[fill=black] (.parent anchor) circle[radius=4pt];
\end{forest}}} \right) = y\bar \pi,\quad \psi\left( \adjustbox{valign = c} {\scalebox{0.5}{  \begin{forest}
for tree = {grow' = 90}, nice empty nodes,
            [,
      [\scalebox{2}{$\pi_1$}, tier =1]
      [\scalebox{2}{$\pi_3$}, tier =1] 
         ]
\path[fill=black] (.parent anchor) circle[radius=4pt];
\end{forest}}} \right) = y\pi+ x\bar \pi, \, \hbox{ and 0 in all other cases}.
$$

\noindent
\textbf{The total $\mathbb Z$-graded description}: By  the formula of Proposition \ref{prop:Q.explicit.form}, an explicit arborescent extension $(\cA, Q_
\chi)$  is determined by a retraction residue $\beta$ and a derivation $(\nabla_{\alpha = 0})_{(0)}$. Since $\FM$ consists of free modules, we choose $(\nabla_{\alpha=0}) =0$ on the basis elements $\pi_1, \pi_2, \pi_3, \pi, \bar\pi$. For degree reasons, the hook map $\chi_{(\geq 0)} = \beta_{(\geq 0)}|_{\cTr^{\geq 2}[\FM]}$ is zero (the length of $\FM$ should be  at least 3 for such $\chi_{(i)}$ to appear). Therefore we only need to describe $\beta$ on $\FM$, whose non-zero components are only $\beta_{(0)}, \beta_{(1)}$. On basis elements of $\FM$, we have $(Q_\chi)_{(0)} = -\beta_{(0)}$ and it is given by:
\begin{align*}
    (Q_\chi)_{(0)}(\pi_1) &= 2\xi^1\pi_1 + 2\xi^2\pi_2,\\  (Q_\chi)_{(0)}(\pi_2) &=\xi^1\pi_2 + \xi^2\pi_3 + \xi^3\pi_1+\xi^4\pi_4, \\ (Q_\chi)_{(0)}(\pi_3) &= 2\xi^3\pi_2 +2\xi^4\pi_3, \\
    (Q_\chi)_{(0)}(\pi) &= 2\xi^1\pi + \xi^2\bar \pi + \xi^4\pi\quad \text{and}\quad (Q_\chi)_{(0)}(\bar \pi) = \xi^1\bar \pi + \xi^3 \pi + 2\xi^4\bar \pi,
\end{align*}
while $(Q_\chi)_{(1)} = -\beta_{(1)}$ on $\FM$ is given by
$$
(Q_\chi)_{(1)}(\pi_1) = -2\eta^1\pi, \quad (Q_\chi)_{(1)}(\pi_2) = -\eta^1\bar \pi - \eta^2\pi, \quad (Q_\chi)_{(1)}(\pi_3) = -2\eta^2\bar \pi,
$$
with a vanishing contribution on $\FM_{-2}$ for degree reasons. Let us illustrate how the data above is encoded in the total differential $Q_\chi$:
$$
 Q_\chi\left(
  \adjustbox{valign = c} {\scalebox{0.5}{  \begin{forest}
for tree = {grow' = 90}, nice empty nodes,
[ ,
            [,
      [\scalebox{2}{$\pi_1$}, tier =1]
      [\scalebox{2}{$\pi_3$}, tier =1] 
         ]
         [\scalebox{2}{$\pi$}, tier = 1 ]
 ]
\path[fill=black] (.parent anchor) circle[radius=4pt]
                (!1.child anchor) circle[radius=4pt];
\end{forest}}} \right) = (A) + (B) + (C) +(D) + (E), 
$$
where
\begin{enumerate}
    \item[(A)] is of the form
    $$
  \adjustbox{valign = c} {\scalebox{0.5}{  \begin{forest}
for tree = {grow' = 90}, nice empty nodes,
            [
      [\scalebox{2}{$\pi_1$}, tier =1]
      [\scalebox{2}{$\pi_3$}, tier =1] 
         ]
\path[fill=black] (.parent anchor) circle[radius=4pt];
\end{forest}}}\,\, \pi
- 
  \adjustbox{valign = c} {\scalebox{0.5}{  \begin{forest}
for tree = {grow' = 90}, nice empty nodes,
[ ,
      [\scalebox{2}{$\pi_1$}, tier =1]
      [\scalebox{2}{$\pi_3$}, tier =1] 
         [\scalebox{2}{$\pi$}, tier = 1 ]
 ]
\path[fill=black] (.parent anchor) circle[radius=4pt];
\end{forest}}}
$$
 and is obtained from the tree operations $\r^{-1}$ and $\partial$. \item[(B)] is the contribution from $d$ acting by derivation on leaves of the tree:
    $$  
  x\adjustbox{valign = c} {\scalebox{0.5}{  \begin{forest}
for tree = {grow' = 90}, nice empty nodes,
[ ,
            [,
      [\scalebox{2}{$\pi_1$}, tier =1]
      [\scalebox{2}{$\pi_3$}, tier =1] 
         ]
         [\scalebox{2}{$\pi_2$}, tier = 1 ]
 ]
\path[fill=black] (.parent anchor) circle[radius=4pt]
                (!1.child anchor) circle[radius=4pt];
\end{forest}}}
- 
  y\adjustbox{valign = c} {\scalebox{0.5}{  \begin{forest}
for tree = {grow' = 90}, nice empty nodes,
[ ,
            [,
      [\scalebox{2}{$\pi_1$}, tier =1]
      [\scalebox{2}{$\pi_3$}, tier =1] 
         ]
         [\scalebox{2}{$\pi_1$}, tier = 1 ]
 ]
\path[fill=black] (.parent anchor) circle[radius=4pt]
                (!1.child anchor) circle[radius=4pt];
\end{forest}}}
    $$
    Note that terms where $d$ acts on leaves $\pi_1$ or $\pi_3$ are zero. This follows from the solution of the recursion in Proposition \ref{prop:KT.explicit.form}. 
    \item[(C)] is the action of the hook map  $\psi$: 
    $$ y
    \adjustbox{valign = c} {\scalebox{0.5}{  \begin{forest}
for tree = {grow' = 90}, nice empty nodes,
            [
      [\scalebox{2}{$\pi$}, tier =1]
      [\scalebox{2}{$\pi$}, tier =1] 
         ]
\path[fill=black] (.parent anchor) circle[radius=4pt];
\end{forest}}} 
+
x \adjustbox{valign = c} {\scalebox{0.5}{  \begin{forest}
for tree = {grow' = 90}, nice empty nodes,
            [
      [\scalebox{2}{$\bar \pi$}, tier =1]
      [\scalebox{2}{$\pi$}, tier =1] 
         ]
\path[fill=black] (.parent anchor) circle[radius=4pt];
\end{forest}}} 
    $$
\item[(D)] is the term obtained from the action of  $(Q_\chi)_{(0)}$ on leaves:
$$
\adjustbox{valign = c} {\scalebox{0.45}{  \begin{forest}
for tree = {grow' = 90}, nice empty nodes,
[ ,
            [,
      [\scalebox{2}{$2\xi^1\pi_1 $}, tier =1]
      [\scalebox{2}{$\pi_3$}, tier =1] 
         ]
         [\scalebox{2}{$\pi$}, tier = 1 ]
 ]
\path[fill=black] (.parent anchor) circle[radius=4pt]
                (!1.child anchor) circle[radius=4pt];
\end{forest}}}+ 
\adjustbox{valign = c} {\scalebox{0.45}{  \begin{forest}
for tree = {grow' = 90}, nice empty nodes,
[ ,
            [,
      [\scalebox{2}{$ 2\xi^2\pi_2$}, tier =1]
      [\scalebox{2}{$\pi_3$}, tier =1] 
         ]
         [\scalebox{2}{$\pi$}, tier = 1 ]
 ]
\path[fill=black] (.parent anchor) circle[radius=4pt]
                (!1.child anchor) circle[radius=4pt];
\end{forest}}} -
\adjustbox{valign = c} {\scalebox{0.45}{  \begin{forest}
for tree = {grow' = 90}, nice empty nodes,
[ ,
            [,
      [\scalebox{2}{$\pi_1$}, tier =1]
      [\scalebox{2}{$2\xi^3\pi_2$}, tier =1] 
         ]
         [\scalebox{2}{$\pi$}, tier = 1 ]
 ]
\path[fill=black] (.parent anchor) circle[radius=4pt]
                (!1.child anchor) circle[radius=4pt];
\end{forest}}}-
\adjustbox{valign = c} {\scalebox{0.45}{  \begin{forest}
for tree = {grow' = 90}, nice empty nodes,
[ ,
            [,
      [\scalebox{2}{$\pi_1$}, tier =1]
      [\scalebox{2}{$2\xi^4\pi_3$}, tier =1] 
         ]
         [\scalebox{2}{$\pi$}, tier = 1 ]
 ]
\path[fill=black] (.parent anchor) circle[radius=4pt]
                (!1.child anchor) circle[radius=4pt];
\end{forest}}}+
\adjustbox{valign = c} {{\scalebox{0.45}{  \begin{forest}
for tree = {grow' = 90}, nice empty nodes,
[ ,
            [,
      [\scalebox{2}{$\pi_1$}, tier =1]
      [\scalebox{2}{$\pi_3$}, tier =1] 
         ]
         [\scalebox{2}{$(2\xi^1+\xi^4)\pi$}, tier = 1 ]
 ]
\path[fill=black] (.parent anchor) circle[radius=4pt]
                (!1.child anchor) circle[radius=4pt];
\end{forest}}}} +
\adjustbox{valign = c} {\scalebox{0.45}{  \begin{forest}
for tree = {grow' = 90}, nice empty nodes,
[ ,
            [,
      [\scalebox{2}{$\pi_1$}, tier =1]
      [\scalebox{2}{$\pi_3$}, tier =1] 
         ]
         [\scalebox{2}{$\xi^2\bar \pi$}, tier = 1 ]
 ]
\path[fill=black] (.parent anchor) circle[radius=4pt]
                (!1.child anchor) circle[radius=4pt];
\end{forest}}} = 
$$

$$
2\xi^1\adjustbox{valign = c} {\scalebox{0.45}{  \begin{forest}
for tree = {grow' = 90}, nice empty nodes,
[ ,
            [,
      [\scalebox{2}{$\pi_1 $}, tier =1]
      [\scalebox{2}{$\pi_3$}, tier =1] 
         ]
         [\scalebox{2}{$\pi$}, tier = 1 ]
 ]
\path[fill=black] (.parent anchor) circle[radius=4pt]
                (!1.child anchor) circle[radius=4pt];
\end{forest}}}+ 2\xi^2
\adjustbox{valign = c} {\scalebox{0.45}{  \begin{forest}
for tree = {grow' = 90}, nice empty nodes,
[ ,
            [,
      [\scalebox{2}{$ \pi_2$}, tier =1]
      [\scalebox{2}{$\pi_3$}, tier =1] 
         ]
         [\scalebox{2}{$\pi$}, tier = 1 ]
 ]
\path[fill=black] (.parent anchor) circle[radius=4pt]
                (!1.child anchor) circle[radius=4pt];
\end{forest}}} +2\xi^3
\adjustbox{valign = c} {\scalebox{0.45}{  \begin{forest}
for tree = {grow' = 90}, nice empty nodes,
[ ,
            [,
      [\scalebox{2}{$\pi_1$}, tier =1]
      [\scalebox{2}{$\pi_2$}, tier =1] 
         ]
         [\scalebox{2}{$\pi$}, tier = 1 ]
 ]
\path[fill=black] (.parent anchor) circle[radius=4pt]
                (!1.child anchor) circle[radius=4pt];
\end{forest}}} +2\xi^4
\adjustbox{valign = c} {\scalebox{0.45}{  \begin{forest}
for tree = {grow' = 90}, nice empty nodes,
[ ,
            [,
      [\scalebox{2}{$\pi_1$}, tier =1]
      [\scalebox{2}{$\pi_3$}, tier =1] 
         ]
         [\scalebox{2}{$\pi$}, tier = 1 ]
 ]
\path[fill=black] (.parent anchor) circle[radius=4pt]
                (!1.child anchor) circle[radius=4pt];
\end{forest}}}+ (2\xi^1+\xi^4)
\adjustbox{valign = c} {{\scalebox{0.45}{  \begin{forest}
for tree = {grow' = 90}, nice empty nodes,
[ ,
            [,
      [\scalebox{2}{$\pi_1$}, tier =1]
      [\scalebox{2}{$\pi_3$}, tier =1] 
         ]
         [\scalebox{2}{$\pi$}, tier = 1 ]
 ]
\path[fill=black] (.parent anchor) circle[radius=4pt]
                (!1.child anchor) circle[radius=4pt];
\end{forest}}}} + \xi^2
\adjustbox{valign = c} {\scalebox{0.45}{  \begin{forest}
for tree = {grow' = 90}, nice empty nodes,
[ ,
            [,
      [\scalebox{2}{$\pi_1$}, tier =1]
      [\scalebox{2}{$\pi_3$}, tier =1] 
         ]
         [\scalebox{2}{$\bar \pi$}, tier = 1 ]
 ]
\path[fill=black] (.parent anchor) circle[radius=4pt]
                (!1.child anchor) circle[radius=4pt];
\end{forest}}}.
$$
\item[(E)] is the  term obtained from the action of $(Q_\chi)_{(1)}$ on leaves:
$$
\adjustbox{valign = c} {\scalebox{0.45}{  \begin{forest}
for tree = {grow' = 90}, nice empty nodes,
[ ,
            [,
      [\scalebox{2}{$-2\eta^1\pi $}, tier =1]
      [\scalebox{2}{$\pi_3$}, tier =1] 
         ]
         [\scalebox{2}{$\pi$}, tier = 1 ]
 ]
\path[fill=black] (.parent anchor) circle[radius=4pt]
                (!1.child anchor) circle[radius=4pt];
\end{forest}}}- 
\adjustbox{valign = c} {\scalebox{0.45}{  \begin{forest}
for tree = {grow' = 90}, nice empty nodes,
[ ,
            [,
      [\scalebox{2}{$\pi_1$}, tier =1]
      [\scalebox{2}{$-2\eta^2\bar \pi$}, tier =1] 
         ]
         [\scalebox{2}{$\pi$}, tier = 1 ]
 ]
\path[fill=black] (.parent anchor) circle[radius=4pt]
                (!1.child anchor) circle[radius=4pt];
\end{forest}}} = -2\eta^1
\adjustbox{valign = c} {\scalebox{0.45}{  \begin{forest}
for tree = {grow' = 90}, nice empty nodes,
[ ,
            [,
      [\scalebox{2}{$\pi$}, tier =1]
      [\scalebox{2}{$\pi_3$}, tier =1] 
         ]
         [\scalebox{2}{$\pi$}, tier = 1 ]
 ]
\path[fill=black] (.parent anchor) circle[radius=4pt]
                (!1.child anchor) circle[radius=4pt];
\end{forest}}}+2\eta^2
\adjustbox{valign = c} {\scalebox{0.45}{  \begin{forest}
for tree = {grow' = 90}, nice empty nodes,
[ ,
            [,
      [\scalebox{2}{$\pi_1$}, tier =1]
      [\scalebox{2}{$\bar\pi$}, tier =1] 
         ]
         [\scalebox{2}{$\pi$}, tier = 1 ]
 ]
\path[fill=black] (.parent anchor) circle[radius=4pt]
                (!1.child anchor) circle[radius=4pt];
\end{forest}}}
$$
\end{enumerate}

\subsection{Symmetries of  a Koszul function $\varphi$}
Let $V$ be a vector space of dimension $m$ and $\mathcal O$ be the algebra smooth functions on $V$. A function $\varphi \in \mathcal O$ 
is said to be a \emph{Koszul}, if the following \emph{complex} 
\begin{equation}
    \label{eq:KoszulComplex}
\cdots\xrightarrow{\iota_{\varphi}}\X^3 (V) \xrightarrow{\iota_{\varphi}}\X^2 (V)\xrightarrow{\iota_{\varphi}}\X(V) \xrightarrow{\iota_{\varphi}}\mathcal O \longrightarrow 0\end{equation}
is exact in all degrees, except for degree $0$. Here, $\iota_\varphi$ stands for the contration by $d\varphi$. By virtue of a theorem of Koszul \cite{Eisenbud1995}, see \cite{Matsumura} Theorem 16.5 $(i)$, $\varphi$ is \emph{Koszul} if $\left(\frac{\partial\varphi}{\partial x_1} ,\cdots,\frac{\partial\varphi}{\partial x_m}\right) $ is a regular sequence. 

From now on, we choose $\varphi $ a Koszul function, and consider the Lie-Rinehart algebra \begin{equation}\label{varphi}
\mathfrak{A}_\varphi:=\{X\in\mathfrak{X}(V):X[\varphi]=0\} = {\mathrm{Ker}}( \iota_{\varphi} )\colon \X(V) \xrightarrow{\iota_{\varphi}}\mathcal O 
.\end{equation} 
 The Koszul complex \eqref{eq:KoszulComplex} truncated of its degree $0$ term, namely, \begin{equation}
    \label{eq:KoszulComplex2}
\cdots\xrightarrow{\ell=\iota_{\varphi}}\X^3 (V) \xrightarrow{\ell=\iota_{\varphi}}\X^2 (V)\xrightarrow{\rho=\iota_{\varphi}}\X(V) \end{equation}  gives  a free resolution  
 of $\mathfrak{A}_\varphi$. Notice that the exactness of the Koszul complex implies in particular that $\mathfrak{A}_\varphi$ is generated by the vector fields 
: \begin{equation}\left\lbrace\frac{\partial\varphi}{\partial x_i}\frac{\partial}{\partial x_j}-\frac{\partial\varphi}{\partial x_j}\frac{\partial}{\partial x_i},\mid 1\leq i<j\leq m\right\rbrace.
 \end{equation}

 We associate a $\mathbb Z$-graded $Q$-variety with $\mathfrak A_\varphi$ as follows:

 \noindent
  \textbf{The positive graded part}. There is a explicit  positively graded $Q$-variety $(\hat \cA^+, \hat{Q}^+)$ over $\cO$ associated with $\mathfrak A_\cI$, which is constructed out of the free resolution \eqref{eq:KoszulComplex2} \cite{LLS, CLRL}. Here $\hat\cA^+:= S(\cV)$ with $\cV_{i}:=\Omega^{i+1}(V)$, $i\geq 1$.  The homological vector field $\hat Q^+$ is given as follows \begin{equation*}
         \hat{Q}^+=\sum_{i=1}^{m}\sum_{j=1}^md_{\{i,j\}}\frac{\partial\varphi}{\partial x_j}\frac{\partial}{\partial x_i}+\sum_{I_1,\ldots,I_n}\sum_{i_{1}\in I_{1},\ldots,i_{n}\in I_{n}}\epsilon(i_{1},\ldots,i_{n})\frac{\partial^{n}\varphi}{\partial x_{i_1}\cdots\partial x_{i_n}}d_{{I_{1}\bullet\cdots\bullet I_{n}}}\dfrac{\partial}{\partial d_{I_{1}^{i_{1}}\bullet\cdots\bullet I_{n}^{i_{n}}}}.
     \end{equation*}
Notice that $\hat{Q}_+$ satisfies $\hat{Q}^+(\varphi)=0$.

Let us explain the notations: for every multi-index $J=\left\lbrace j_1,\ldots ,j_n\right\rbrace\subseteq\left\lbrace 1,\ldots,d\right\rbrace$ of length $n\geq 2$, $d_J$ stands for the $n$-form $dx_{j_1}\wedge\cdots\wedge dx_{j_n}$. 
  Also, $I_{1}\bullet\cdots\bullet I_{n}$ is a multi-index obtained by concatenation of $n$ multi-indices $I_{1},\ldots,I_{n}$. 
  For every $i_1\in I_1,\ldots,i_n\in I_n$,\;$\epsilon(i_1,\ldots,i_n)$ is the signature of the permutation which brings $i_1,\ldots,i_n$ to the first $n$ slots of $I_{1}\bullet\cdots\bullet I_{n}$. Last, for $i_k\in I_k$, we define $I_{k}^{i_k}:=I_k\setminus \{i_k\}$.\\

\noindent
    \textbf{The Koszul-Tate part}. $\varphi$ is not a zero divisor in $\mathcal{O}$, thus $\cV_{-1}\stackrel{\delta}{\rightarrow} \mathcal{O}, \; e\mapsto \varphi$, where $\cV_{-1} \cong \cO$ with a basis element $e$, is a Koszul-Tate resolution of the ideal $\langle \varphi\rangle$ generated by $\varphi$. The latter coincides with the arborescent Koszul-Tate resolution.\\

        \noindent
        \textbf{The total $\mathbb Z$-graded description}: 
         We obtain a $\mathbb Z$-graded variety $(\cA, Q)$ over $\cO$ whose sheaf of graded functions is $$\cA= \cV_{-1}\oplus \mathcal{O}\oplus \left(\cV_{-1}\odot \hat\cA^+\right)\oplus \hat \cA^+.$$ 
         The total homological vector field  on $\cA$ is simply
             \begin{equation}
                 \label{eq:simple-total-Q}Q:={\delta} + \hat{Q}^+\end{equation} where $\delta $ is extended  by zero on $\hat \cA^+$, and $\hat{Q}^+$ is extended on the negatively graded generators by zero. We have $Q^2=0$, since $(\hat{Q}^+)^2=0$ and $\hat{Q}^+(\varphi)=0$, also \([\delta, \hat Q^+]=0\).


\appendix

\section{Extensions with Koszul complex vs. arborescent Koszul-Tate resolution}\label{sec:appendix}

As an important particular case, it is interesting to compare two $\mathbb Z$-graded extensions, one with the Koszul complex serving as the Koszul-Tate resolution in the negative degree part, while the other utilizes the Koszul complex as the building block of the corresponding arborescent Koszul-Tate resolution. The comparison between these two resolutions was studied in \cite{hancharuk2024, hancharuk:tel-04692988}. Let us recall first the main ingredients:

Let $\cI\subsetneq\cO$ be an ideal generated by a regular sequence $\varphi_1, \dots, \varphi_n$. 
Let \(\cV_{-1}\) be a free \(\mathcal O\)-module of rank \(n\) with basis 
\(\{e_1,\dots,e_n\}\), and consider the symmetric algebra $ S(\cV_{-1})$.
Define an \(\mathcal O\)-linear derivation 
\[
\delta : S(\cV_{-1}) \longrightarrow S(\cV_{-1})
\]
of degree \(+1\) by 
\[
\delta(e_i) = \varphi_i, \qquad i = 1,\dots,n,
\]
and extend \(\delta\) uniquely to all of \(S(\cV_{-1})\) by Leibniz rule. The differential graded algebra \((S(\cV_{-1}),\delta)\)
is the Koszul complex associated with the sequence
\((\varphi_1,\dots,\varphi_n)\).  
Since the sequence is regular, this complex is exact in all negative degrees, and its 
degree-zero cohomology satisfies
\[
H^0(S(\cV_{-1}),\delta) \cong \mathcal O / \mathcal I.
\]

Therefore, \((S(\cV_{-1}),\delta)\) is a Koszul--Tate resolution of 
\(\mathcal O / \mathcal I\). Alternatively, one can use an arborescent Koszul-Tate resolution $(S(\cTr[\FM]), \delta_\psi)$ built from a free resolution $(\FM, d)$ of $\cO/\cI$. We choose $(\FM, d)$ to be isomorphic to the Koszul complex. Let us denote the isomorphism $\phi\colon (\FM,d) \rightarrow (S(\cV_1),\delta)$. It can be easily checked that $\phi$ is a dg-algebra morphism, when one equips $(\FM, d)$ with a product $\cdot$:
\begin{equation}
\label{eq:mult.koszul}
a\cdot b \coloneqq \phi^{-1}(\phi(a)\phi(b)),\quad \forall\, a,b \in \FM.
\end{equation}
The product $\cdot$ is associative, and is compatible with $d$, i.e., $d(a\cdot b) = d(a) \cdot b + (-1)^{|a|}a\cdot d(b)$ for all homogeneous $a,b \in \FM$. As expected, the hook  map $\psi$ is completely specified by the product $\cdot$, as explained in the following proposition.
\begin{proposition}[\cite{hancharuk:tel-04692988,hancharuk2024}]\label{prop:KoszulHook}
    Let  $S(\cTr[\FM], \delta_{\psi})$ be an arborescent Koszul-Tate resolution of $\cO/\cI$ built from a Koszul complex $ (\FM, d)$. Then, the hook map $\psi$ can be chosen as follows: \begin{equation}
    \label{eq:psi.koszul}
\psi \left (  
  \adjustbox{valign = c} {\scalebox{0.5}{  \begin{forest}
for tree = {grow' = 90}, nice empty nodes,
[ ,
      [\scalebox{2}{$a$}, tier =1]
      [ \scalebox{2}{$\dots$}, edge=dotted, tier =1]
         [\scalebox{2}{$b$}, tier = 1 ]
 ]
\path[fill=black] (.parent anchor) circle[radius=4pt];
\end{forest}}} \right) = a\cdot  \dots \cdot b,
    \end{equation}
and $\psi$ vanishes on trees with $\geq 1$ inner vertices. Here "$\cdots$" is either empty (so the tree is binary), or it is a collection of  leaves decorated by $\FM$.
\end{proposition}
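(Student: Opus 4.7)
The plan is to verify directly that the proposed map $\psi$ satisfies the master recursion
\begin{equation*}
d\psi = \ptr \circ \delta_\psi \circ \r^{-1} + \psi\circ \r \circ \p \circ \delta_\psi \circ \r^{-1}
\end{equation*}
from Remark \ref{rem:psi.mult}, which is exactly the condition equivalent to $\delta_\psi^2 = 0$. Once this is established, Theorem \ref{thm:arborescentKT} ensures that $(S(\cTr[\FM]),\delta_\psi)$ is a valid arborescent Koszul--Tate resolution of $\cO/\cI$ with hook map $\psi$.

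First I would use the isomorphism $\phi\colon (\FM,d)\xrightarrow{\sim}(S(\cV_{-1}),\delta)$ to transport the graded symmetric multiplication on $S(\cV_{-1})$ to a product $\cdot$ on $\FM$, as in \eqref{eq:mult.koszul}. Since $\delta$ is a graded derivation of a graded commutative algebra, the transported product $\cdot$ is associative, graded commutative, and $d$ is a graded derivation for $\cdot$, namely $d(a\cdot b) = d(a)\cdot b + (-1)^{|a|}a\cdot d(b)$. Graded commutativity of $\cdot$, combined with the iterated-product formula \eqref{eq:psi.koszul}, shows that the proposed $\psi$ is compatible with the symmetric equivalence relation \eqref{eq:sym.trees}, hence descends to a well-defined $\cO$-linear map $\cTr[\FM]\to \FM$ of degree $+1$.

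Next I would verify the recursion by splitting into two cases on a representative tree $t[a_1,\dots,a_n]$. If $t$ has at least one inner vertex, then $\psi(t[a_1,\dots,a_n])=0$ so the left-hand side vanishes; on the right-hand side, applying $\r^{-1}$ produces a forest each of whose pieces is itself a non-trivial tree, and an induction on the number of leaves (using that $\psi$ vanishes on every decorated subtree of depth $\geq 1$ appearing after any partial contraction $\partial_A$ introduced by $\delta_\psi$) reduces everything to combinations of star-tree terms. For a star tree $t_n[a_1,\dots,a_n]$, the identity $\r^{-1}(t_n[a_1,\dots,a_n])=a_1\odot\cdots\odot a_n$ together with the graded Leibniz rule gives
\begin{equation*}
\delta_\psi(a_1\odot\cdots\odot a_n)=\sum_i(-1)^{|a_1|+\cdots+|a_{i-1}|}\,a_1\odot\cdots\odot d(a_i)\odot\cdots\odot a_n,
\end{equation*}
since $\delta_\psi$ coincides with $d$ extended by Leibniz on the trivial-tree subspace $\FM\subset S(\cTr[\FM])$. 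Applying $\ptr$ selects the polynomial-degree-one summand (which collapses to $\FM$ when a $d(a_i)\in\cO$), while $\psi\circ\r\circ\p$ re-multiplies the remaining monomials via $\cdot$. By associativity of $\cdot$ and the graded Leibniz identity $d(a_1\cdots a_n)=\sum_i(\pm)\,a_1\cdots d(a_i)\cdots a_n$, the two contributions reassemble to $d(a_1\cdots a_n)=d\psi(t_n[a_1,\dots,a_n])$.

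The main obstacle will be the star-tree case for $n\geq 3$, where one must carefully track the signs produced by $\p$ and by the root map across several applications of the Leibniz rule, and check that the proposed vanishing of $\psi$ on trees with inner vertices is consistent with the recursion at \emph{every} stage of the induction (not just for the output). Conceptually, the cleanest way to present this is to interpret $\psi$ via the $C_\infty$-picture of Remark \ref{rem:psi.mult}: a hook map is equivalent to a $C_\infty$-structure on $\FM$ compatible with $d$, and the genuine associative graded commutative product $\cdot$ on $\FM$ inherited from the Koszul complex corresponds to the $C_\infty$-structure with vanishing higher brackets $m_k=0$ for $k\geq 3$. In this interpretation, formula \eqref{eq:psi.koszul} is nothing but the arborescent encoding of iterated application of the binary product, and the vanishing of $\psi$ on trees with inner vertices reflects the vanishing of the higher $C_\infty$-operations on $\FM$.
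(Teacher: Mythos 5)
The paper never proves Proposition~\ref{prop:KoszulHook}: it is imported as a citation from \cite{hancharuk:tel-04692988,hancharuk2024}, so there is no in-paper argument to compare against. Your route — transport the multiplication of $S(\cV_{-1})$ to $\FM$ via $\phi$, observe that graded commutativity makes \eqref{eq:psi.koszul} descend to symmetric trees, and then verify the defining recursion \eqref{eq:psi.recursion} of Remark~\ref{rem:psi.mult} for this explicit $\psi$ — is the natural direct verification and it does go through; it is also the same kind of computation the paper itself performs in the proof of Proposition~\ref{prop:Kos-arb}, where an analogous cancellation is made explicit. One caveat on the framing: Theorem~\ref{thm:arborescentKT} as stated here is only an existence statement for some $\psi$, so to conclude that your particular $\psi$ yields a Koszul--Tate resolution you must invoke the stronger statement of \cite{hancharuk2024} that any solution of \eqref{eq:psi.recursion} does (acyclicity coming from the retract \eqref{eq:hom.retract0}), not the literal wording of Theorem~\ref{thm:arborescentKT}.

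Where your sketch needs sharpening is that you place the difficulty in the wrong case. The star-tree case for $n\geq 3$ is only Leibniz bookkeeping: each summand of $\delta_\psi(a_1\odot\cdots\odot a_n)$ is kept by $\p$, re-rooted by $\r$, and collapsed by $\psi$ to $\pm\,a_1\cdots d(a_i)\cdots a_n$, and the sum is $d(a_1\cdots a_n)$ (for $n\geq 3$ the $\ptr$ term never contributes, even when some $d(a_i)\in\cO$). The genuine content is the inner-vertex case, and the mechanism there is not the one you name ("$\psi$ vanishes on subtrees of depth $\geq 1$"): when $\delta_\psi$ hits a child star subtree $\theta$ of the root, its recursion produces both the un-rooted term $\r^{-1}(\theta)$ and the collapsed term $-\psi(\theta)$; after $\r\circ\p$ and $\psi$ these give, in the simplest instance, $a_1\cdot a_2\cdot a_3-(a_1\cdot a_2)\cdot a_3$, which vanishes precisely by associativity of the transported product, while the contributions of $-\r\circ\p\circ\delta_\psi\circ\r^{-1}(\theta)$ and of deeper inner vertices are handled by the induction. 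Exhibiting that associativity cancellation is essentially the whole proof, so it should be stated, not absorbed into "reduces everything to star-tree terms". (Also, $\r^{-1}$ of such a tree is a forest whose pieces may well be trivial trees, contrary to your claim.) Finally, the closing $C_\infty$ gloss is loose: with the conventions in force, $\psi$ on the $n$-leaf star tree equals $a_1\cdots a_n\neq 0$ for $n\geq 3$, so \eqref{eq:psi.koszul} cannot literally be read as "higher operations vanish" without first matching how the hook map encodes the $C_\infty$-operations; this does not affect the argument since the gloss is not used.
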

Proposition \ref{prop:KoszulHook} shows that $\psi$ contains no new information but the product $\star$, which is determined by the multiplication in $S(\cV_{-1})$. We explore such similarities further, in particular the task of extending positive graded $Q$-varieties over $\cO$ either with an exact Koszul complex, or with the arborescent Koszul-Tate resolution.
\begin{proposition}\label{prop:Kos-arb}
    Let $(\hat \cA^+, \hat Q^+)$ be a positive graded $Q$-variety over $\cO$ preserving $\cI\subsetneq \cO$. Let $(\FM,d) \cong(S(\cV_{-1}), \delta)$ be the exact Koszul complex resolving $\cO/\cI$, and let $(S(\cV_{-1})\bar \odot\hat \cA^+,  Q^K)$ be a $\mathbb Z$-graded extension of  $(S(\cV_{-1}), \delta)$ and $(\hat\cA^+, \hat Q^+)$. There exists an explicit arborescent extension $(\cA, Q_\chi)$ of $S(\cTr[\FM], \delta_{\psi})$, $(\hat \cA^+, \hat Q^+)$  such that 
    \begin{itemize}
        \item [$\bullet$] The hook map $\chi_{(-1)} = \psi$ is as in \eqref{eq:psi.koszul}. The components of $\chi$ of negative degree $\geq 0$ vanish, i.e. $\chi_{(\geq0)} =0$.
        \item[$\bullet$] $Q_\chi|_{\FM\bar \odot \hat \cA^+\oplus \hat \cA^+}$ is given by $Q^K$.
    \end{itemize}
\end{proposition}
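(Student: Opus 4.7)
The plan is to exhibit an explicit arborescent extension whose data on $\FM \bar\odot \hat\cA^+ \oplus \hat\cA^+$ is copied directly from $Q^K$ via the DG-algebra isomorphism $\phi\colon (\FM,d) \xrightarrow{\sim} (S^{\geq 1}(\cV_{-1}), \delta|_{S^{\geq 1}(\cV_{-1})})$, and then to invoke the construction of Theorem~\ref{thm:main2} to extend it to all of $\cA$.

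First, I would verify that $Q^K$ preserves the subspace $\FM \bar\odot \hat\cA^+ \oplus \hat\cA^+$ (identified with $S(\cV_{-1})\bar\odot\hat\cA^+$ via $\phi$) by a direct negative-degree count: the component $Q^K_{(-1)} = \delta$ sends $\FM_{-n}$ into $\FM_{-n+1}$ for $n \geq 2$ and into $\cO \subset \hat\cA^+$ for $n=1$, while for each $k \geq 0$ the component $Q^K_{(k)}$, being a degree $+1$ derivation of $S(\cV_{-1})$ of negative degree $k$, raises the polynomial degree in $\cV_{-1}$ by $k$ and hence maps $\FM = S^{\geq 1}(\cV_{-1})$ into $\FM \bar\odot \hat\cA^+$. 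Moreover, since $Q^K|_{\hat\cA^+} = \hat Q^+$ has only a negative-degree-$0$ component, the requirement $Q^K_{(k)}|_{\hat\cA^+} = 0$ for $k \neq 0$ of Lemma~\ref{prop:arb.ext.exists} is automatically satisfied.

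With these identifications in hand, I would fix $\chi_{(-1)} := \psi$ as in \eqref{eq:psi.koszul} and $\chi_{(k)} := 0$ for $k \geq 0$, and then set $Q_\chi|_{\hat\cA^+} := \hat Q^+$ and $Q_\chi|_\FM := Q^K|_\FM$. In the language of Theorem~\ref{thm:main2}(ii), this prescribes $(\nabla_\alpha)_{(0)}|_\FM := Q^K_{(0)}|_\FM$ with $\beta_{(0)}|_\FM := 0$, and $\beta_{(k)}|_\FM := -Q^K_{(k)}|_\FM$ for $k \geq 1$ (both $\FM \bar\odot \hat\cA^+$-valued by the previous step). I would then extend $(\nabla_\alpha)_{(0)}$ to $\cTr^{\geq 2}[\FM]$ via the recursion $(\nabla_\alpha)_{(0)} \circ \r \circ \p + \r \circ \p \circ (\nabla_\alpha)_{(0)} = 0$, and extend $Q_\chi$ to all of $\cA$ through the explicit recursive formula of Proposition~\ref{prop:Q.explicit.form}.

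The main obstacle is the verification of $Q_\chi^2 = 0$. By the derivation property, this reduces to checking on the generators: on $\hat\cA^+$ it is $(\hat Q^+)^2 = 0$, on $\FM$ it is $(Q^K)^2|_\FM = 0$ by hypothesis (using $\delta_\psi|_\FM = d$ translated by $\phi$ to $\delta$), and on $\cTr^{\geq 2}[\FM]$ one proceeds by induction on the negative degree following the proof of Theorem~\ref{thm:main2}. The consistency of the choice $\beta|_{\cTr^{\geq 2}[\FM]} = 0$ at each inductive step amounts to the vanishing of the obstruction $p \circ rhs$ on non-trivial trees; thanks to Proposition~\ref{prop:KoszulHook}, which identifies $\psi$ with the multiplication in $S(\cV_{-1})$, together with the explicit expansion of $\delta_\psi$ in Proposition~\ref{prop:KT.explicit.form}, this obstruction can be expressed tree-by-tree in terms of leaf decorations and reduces, via the derivation property of $Q^K$, to the already verified square-zero relation on $\FM \bar\odot \hat\cA^+$.
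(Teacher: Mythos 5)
Your proposal is correct and follows essentially the same route as the paper: define $Q_\chi$ to agree with $Q^K$ on $\FM\bar\odot\hat\cA^+\oplus\hat\cA^+$, take $\chi_{(-1)}=\psi$ and $\chi_{(\geq 0)}=0$ with the recursion of Theorem~\ref{thm:main2}(iii)/Proposition~\ref{prop:Q.explicit.form}, and reduce $Q_\chi^2=0$ to the facts that the Koszul hook $\psi$ of Proposition~\ref{prop:KoszulHook} vanishes on trees with inner vertices and acts as the product on corollas, where the obstruction cancels by the Leibniz rule of $Q^K$. The only (presentational) difference is that the paper verifies $Q_\chi^2=0$ self-containedly by induction on the number of inner vertices via the explicit identity \eqref{eq:Qsquare.Koszul2}, whereas you phrase the same cancellation as the vanishing of the $p\circ rhs$ obstruction in the inductive scheme of Theorem~\ref{thm:gen.case.computations}, which justifies the choice $\beta|_{\cTr^{\geq 2}[\FM]}=0$.
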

\begin{proof} The proof is obtained through a direct computation. The task is to check that the derivation $Q_\chi$ defined in Proposition \ref{prop:Kos-arb} squares to zero. 
    We show this by induction on the number of inner vertices. For trivial trees, i.e., for all elements $a \in \FM$, we have by construction $Q_\chi(a) = Q^K(a) \in \FM\bar \odot \hat \cA^+$. Then $(Q_\chi)^2(a) = (Q^K)^2 (a) = 0 $. On trees with a root and $N\geq 0$ inner vertices,  $Q_\chi|_{\cTr^{\geq 2}[\FM]}$ reads $Q_\chi = \delta_{\psi} - \r\circ (Q_\chi)_{(\geq 0)} \circ \r^{-1}$, by Theorem \ref{thm:main2}(iii). Therefore, upon evaluation on any element $a \in \cTr^{\geq 2}[\FM]$:
    \begin{align}
    \label{eq:Qsquare.Koszul}
        (Q_{\chi})^2 &= (\delta_{\psi} + (Q_{\chi})_{(\geq 0)})(\delta_{\psi} - \r\circ (Q_{\chi})_{(\geq 0)}\circ \r^{-1}) \\ &=(Q_{\chi})_{(\geq 0)}\circ \delta_{\psi} + \r\circ (Q_{\chi})_{(\geq 0)}^2\circ \r^{-1} -\delta_{\psi}\circ \r\circ (Q_{\chi})_{(\geq 0)}\circ \r^{-1} \nonumber
    \end{align}
    The first summand on the r.h.s. of \eqref{eq:Qsquare.Koszul} can be written as:
    \begin{align*}
    (Q_{\chi})_{(\geq 0)}\circ \delta_{\psi}(a) &= \left( (Q_{\chi})_{(\geq 0)} \circ \r^{-1} - (Q_{\chi})_{(\geq 0)}\circ \r \circ \p \circ \delta_{\psi}\circ \r^{-1} - (Q_{\chi})_{(\geq 0)}\circ \psi \right) (a)  \\
    &= \left((Q_{\chi})_{(\geq 0)} \circ \r^{-1} + \r \circ (Q_{\chi})_{(\geq 0)} \circ \p \circ \delta_{\psi}\circ \r^{-1} - (Q_{\chi})_{(\geq 0)}\circ \psi \right) (a)\\
    &= \left ((Q_{\chi})_{(\geq 0)} \circ \r^{-1} + \r \circ \p\circ (Q_{\chi})_{(\geq 0)}  \circ \delta_{\psi}\circ \r^{-1} - (Q_{\chi})_{(\geq 0)}\circ \psi\right) (a).
    \end{align*}
    The transition from the second line to the third is due to $(Q_\chi)_{(\geq 0)}$ mapping $\cTr[\FM]\to \cTr[\FM]\bar \odot\hat\cA^+$. For this reasons, we can rewrite the second term on the r.h.s of \eqref{eq:Qsquare.Koszul} as
    $$
\r\circ (Q_{\chi})_{(\geq 0)}^2\circ \r^{-1}(a) = \r\circ \p\circ (Q_{\chi})_{(\geq 0)}^2\circ \r^{-1}(a).
    $$
    The third summand in \eqref{eq:Qsquare.Koszul} we rewrite in the following way:
    $$
    -\delta_{\psi}\circ \r\circ (Q_{\chi})_{(\geq 0)}\circ \r^{-1}(a) = \left( - (Q_{\chi})_{(\geq 0)}\circ \r^{-1} + \r \circ \p \circ \delta_{\psi}\circ (Q_{\chi})_{(\geq 0)}\circ \r^{-1} + \psi \circ \r\circ (Q_{\chi})_{(\geq 0)}\circ \r^{-1} \right)(a)
    $$
    All terms combined together can be rewritten as follows:
    \begin{equation}
        \label{eq:Qsquare.Koszul2}
        (Q_{\chi})^2(a) = \left(\r \circ \p \circ (Q_{\chi})^2\circ \r^{-1} + \psi \circ \r \circ (Q_{\chi})_{(\geq 0)}\circ \r^{-1} - (Q_{\chi})_{(\geq 0)}\circ \psi \right)(a).
    \end{equation}
    From Equation \eqref{eq:psi.koszul} it follows that there is a distinguished case when $a$ contains a root and no inner vertices, otherwise the last two summands in \eqref{eq:Qsquare.Koszul2} vanish. Let $a\in \cTr^{\geq 2}[\FM]$ be a tree obtained by rooting elements $b_1,\dots, b_k \in \FM$. Then $\psi(a) = b_1\cdot \ldots \cdot b_k$, which is just $b_1\odot \ldots \odot b_k$ where each $b_i$ is viewed as an element of $S(\cV_{-1})$. Upon this identification, the second and the third term constitute the Leibniz rule for $Q^K$, that is:
    \begin{align*}
    &(\psi \circ \r \circ (Q_{\chi})_{(\geq 0)}\circ \r^{-1} - (Q_{\chi})_{(\geq 0)}\circ \psi )(a) =\\ &\left( \sum_{j=1}^k (-1)^{|b_1| +\cdots +|b_{j-1}|}b_1\odot b_{j-1}\odot Q^K(b_j)\odot b_{j+1}\odot \ldots \odot b_k\right) - Q^k(b_1\odot \ldots \odot b_k) = 0
    \end{align*}
    Now it is clear that for the number of inner vertices $N= 0$, \[(Q_{\chi})^2(a) = \r \circ \p \circ (Q_\chi)^2\circ \r^{-1}(a) = 0.\] By a simple recursion, $(Q_{\chi})^2 = 0$ on trees with an arbitrary number of inner vertices $N$.
\end{proof}
\section{Higher order multiplications on projective resolutions}
\label{sec:h.mult}

 Theorem \ref{thm:main2} gave an explicit arborescent extension $(\cA, Q_\chi)$ of any positively graded $Q$-variety over $\cO$ that preserves an ideal $\cI\subset \cO$.  The latter extension comes equipped with what we called here a hook map $\chi\colon \cTr^{\geq 2}[\FM]\to \FM\bar \odot\hat \cA^+$. By analogy to the hook map  of an arborescent Koszul-Tate resolution of $\cO/\cI$,  the  map $\chi$ induces in particular a $\FM\bar \odot\hat \cA^+$-valued product on $\FM$ which we now describe.

To start with, the negative degree $0$ component $\chi_{(0)}$ of $\chi$ induces an $\cO$-linear multiplication $\star_{(1)}$ of negative degree $+1$ on $\FM$  as follows:
    \normalfont 
    $$
a\star_{(1)} b\coloneqq  \chi_{(0)}\left( \adjustbox{valign = c} {\scalebox{0.5}{  \begin{forest}
for tree = {grow' = 90}, nice empty nodes,
            [,
      [\scalebox{2}{$a$}, tier =1]
      [\scalebox{2}{$b$}, tier =1] 
         ]
\path[fill=black] (.parent anchor) circle[radius=4pt];
\end{forest}}} \right),
    $$
    for all $a,b \in \FM$. Applying $\mathrm{ad_{\delta_{\psi}}}$ to the product, we obtain the following relations, for $a\in \FM_{(i)},\, b \in \FM_{(j)},\,\, i,j \geq 1$:
   \begin{equation}
       \begin{cases}
           \delta_{\psi}(a\star_{(1)}b) - \delta_{\psi} (a) \star_{(1)} b- (-1)^{i}a\star_{(1)}\delta_\psi (b) =- (Q_\chi)_{(0)} (a\star b) +(Q_\chi)_{(0)}(a)\star b +(-1)^{i}a\star (Q_\chi)_{(0)}(b),\enspace \\\hbox{if $i,j \geq 2$,} \\ \\
           \delta_\psi(a\star_{(1)}b) +a\star_{(1)}\delta_\psi (b) =- (Q_\chi)_{(0)} (a\star b) +(Q_\chi)_{(0)}(a)\star b -a\star (Q_\chi)_{(0)}(b) ,\enspace\hbox{if $i =1, j\geq 2$,} \\ \\
           \delta_\psi (a\star_{(1)}b)  =- (Q_\chi)_{(0)} (a\star b) +(Q_\chi)_{(0)}(a)\star b -a\star (Q_\chi)_{(0)}(b) ,\enspace\hbox{if $i,j =1$.} \\
       \end{cases}
   \end{equation} 
   In other words, the product $\star_{(1)}$ measures the violation of the Leibniz rule by $(Q_\chi)_{(0)}$ on $\FM$. 
   
   In general, for $k\geq 2$,  we define a product $\star_{(k)}$:
       $$
a\star_{(k)} b\coloneqq  \chi_{(k-1)}\left( \adjustbox{valign = c} {\scalebox{0.5}{  \begin{forest}
for tree = {grow' = 90}, nice empty nodes,
            [,
      [\scalebox{2}{$a$}, tier =1]
      [\scalebox{2}{$b$}, tier =1] 
         ]
\path[fill=black] (.parent anchor) circle[radius=4pt];
\end{forest}}} \right),
    $$
    that measures the violation of Leibniz rule of a family of products $\{\star_{(i)}\}_{0\leq i\leq k-1}$ and derivations $(Q_{\chi})_{(j)}$, $0\leq j\leq k-1$, with the convention that "$\star_{(0)} := \star$" is the product induced by the hook map $\psi$, see Remark \ref{rem:psi.mult}. In more details,
  \begin{equation}
       \begin{cases}
           \delta_{\psi}(a\star_{(k)}b) - \delta_{\psi} (a) \star_{(k)} b- (-1)^{i}a\star_{(k)}\delta_\psi (b) =\\ \\ \qquad \qquad \sum\limits_{\substack{n+m=k-1 \\ n,m\geq 0}}- (Q_\chi)_{(m)} (a\star_{(n)} b) +(Q_\chi)_{(m)}(a)\star_{(n)} b +(-1)^{i}a\star_{(n)} (Q_\chi)_{(m)}(b),\enspace \hbox{if $i,j \geq 2$,} \\  \\
           \delta_\psi(a\star_{(k)}b) +a\star_{(k)}\delta_\psi (b)= \\ \\ \qquad \qquad \sum\limits_{\substack{n+m=k-1 \\ n,m\geq 0}}- (Q_\chi)_{(m)} (a\star_{(n)} b) +(Q_\chi)_{(m)}(a)\star_{(n)} b -a\star_{(n)} (Q_\chi)_{(m)}(b) ,\enspace\hbox{if $i =1, j\geq 2$,} \\ \\
           \delta_\psi (a\star_{(k)}b)  =\sum\limits_{\substack{n+m=k-1 \\ n,m\geq 0}}-(Q_\chi)_{(m)} (a\star_{(n)} b) +(Q_\chi)_{(m)}(a)\star_{(n)} b -a\star_{(n)} (Q_\chi)_{(m)}(b) ,\enspace\hbox{if $i,j =1$,} \\
       \end{cases}
   \end{equation}
   where $a\in \FM_{(i)},\,\, b\in \FM_{(j)}$, $i,j \geq 1$.

 We discuss an example of such a multiplication below.

\begin{example}
    
Let $\cI=\langle x^2, yz, xz, xy\rangle$ be an ideal of $\cO=\mathbb K[x,y,z]$, and let $\mathfrak A$ be a Lie-Rinehart algebra over $\cO$ generated by a single derivation $x\frac{\partial}{\partial y}$ that preserves $\cI$. \\

\noindent
 \textbf{The positive graded  part}:  The Lie-Rinehart algebra $\mathfrak A\subset \mathrm{Der}(\cO)$ is a Lie algebroid. The corresponding positively graded variety  $ (\hat \cA^+, \hat Q^+)$ takes a particularly simple form, namely $\hat \cA^+=S(\cV_1)$ is generated by a single  element $\xi\in \cV_1$ of degree $+1$, and $\hat Q^+ = \xi x\frac{\partial}{\partial y}$.\\
 
\noindent
 \textbf{The negative graded  part}: We use the arborescent Koszul-Tate resolution, obtained from a free resolution $(\mathfrak M, d)$ of $\cO/\cI$:
$$
\begin{tikzcd}
  0 \arrow[r,] & \mathfrak M_{-3} \arrow[r, "d"] & \mathfrak M_{-2} \arrow[r, "d"] & \FM_{-1} \arrow[r, "d"] & \cO \arrow[r] & 0.
\end{tikzcd}
$$
The ranks of $\FM_i$ are $4,4,1$ in degrees $-1,-2,-3$ respectively. The differential $d$ is defined on the basis of $\FM$ as follows:
\begin{align*}
    d(e_1) = x^2,\quad d(e_2) = yz,&\quad d(e_3) = xz,\quad d(e_4) = xy, \\
    d(e_{13})  =xe_3 - ze_1,\quad d(e_{14}) = xe_4 - ye_1,& \quad d(e_{24}) = ze_4 - xe_2,\quad d(e_{34}) = ze_4 -ye_3, \\
    d(e_{134}) = xe_{34}& - ze_{14}+ ye_{13.}
\end{align*}
Here, $(e_i)_{i=1}^4$ are the basis elements of $\FM_{-1}$; $e_{13}, e_{14}, e_{24}, e_{34}$ are basis elements of $\FM_{-2}$, and $e_{134}$ is the basis element of $\mathfrak M_{-3}$. The resolution $(S(\cTr[\FM]), \delta_{\psi})$ is determined by the hook map $\psi$ that we now describe: the list of non-zero contributions of $\psi$ on the basis of $\FM$ is as follows
\begin{align*} &\psi\left( \adjustbox{valign = c} {\scalebox{0.5}{  \begin{forest}
for tree = {grow' = 90}, nice empty nodes,
            [,
      [\scalebox{2}{$e_1$}, tier =1]
      [\scalebox{2}{$e_2$}, tier =1] 
         ]
\path[fill=black] (.parent anchor) circle[radius=4pt];
\end{forest}}} \right) = -xe_{24}+ ze_{14}, \quad \psi\left( \adjustbox{valign = c} {\scalebox{0.5}{  \begin{forest}
for tree = {grow' = 90}, nice empty nodes,
            [,
      [\scalebox{2}{$e_1$}, tier =1]
      [\scalebox{2}{$e_3$}, tier =1] 
         ]
\path[fill=black] (.parent anchor) circle[radius=4pt];
\end{forest}}} \right) = xe_{13},  \quad \psi\left( \adjustbox{valign = c} {\scalebox{0.5}{  \begin{forest}
for tree = {grow' = 90}, nice empty nodes,
            [,
      [\scalebox{2}{$e_1$}, tier =1]
      [\scalebox{2}{$e_4$}, tier =1] 
         ]
\path[fill=black] (.parent anchor) circle[radius=4pt];
\end{forest}}} \right) = xe_{14},
 \\  \\ &\psi\left( \adjustbox{valign = c} {\scalebox{0.5}{  \begin{forest}
for tree = {grow' = 90}, nice empty nodes,
            [,
      [\scalebox{2}{$e_2$}, tier =1]
      [\scalebox{2}{$e_3$}, tier =1] 
         ]
\path[fill=black] (.parent anchor) circle[radius=4pt];
\end{forest}}} \right) = +ze_{24}- ze_{34}, \quad \psi\left( \adjustbox{valign = c} {\scalebox{0.5}{  \begin{forest}
for tree = {grow' = 90}, nice empty nodes,
            [,
      [\scalebox{2}{$e_2$}, tier =1]
      [\scalebox{2}{$e_4$}, tier =1] 
         ]
\path[fill=black] (.parent anchor) circle[radius=4pt];
\end{forest}}} \right) = ye_{24},  \quad \psi\left( \adjustbox{valign = c} {\scalebox{0.5}{  \begin{forest}
for tree = {grow' = 90}, nice empty nodes,
            [,
      [\scalebox{2}{$e_3$}, tier =1]
      [\scalebox{2}{$e_4$}, tier =1] 
         ]
\path[fill=black] (.parent anchor) circle[radius=4pt];
\end{forest}}} \right) = xe_{34}, \\ \\
&\psi\left( \adjustbox{valign = c} {\scalebox{0.5}{  \begin{forest}
for tree = {grow' = 90}, nice empty nodes,
            [,
      [\scalebox{2}{$e_1$}, tier =1]
      [\scalebox{2}{$e_{34}$}, tier =1] 
         ]
\path[fill=black] (.parent anchor) circle[radius=4pt];
\end{forest}}} \right) = xe_{134}, \quad 
\psi\left( \adjustbox{valign = c} {\scalebox{0.5}{  \begin{forest}
for tree = {grow' = 90}, nice empty nodes,
            [,
      [\scalebox{2}{$e_2$}, tier =1]
      [\scalebox{2}{$e_{13}$}, tier =1] 
         ]
\path[fill=black] (.parent anchor) circle[radius=4pt];
\end{forest}}} \right) = ze_{134}, \quad \psi\left( \adjustbox{valign = c} {\scalebox{0.5}{  \begin{forest}
for tree = {grow' = 90}, nice empty nodes,
            [,
      [\scalebox{2}{$e_3$}, tier =1]
      [\scalebox{2}{$e_{14}$}, tier =1] 
         ]
\path[fill=black] (.parent anchor) circle[radius=4pt];
\end{forest}}} \right) = -xe_{134}, \\ \\  & \psi\left( \adjustbox{valign = c} {\scalebox{0.5}{  \begin{forest}
for tree = {grow' = 90}, nice empty nodes,
            [,
      [\scalebox{2}{$e_4$}, tier =1]
      [\scalebox{2}{$e_{13}$}, tier =1] 
         ]
\path[fill=black] (.parent anchor) circle[radius=4pt];
\end{forest}}} \right) = xe_{134}, \quad 
\psi\left( \adjustbox{valign = c} {\scalebox{0.5}{  \begin{forest}
for tree = {grow' = 90}, nice empty nodes,
            [,
    [\scalebox{2}{$e_1$}, tier =1]
      [\scalebox{2}{$e_2$}, tier =1]
      [\scalebox{2}{$e_3$}, tier =1] 
         ]
\path[fill=black] (.parent anchor) circle[radius=4pt];
\end{forest}}} \right) = - xze_{134}, \quad \psi\left( \adjustbox{valign = c} {\scalebox{0.5}{  \begin{forest}
for tree = {grow' = 90}, nice empty nodes,
            [,
            [\scalebox{2}{$e_1$}, tier =1]
      [\scalebox{2}{$e_3$}, tier =1]
      [\scalebox{2}{$e_4$}, tier =1] 
         ]
\path[fill=black] (.parent anchor) circle[radius=4pt];
\end{forest}}} \right) = x^2e_{134}.
\end{align*}

As mentioned in Remark \ref{rem:psi.mult}, the restriction of $\psi$  to trees of the form $ \adjustbox{valign = c} {\scalebox{0.5}{  \begin{forest}
for tree = {grow' = 90}, nice empty nodes,
            [,
            [\scalebox{2}{$a$}, tier =1]
      [\scalebox{2}{$b$}, tier =1] 
         ]
\path[fill=black] (.parent anchor) circle[radius=4pt];
\end{forest}}}$ induces a multiplication $\star$ compatible with $d$. In this example for degree reasons the multiplication is associative. In addition, the associativity plus Equation \eqref{eq:psi.recursion}  imply that $\psi\left(\adjustbox{valign=c}{\scalebox{0.5}{{\begin{forest}
for tree = {grow' = 90}, nice empty nodes,
            [,
            [\scalebox{2}{$a$}, tier =1]
      [\scalebox{2}{$b$}, tier =1]
      [\scalebox{2}{$c$}, tier =1] 
         ]
\path[fill=black] (.parent anchor) circle[radius=4pt];
\end{forest}}}}\right) =a\star (b\star c) $ for all $a,b,c\in \FM$.

\noindent
\textbf{The total $\mathbb Z$-graded description}: The explicit arborescent extension $(\cA, Q_\chi)$ is characterized by the derivation $(\nabla_{\alpha = 0})_{(0)}$ and a retraction residue  $\beta$. As in previous examples, we choose $(\nabla_{\alpha = 0}) = 0$ to vanish on the basis elements of $\FM$. The list of all non-zero contributions of $Q_\chi$ on the basis of $\FM$ is as follows:
$$
(Q_\chi)_{(0)}(e_2) = \xi  e_3, \quad (Q_\chi)_{(0)}(e_4) = \xi e_1,\quad (Q_\chi)_{(0)}(e_{24}) = (Q_\chi)_{(0)}(e_{34}) = - \xi e_{13},
$$
$$
(Q_\chi)_{(0)}\left(\adjustbox{valign=c}{\scalebox{0.5}{{\begin{forest}
for tree = {grow' = 90}, nice empty nodes,
            [,
           [\scalebox{2}{$e_2$}, tier =1]
      [\scalebox{2}{$e_4$}, tier =1] 
         ]
\path[fill=black] (.parent anchor) circle[radius=4pt];
\end{forest}}}}\right) = \adjustbox{valign=c}{\scalebox{0.5}{{\begin{forest}
for tree = {grow' = 90}, nice empty nodes,
            [,
           [\scalebox{2}{$\xi e_3$}, tier =1]
      [\scalebox{2}{$e_4$}, tier =1] 
         ]
\path[fill=black] (.parent anchor) circle[radius=4pt];
\end{forest}}}} + \adjustbox{valign=c}{\scalebox{0.5}{{\begin{forest}
for tree = {grow' = 90}, nice empty nodes,
            [,
           [\scalebox{2}{$e_2$}, tier =1]
      [\scalebox{2}{$\xi e_1$}, tier =1] 
         ]
\path[fill=black] (.parent anchor) circle[radius=4pt];
\end{forest}}}}  + \xi e_{134}.
$$
where the last summand comes from negative degree zero component $\chi_{(0)}$ of  the hook map $\chi$
$$
\chi_{(0)} \left(\adjustbox{valign=c}{\scalebox{0.5}{{\begin{forest}
for tree = {grow' = 90}, nice empty nodes,
            [,
           [\scalebox{2}{$e_2$}, tier =1]
      [\scalebox{2}{$e_4$}, tier =1] 
         ]
\path[fill=black] (.parent anchor) circle[radius=4pt];
\end{forest}}}}\right) = -\xi e_{134} 
$$
In particular, $\chi_{(0)}$ induces a product on $\FM$, which measures the failure of Leibniz rule between the product $\star$ on $\FM$ and the derivation $\xi x\frac{\partial}{\partial y}$:
$$e_2\star_{(1)} e_4 \coloneqq  \chi_{(0)} \left(\adjustbox{valign=c}{\scalebox{0.5}{{\begin{forest}
for tree = {grow' = 90}, nice empty nodes,
            [,
           [\scalebox{2}{$e_2$}, tier =1]
      [\scalebox{2}{$e_4$}, tier =1] 
         ]
\path[fill=black] (.parent anchor) circle[radius=4pt];
\end{forest}}}}\right), \quad d e_2\star_{(1)} e_4 =  \left( -(Q_\chi)_{(0)}(e_2\star e_4) + (Q_\chi)_{(0)}e_2\star e_4 - e_2\star (Q_\chi)_{(0)}e_4\right).
$$
\end{example}

\section{Lifting derivations}
\label{Lifting.derivations}
We end with some technical assertions that are used in the main results \S \ref{sec:2}. 
\noindent
When studying quotient constructions, a natural question that arises is the existence of the lifting property. For the problem under our scope, we can ask if it is possible to lift a derivation $Q^+$ defined on some $S_{\cO/\cI}(\oplus_{i\geq 1}\mathcal W_i)$ to a derivation $\hat Q^+$ on $S(\oplus_{i\geq 1}\mathcal W_i)$ for a collection of $\cO$-modules $(\mathcal  W_i)_{i\geq 1}$. The answer is positive for certain choices of $\cO$.
    \begin{lemma}
    \label{lem:kahler}
        Let $\eta\colon \cO \twoheadrightarrow \mathcal O'$ be a surjective homomoprhism of $\mathbb K$-algebras. Let $\mathcal S$ be an $\cO$-module, $\mathcal T$ be an $\mathcal O'$-module such that $\varphi\colon \mathcal S\twoheadrightarrow \mathcal T$ as $\cO$-modules. If the K\"ahler module $\Omega_{\cO/\mathbb K}$ is projective\footnote{The condition $\Omega_{\cO/\mathbb K}$ is projective holds, for example, for a polynomial ring in a finite number of variables, where $\Omega_{ \cO/\mathbb K}$ is free. Also, when $\mathcal O$ is algebra of functions of a manifold.}, then any derivation $q'\in \mathrm{Der}_{\mathbb K}(\mathcal O', \mathcal T)$ can be lifted to a derivation $q\in \mathrm{Der}_{\mathbb K}(\cO, \mathcal S)$. \end{lemma}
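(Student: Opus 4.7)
The plan is to translate the lifting problem for derivations into a lifting problem for module maps, and then to apply the projectivity of $\Omega_{\cO/\mathbb K}$ together with the surjectivity of $\varphi$. Let $d\colon \cO \to \Omega_{\cO/\mathbb K}$ denote the universal $\mathbb K$-derivation. Recall the universal property: for any $\cO$-module $M$, the map
\[
\mathrm{Hom}_{\cO}(\Omega_{\cO/\mathbb K},M) \longrightarrow \mathrm{Der}_{\mathbb K}(\cO,M), \qquad \tilde q \longmapsto \tilde q \circ d,
\]
is a bijection. I would use this correspondence in both directions.

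First, view $\mathcal T$ as an $\cO$-module via $\eta$; by hypothesis, the surjection $\varphi\colon\mathcal S \twoheadrightarrow \mathcal T$ is $\cO$-linear, so this $\cO$-module structure on $\mathcal T$ is the one for which $\varphi$ is a map of $\cO$-modules. The composition $q'\circ \eta\colon \cO \to \mathcal T$ is a $\mathbb K$-derivation (one checks the Leibniz rule directly using that $\eta$ is an algebra morphism and $q'$ is a derivation over $\cO'$). By the universal property, it corresponds to a unique $\cO$-linear map $\overline{q}\colon \Omega_{\cO/\mathbb K} \to \mathcal T$ with $\overline{q}\circ d = q'\circ \eta$.

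Now, since $\Omega_{\cO/\mathbb K}$ is projective over $\cO$ and $\varphi\colon \mathcal S \twoheadrightarrow \mathcal T$ is a surjection of $\cO$-modules, the lifting property of projective modules gives an $\cO$-linear map $\tilde q\colon \Omega_{\cO/\mathbb K} \to \mathcal S$ making the triangle
\[
\begin{tikzcd}[row sep=large, column sep=large]
 & \mathcal S \arrow[d,twoheadrightarrow,"\varphi"] \\
 \Omega_{\cO/\mathbb K} \arrow[ur,dashed,"\tilde q"] \arrow[r,"\overline{q}"'] & \mathcal T
\end{tikzcd}
\]
commute, i.e.\ $\varphi\circ \tilde q = \overline{q}$. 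Define $q := \tilde q \circ d \in \mathrm{Der}_{\mathbb K}(\cO,\mathcal S)$, which is a derivation by the universal property of $d$. Then
\[
\varphi\circ q \;=\; \varphi\circ \tilde q \circ d \;=\; \overline{q}\circ d \;=\; q'\circ \eta,
\]
which is exactly the required lifting condition. The only real input beyond formalities is the projectivity assumption on $\Omega_{\cO/\mathbb K}$, which is used precisely at the lifting step; the potential subtlety to double-check is the compatibility of the two $\cO$-module structures on $\mathcal T$ (the one coming from $\eta$ and the one for which $\varphi$ is $\cO$-linear), but this is built into the hypotheses.
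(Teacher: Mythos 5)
Your proof is correct and takes essentially the same route as the paper's: both convert the problem into lifting an $\mathcal O$-linear map out of $\Omega_{\mathcal O/\mathbb K}$ and then use projectivity of $\Omega_{\mathcal O/\mathbb K}$ against the surjection $\varphi\colon \mathcal S\twoheadrightarrow \mathcal T$, finally setting $q=\tilde q\circ d$. The only cosmetic difference is that you apply the universal property of $\Omega_{\mathcal O/\mathbb K}$ directly to the pulled-back derivation $q'\circ\eta$, whereas the paper first represents $q'$ by a map out of $\Omega_{\mathcal O'/\mathbb K}$ and composes with the induced surjection $\Omega_{\mathcal O/\mathbb K}\twoheadrightarrow\Omega_{\mathcal O'/\mathbb K}$; your variant harmlessly sidesteps that intermediate step.
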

        \begin{proof}
         By the universal property of $\Omega_{\mathcal O'/\mathbb K}$, $q$ is determined uniquely by the composition $h'\circ d'$ for some $h'\in \mathrm{Hom}_{\mathcal O'}(\Omega_{\mathcal O'/\mathbb K}, \mathcal T)$. Moreover, $\Omega_{\cO/\mathbb K}$ projects onto $\Omega_{\mathcal O'/\mathbb K}$ \cite{Hartshorne}. Therefore, there exists $h\in \mathrm{Hom}_{\mathcal O'}(\Omega_{\mathcal O/\mathbb K}, \mathcal T)$ such that the following diagram commutes: 
\begin{equation}
\label{eq:diag.kahler}
\begin{tikzcd}[row sep=large, column sep=large]
\cO\arrow[two heads,d ] \arrow[r, "D"] & \Omega_{\cO/\mathbb K} \arrow[dashed, r, "\exists h"] \arrow[two heads, d] &  \mathcal S \arrow[two heads, d]\\
 \mathcal O' \arrow[r, "D'"] & \Omega_{\mathcal O'/\mathbb K} \arrow[r, "h'"] & \mathcal T
\end{tikzcd}
\end{equation}
         The existence of $h$ follows from $\Omega_{\cO/\mathbb K}$ being projective. Thus,  $q$ is defined as the composition $h\circ d$.
         \end{proof}
         
        \begin{remark}
        \label{rem:smooth.kahler}
         Lemma \ref{lem:kahler} admits generalization to the smooth setting, i.e.,  when $\cO$ is a smooth ring. A notable example is the ring $C^{\infty}(M)$ of smooth functions of a manifold $M$. In this setting, the K\"ahler module is replaced by a smooth one. The smooth K\"ahler module is obtained by taking a quotient of the free $\cO$ module $\oplus_{i\in\cO}\cO_i$, parametrized by ring elements, by the relations of \emph{smooth} derivations. In this modification, the smooth K\"ahler module retains many of the properties of the algebraic one, e.g., the commutative diagram \eqref{eq:diag.kahler} holds \cite{LERMAN2024105062}. In general, the  algebraic Kahler module is strictly larger than the smooth one, and the universal derivation $D\colon \cO\to \Omega_{\cO/\mathbb K}$ of the algebraic one might not be smooth \cite{Osborn}, while the smooth K\"ahler module reproduces de Rham 1-forms \cite{joyce2019a, LERMAN2024105062}, so that it is projective and Lemma \ref{lem:kahler} holds.
        \end{remark}
    \begin{proposition}
    \label{prop:Qlift}
        Let $\cO$ be a (smooth) $\mathbb K$-algebra with a (smooth) K\"ahler module $\Omega_{\cO/\mathbb K}$ being projective, and let  $\mathcal S = S(\oplus_{i\in \mathbb Z^\times} \mathcal{W}_i)$ for some projective $\cO$-modules $\mathcal W_i$. If $Q$ is a derivation of $\mathcal S\otimes \cO/\cI$, then $Q$  lifts to a derivation $\hat{Q}$ 
        of $\mathcal S$. Moreover, if $Q$ is a differential on $\mathcal S\otimes \cO/\cI$, then $\hat Q^2(\mathcal S)\subseteq \cI\mathcal S$.
    \end{proposition}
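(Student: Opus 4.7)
The plan is to build $\hat Q$ by defining it on each class of algebra generators of $\mathcal S$ and extending by the graded Leibniz rule. The generators split into the scalars $\cO\hookrightarrow\mathcal S$ and the homogeneous pieces $\mathcal W_i\hookrightarrow\mathcal S$; the two cases require different ingredients.

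For the restriction to $\cO$, I would apply Lemma~\ref{lem:kahler} with the surjection $\eta\colon\cO\twoheadrightarrow\cO/\cI$ and the surjection $\varphi\colon\mathcal S\twoheadrightarrow\mathcal S/\cI\mathcal S=\mathcal S\otimes\cO/\cI$ of $\cO$-modules. The restriction $Q|_{\cO/\cI}$ is a $\mathbb K$-derivation valued in $\mathcal S/\cI\mathcal S$, so projectivity of $\Omega_{\cO/\mathbb K}$ yields a $\mathbb K$-derivation $\hat Q_{\cO}\colon\cO\to\mathcal S$ lifting it.

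For each $\mathcal W_i$, a $\mathbb K$-linear map $\hat Q_i\colon\mathcal W_i\to\mathcal S$ extends to a derivation compatible with $\hat Q_{\cO}$ precisely when it satisfies the Koszul-type identity $\hat Q_i(fw)=\hat Q_{\cO}(f)\,w+f\,\hat Q_i(w)$; additionally I want $\varphi\circ\hat Q_i$ to reproduce the restriction of $Q$ to $\mathcal W_i/\cI\mathcal W_i$. The set of maps satisfying only the Koszul identity forms a torsor over $\mathrm{Hom}_{\cO}(\mathcal W_i,\mathcal S)$. To exhibit a baseline element, I would embed $\mathcal W_i$ as an $\cO$-direct summand of a free module $\cO^{(I)}$ (using projectivity), set the baseline to vanish on the chosen free basis, extend by the Koszul identity, and restrict to $\mathcal W_i$; call this baseline $L_0$. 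The difference between $\varphi\circ L_0$ and the prescribed restriction of $Q$ is then $\cO$-linear, and projectivity of $\mathcal W_i$ lets one lift it through $\varphi$ to an $\cO$-linear correction whose addition to $L_0$ delivers the desired $\hat Q_i$.

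Once $\hat Q_{\cO}$ and $(\hat Q_i)$ are in place, extend by Leibniz to a derivation $\hat Q$ of $\mathcal S$. The maps $\varphi\circ\hat Q$ and $Q\circ\varphi$ from $\mathcal S$ to $\mathcal S/\cI\mathcal S$ are both $\mathbb K$-linear and Leibniz-compatible (with $\mathcal S$ acting on $\mathcal S/\cI\mathcal S$ via $\varphi$), and they agree on the generating set by construction, hence they agree on all of $\mathcal S$. For the second assertion: if $Q^2=0$, then $\varphi\circ\hat Q^2=Q^2\circ\varphi=0$, so $\hat Q^2(\mathcal S)\subseteq\ker\varphi=\cI\mathcal S$. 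I expect the main obstacle to be the construction of the baseline $L_0$ with the correct Koszul behavior on each $\mathcal W_i$, as this is precisely where projectivity of the $\mathcal W_i$ becomes essential, independently of the Kähler-module hypothesis used for the $\cO$-component.
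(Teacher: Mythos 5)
Your proposal is correct and follows essentially the same route as the paper: Lemma~\ref{lem:kahler} handles the restriction to $\cO$, a baseline derivation on each $\mathcal W_i$ is built via a free module containing it as a direct summand, and the discrepancy with $Q$ on the quotient is an $\cO$-linear map lifted through the projection using projectivity of $\mathcal W_i$, with the final claim $\hat Q^2(\mathcal S)\subseteq\cI\mathcal S$ obtained exactly as in the paper from $\varphi\circ\hat Q=Q\circ\varphi$. The only point to make explicit is that your baseline on $\mathcal W_i$ must also be composed with the retraction $F_i\twoheadrightarrow\mathcal W_i$ on the codomain side so that its values land back in $\mathcal S$, which is precisely what the paper does with its inclusion/projection pair.
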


\begin{proof}
    By Lemma \ref{lem:kahler} one can lift the derivation $Q|_{\cO/\cI}\colon \mathcal{O}/\mathcal I\to \mathcal S_1\otimes \mathcal{O}/\mathcal{I}$ to a derivation $\hat Q|_{\cO}\colon \mathcal{O}\to \mathcal{S}_1$. Furthermore, $\hat Q|_{\cO}$ can be lifted to a derivation $ \nabla$ on $\mathcal S$. This is done as follows:
    \begin{itemize}
        \item [$\bullet$] By definition, each $\mathcal W_i$ is a direct summand of a free $\cO$-module $F_i$. Let $\lbrace f_{\alpha}, \alpha\in J\rbrace$ be a basis of $F_{i}$ parametrized by some index set $J$. We can define a derivation $\nabla^{F_i}$ by choosing images $\nabla^{F_i}(e_{\alpha})\in \mathcal S_{i+1}$. The extension to $F_i$ is given by 
        $$
\nabla^{F_i}(\sum_{\alpha\in J}\lambda_{\alpha}f_\alpha) = \sum_{\alpha\in J}\hat Q|_{\cO}(\lambda_{\alpha}) f_{\alpha} + \sum_{\alpha\in J}\lambda_{\alpha}  \nabla^{F_i}(f_{\alpha}) 
        $$
  for some coefficients $\lambda_{\alpha} \in \cO$.     
        \item [$\bullet$] By means of inclusion $ \mathcal W_i \hookrightarrow F_i$ and projection $F_{i} \twoheadrightarrow \mathcal W_i$ we define a derivation $ \nabla$ on each $\mathcal W_i$, and extend it to $\mathcal S$ by the graded Leibniz rule. 
    \end{itemize} This derivation $ \nabla$ descends to  $\nabla'\colon \mathcal S\otimes \cO/\cI \longrightarrow\mathcal S \otimes \cO/\cI$. Then the difference $Q-\nabla'$ is manifestly $\cO/\cI$-linear, and it is also $\cO$-linear, taking into account the induced $\cO$-module structure. Since each $\mathcal W_i$ is projective, we can lift $Q-\nabla'$ to an $\cO$-module map as below:
        $$
        \begin{tikzcd}[row sep=large, column sep=large]
\mathcal W_i \arrow[two heads,d, "\mathrm{Pr}" ] \arrow[dashed, r, "\exists A"] & \mathcal S_{i+1} \arrow[two heads, d, "\mathrm{Pr}"]\\
 \mathcal W_i \otimes \cO/\cI \arrow[r, "Q - \nabla'"] & \mathcal S_{i+1} \otimes \cO/\cI
\end{tikzcd}
$$
The morphism $A$ is extended to $\mathcal S$ as a derivation, so that the diagram
$$
        \begin{tikzcd}[row sep=large, column sep=large]
\mathcal S \arrow[two heads,d, "\mathrm{Pr}" ] \arrow[dashed, r, "\exists A"] & \mathcal S \arrow[two heads, d, "\mathrm{Pr}"]\\
 \mathcal S \otimes \cO/\cI \arrow[r, "Q - \nabla'"] & \mathcal S \otimes \cO/\cI
\end{tikzcd}
$$
commutes. The derivation $\hat Q$ is defined as $\hat Q = \nabla +A$. By construction, it is a lift of $Q$. Moreover, if $Q$ is a differential, the image of  $\hat Q^2$ lies in the kernel of $\mathrm{Pr}$:
$$
\mathrm{Pr}\circ (\nabla + A)\circ( \nabla+A) =(\nabla' + Q-\nabla')\circ\mathrm{Pr} \circ (\hat \nabla + A) = Q^2\circ \mathrm{Pr} = 0.
$$
Therefore, $\hat Q^2(\mathcal{S}) \subseteq \cI\mathcal S$.
    \end{proof}
    
\section*{Conclusion}

The study of \(Q\)-varieties (or \(Q\)-manifolds) and their dual counterparts, Lie \(\infty\)-algebroids
\cite{Campos,Voronov,Voronov2}, as well as of ``higher groupoids'' \cite{Severa}, is commonly motivated by their
applications across various areas of mathematics and theoretical physics. These structures often arise
in contexts where, at first glance, no higher-structure framework appears to be involved, yet they
ultimately prove indispensable for addressing natural and fundamental questions. Prominent examples
include deformation quantization of Poisson manifolds \cite{Kontsevich}, recent developments in BV operator
theory (see, for instance, \cite{CamposBV}), deformations of coisotropic submanifolds \cite{CattaneoFelder},
integration problems for Lie algebroids via stacky groupoids \cite{Zhu}, and the study of complex
submanifolds and Atiyah classes \cite{Kapranov,ChenStienonXu,LSX}. Many further examples could be added to this list.

In this work, we provide explicit constructions (Theorems~\ref{thm:gen.case.computations} and~\ref{thm:main2}) and concrete examples of
\(\mathbb{Z}\)-graded \(Q\)-varieties arising from arborescent Koszul--Tate resolutions of
\(\mathcal O/\mathcal I\), for a given ideal \(\mathcal I\), together with the universal Lie
\(\infty\)-algebroid associated with a Lie--Rinehart algebra preserving \(\mathcal I\) in the sense of
\cite{LLS,CLRL}. These results extend \cite[Theorem~3.21]{KOTOV2023104908} to a purely algebraic setting and provide a
more explicit description, as well as an effective algorithm that significantly reduces the homological
computations required in the construction.

Theorems~2.4 and~2.6 admit direct applications to singular foliation theory. In particular, every
solvable singular foliation, in the sense of \cite{LLS, LLL2}, that preserves an ideal \(\mathcal I\) arises
as the image of the anchor map of a \(\mathbb{Z}\)-graded \(Q\)-manifold, thereby extending the main
results of \cite{LLS,CLRL}. As a notable consequence, to every affine variety
\(W \subset \mathbb{C}^d\) one can associate a \(\mathbb{Z}\)-graded \(Q\)-variety canonically attached
to the vanishing ideal \(\mathcal I_W\) of \(W\) and to its Lie--Rinehart algebra of vector fields
\(\mathfrak{X}(W)\).

Finally, the \(\mathbb{Z}\)-graded \(Q\)-varieties constructed here are naturally equipped with
collections of (higher) multiplications, described in Appendix~\ref{sec:h.mult}. A geometric interpretation of these
structures would be of considerable interest and will be the subject of a forthcoming paper.

\bibliographystyle{unsrt}
\bibliography{Zgraded}
\vfill
\begin{center}
    \textsc{Department of Mathematics, Jilin University, Changchun 130012, Jilin, China\\Department of Mathematics, University of Illinois Urbana-Champaign\\ 1409 W. Green Street, Urbana, IL 61801, USA.}
\end{center}
\end{document}